\documentclass[%
  reprint,
  amsmath,amssymb,
  aps,
  superscriptaddress,
]{revtex4-2}

\usepackage{graphicx}
\usepackage{dcolumn}
\usepackage{bm}

\usepackage{amsthm, mathtools, amsfonts}
\usepackage{braket}
\usepackage{newtxtext,newtxmath}
\mathtoolsset{showonlyrefs}

\usepackage[table]{xcolor}
\usepackage{tcolorbox}

\usepackage[colorlinks=true,citecolor=teal,linkcolor=teal,urlcolor=teal]{hyperref}

\usepackage{algorithm}
\usepackage{algpseudocode}

\usepackage{tikz-cd}
\usetikzlibrary{backgrounds}

\usepackage{indentfirst}
\usepackage{enumitem}
\usepackage{ulem}
\usepackage{makecell}
\usepackage{booktabs}
\usepackage{multirow}
\usepackage{subcaption}
\usepackage{ragged2e}
\usepackage{cleveref}
\usepackage{graphicx}
\usepackage{csquotes}
\usepackage{nicefrac}

\makeatletter
\long\def\@makecaption#1#2{%
 \vskip\abovecaptionskip
 {\justifying\noindent #1.\ #2\par}%
 \vskip\belowcaptionskip
}
\makeatother

\newcommand{\eqa}[1]{\begin{equation*}\begin{aligned}#1\end{aligned}\end{equation*}}

\newcommand{\lr}[1]{\left(#1\right)}
\newcommand{\Lr}[1]{\left[#1\right]}
\newcommand{\LR}[1]{\{#1\}}

\newcommand{\im}{\mathrm{im}\text{ }}
\renewcommand{\ker}{\mathrm{ker}\text{ }}
\newcommand{\rank}{\mathrm{rank}\text{ }}

\newcommand{\diag}{\mathrm{diag}}
\newcommand{\cnot}[2]{\mathit{CNOT}_{#1\to #2}}
\newcommand{\bcnot}[2]{\bar{\mathit{CNOT}}_{#1\to #2}}
\newcommand{\cz}{\mathit{CZ}}
\newcommand{\sgate}{\mathit{S}}
\newcommand{\bsgate}{\bar{\mathit{S}}}
\newcommand{\hgate}{\mathit{H}}
\newcommand{\swap}[2]{\mathit{SWAP}_{#1, #2}}
\newcommand{\bswap}[2]{\bar{\mathit{SWAP}}_{#1, #2}}
\newcommand{\aut}{\mathit{Aut}}
\newcommand{\rowsp}{\mathrm{rowspan}\ }
\newcommand{\Sp}{\mathrm{Sp}}

\newcommand{\AAA}{\mathcal{A}}
\newcommand{\BBB}{\mathcal{B}}
\newcommand{\CCC}{\mathcal{C}}

\newcommand{\LLL}{\mathcal{L}}
\newcommand{\MMM}{\mathcal{M}}
\newcommand{\NNN}{\mathcal{N}}
\newcommand{\SSS}{\mathcal{S}}
\newcommand{\PPP}{\mathcal{P}}
\newcommand{\QQQ}{\mathcal{Q}}
\newcommand{\UUU}{\mathcal{U}}

\newcommand{\BB}{\mathbb{B}}

\newcommand{\FF}{\mathbb{F}}

\newcommand{\ox}{\otimes_R}
\newcommand{\x}{\times}
\renewcommand{\mod}{\text{ }\mathrm{mod}\text{ }}

\newcommand{\tor}{\text{ or }}
\newcommand{\tand}{\text{ and }}

\newcommand{\supp}{\mathrm{supp }}

\newcommand{\LP}{\mathrm{LP}}
\newcommand{\bZ}{\bar{Z}}
\newcommand{\bX}{\bar{X}}

\newcommand{\aux}{\mathrm{aux}}

\newcommand{\cc}{\mathrm{CC}}
\newcommand{\lpab}{\LP(H_a,H_b)}
\newcommand{\ccab}{\cc(H_a,H_b)}
\newcommand{\lpAB}{\LP(H_a',H_b')}

\newcommand{\code}[1]{[\![#1]\!]}
\newcommand{\lef}{\mathrm{left}}
\newcommand{\rig}{\mathrm{right}}

\theoremstyle{definition}
\newtheorem{definition}{Definition}[section]
\newtheorem{remark}[definition]{Remark}

\newtheorem{theorem}[definition]{Theorem}
\newtheorem{lemma}[definition]{Lemma}

\newtheorem{corollary}[definition]{Corollary}

\newtheorem{example}[definition]{Example}

\newtheorem{innercustomthm}{Theorem}
\newenvironment{customthm}[1]
{\renewcommand\theinnercustomthm{#1}\innercustomthm}
{\endinnercustomthm}

\renewcommand{\emph}[1]{\textit{#1}}

\begin{document}

\newcommand\aoq[1]{{\color{blue}{#1}}}
\newcommand\remove[1]{{\color{grey}{#1}}}
\newcommand\gu[1]{{\color{orange}{#1}}}
\newcommand\liu[1]{{\color{violet}{#1}}}
\newcommand\je[1]{{\color{teal}{#1}}}
\newcommand\jr[1]{{\color{red}{#1}}}
\newcommand\qx[1]{{\color{cyan}{#1}}}

\preprint{APS/123-QED}

\title{QGPU: Parallel logic in quantum LDPC codes}
\author{Boren Gu}
\email{b.gu@fu-berlin.de}
\affiliation{Dahlem Center for Complex Quantum Systems,
Freie Universität Berlin, Berlin, 14195, Germany}

\affiliation{Quantum Software Lab, The University of Edinburgh, Edinburgh, EH8 9AB, United Kingdom}

\affiliation{Higgs Centre for Theoretical Physics, The University of Edinburgh, Edinburgh, EH9 3FD, United Kingdom}

\author{Andy Zeyi Liu}
\affiliation{Yale Quantum Institute, Yale University, New Haven, CT 06511, United States}
\affiliation{Department of Applied Physics, Yale University, New Haven, CT 06511, United States}

\author{Armanda O. Quintavalle}
\affiliation{Dahlem Center for Complex Quantum Systems,
Freie Universität Berlin, Berlin, 14195, Germany}

\author{Qian Xu}
\affiliation{Institute for Quantum Information and Matter, California Institute of Technology, Pasadena, CA 91125, United States}
\affiliation{Walter Burke Institute for Theoretical Physics, California Institute of Technology, Pasadena, CA 91125, United States}

\author{Jens Eisert}
\affiliation{Dahlem Center for Complex Quantum Systems,
Freie Universität Berlin, Berlin, 14195, Germany}

\author{Joschka Roffe}
\email{joschka@roffe.eu}
\affiliation{Quantum Software Lab, The University of Edinburgh, Edinburgh, EH8 9AB, United Kingdom}

\date{\today}
\begin{abstract}
Quantum error correction is critical in the design and manufacture of scalable quantum computing systems. In recent years, there has been growing interest in quantum low-density parity-check codes as a resource-efficient alternative to the traditional surface code approach. However, their widespread adoption has been limited by the difficulty of compiling fault-tolerant logical operations. A key challenge is that logical qubits in quantum low-density parity-check codes do not necessarily correspond to distinct groups of physical qubits, which limits the number of logical operations that can be performed in parallel compared with the surface code. In this work, we introduce clustered-cyclic codes, a family of quantum low-density parity-check codes with finite-size instances such as $\code{136,8,  14}$ and $\code{198,18,10}$ that are competitive with current state-of-the-art constructions. These codes are designed to support a directly addressable logical basis and therefore enable highly parallel logical measurement layers. To exploit this structure, we introduce parallel product surgery for quantum product codes. The protocol uses an additional copy of the data code patch as an auxiliary patch together with an engineered product connection structure to implement many logical Pauli-product measurements within a single surgery round at small and fixed overhead, enabling surface-code-style maximal parallelism for clustered-cyclic codes: up to $k/2$ disjoint Pauli-product measurements can be scheduled in a single round under explicit algebraic conditions. To establish fault tolerance, we prove that parallel product surgery preserves the code distance when applied to hypergraph product codes and numerically verify that it preserves the distance for all listed clustered-cyclic code instances with logical dimension $k = 8$. Finally, we give an explicit example using the $\code{24,8,3}$ clustered-cyclic code: by treating half of the logical qubits as auxiliaries, parallel product surgery enables arbitrary logical $\mathit{CNOT}$ gates on disjoint pairs of qubits to be executed in parallel, and together with symmetry-derived operations, we show that these gates generate the full Clifford group fault-tolerantly.
\end{abstract}

\maketitle
\section{Introduction}
Quantum error correction has emerged as one 
of the fastest-growing areas in quantum computing—and 
for compelling reasons. There is now overwhelming evidence that scalable, reliable quantum computation will only become a reality if fragile quantum information is protected through quantum error correction and fault tolerance \cite{Kitaev2003,GottesmanNewReview, Roffe_2019, Roads, RevModPhys.87.307,Gottesman,NielsenChuang,MindTheGaps}. Without 
these tools, large-scale quantum devices would remain fundamentally unstable. A landmark moment in the field was the introduction of the surface code in 1997: a planar topological code combining conceptual elegance with an unusually high error threshold \cite{Kitaev2003,TopologicalQuantumMemory,Roads}. 
For many years thereafter, it appeared—at least from a theoretical perspective—that the foundational questions of quantum error correction had largely been resolved. In the memorable words of Daniel Gottesman, what followed were the `middle ages' of quantum error correction.

Yet, much like the historical Middle Ages, this period was far from being stagnant. Beneath the surface, substantial innovations were taking shape. In particular, the systematic study of \emph{quantum low-density parity-check} (qLDPC) codes proved to be extraordinarily fruitful \cite{TillichZemor,Panteleev,LDPCReview,LeverrierLDPC}. It became clear that genuinely `good' quantum codes could exist—codes whose distance and encoding rate both scale linearly with the number of physical qubits~\cite{Panteleev_almost_linear,Panteleev_good_and_local_testable,LeverrierLDPC}. This development fundamentally reshaped expectations about what quantum codes could achieve. The true renaissance of quantum error correction, however, was catalysed by rapid experimental progress in quantum hardware, which in turn sparked a wave of new theoretical advances~\cite{BB_IBM,Qian_reconfigurable,Directional_code}.
Nevertheless, strong code parameters alone are not sufficient and major bottlenecks remain. High thresholds and favourable scaling as such do not automatically translate into practical quantum computing. We require not only robust quantum memories, but codes that enable efficient logical quantum computation—ultimately, full universal fault-tolerant quantum computing~\cite{Eastin_Knill_restrictions_transversal,HorsmanSurgery,Game_of_surface_code,PhysRevA.71.022316,magic_cultivation,lukin_magic,dimensionjump}.

A common approach for the compilation of fault-tolerant logical operations is organised around the \emph{Pauli-product measurements} (PPMs)~\cite{Gottesman1997} framework for quantum computation: one performs rounds of (possibly multi-qubit) PPMs on encoded data, processes the classical outcomes, and updates a Pauli frame rather than physically correcting every byproduct. The computational workload is then naturally expressed as a sequence of measurement layers,
each of which consists of a set of mutually commuting Pauli observables that can be measured simultaneously using code surgery methods \cite{HorsmanSurgery,CKBB,Gauging_logical,Xanadu_Homological_measurements}. Clifford processing can be driven entirely within this measurement-and-tracking paradigm~\cite{Game_of_surface_code,Qian_fast_and_para,CKBB}, while non-Clifford resources enter through designated primitives such as state injection, cultivation or code switching~\cite{PhysRevA.71.022316,magic_cultivation,dimensionjump,color_code_magic_switch}.
This immediately elevates a single question to primary importance: how many logical PPMs can be implemented \emph{in parallel per round} with controlled overhead?

For the surface code \cite{Kitaev2003,TopologicalQuantumMemory,HorsmanSurgery,Game_of_surface_code},
these ideas are conceptually very clean. Each logical qubit lives in its own patch, so it is spatially individually addressable. That 
provides a strong notion of parallelism: all single-qubit Cliffords can be applied locally and simultaneously across patches, and lattice surgery operations can be scheduled in parallel subject only to the interaction constraints between patches~\cite{Game_of_surface_code}. Concretely, lattice surgery realises joint logical Pauli measurements via geometric \emph{merge} and \emph{split} operations between patches~\cite{HorsmanSurgery}. As each merge consumes two patches, a single surgery round has a simple combinatorial ceiling: if $k$ patches are available and each joint measurement pairs two patches, then at most $\lfloor k/2\rfloor$ disjoint merges can be performed simultaneously (assuming access to sufficient routing space). The architecture itself makes parallel logical control feel almost automatic.

Once we move to qLDPC codes, we gain encoding density, but we lose that clarity. Logical operators are not necessarily attached to specific sub-groups of qubits and are often heavily overlapping, so the idea of a `logical qubit' as a distinct object becomes much less concrete~\cite{Armanda_partitioning,bb_logical_basis}. It is no longer obvious how to apply gates independently, and parallelism becomes harder to define and optimise. Here, two recurring bottlenecks arise. First, \emph{logical operators} in many promising qLDPC codes~\cite{Panteleev_almost_linear} tend to lack a clean structure. Without a well-organised logical operator basis, fault-tolerant logical measurements do not follow a simple, uniform construction, and typically require designs tailored to the specific code and measurement being implemented~\cite{Improved_surgery,ying_li_surgery,clement_css_surgery,Cowtan2024,webster2026pinnacle}. Second, logical operations \emph{fail to parallelise} 
automatically as for surface code computation. 
This limits computational throughput and increases effective overhead.

\subsection{Overview of results}
In this work, we develop a code--logic co-design for fault-tolerant computation with qLDPC codes, guided by a single architectural goal: \emph{high-throughput logical computation via parallel PPMs at small and fixed overhead.} 
We address this by introducing a subfamily of \emph{lifted product} (LP) codes~\cite{Panteleev_almost_linear}, the \emph{clustered-cyclic} (CC) codes (Sec.~\ref{Section: CC qLDPC code}), whose logical operator basis is engineered to behave like a surface code patch model  (Fig.~\ref{Figure: key_results} (a)). 

Building on this clustered basis, we introduce \emph{parallel product surgery}, a generalised surgery technique~\cite{Xanadu_Homological_measurements} developed alongside this code family and applicable to all quantum product codes.
The protocol uses an additional copy of the \emph{data code patch} as an \emph{auxiliary code patch}, together with an engineered \emph{connection} structure, to implement many logical PPMs within a single surgery round at small and fixed overhead. 
A central outcome is that, for CC codes, the maximum per-round measurement parallelism matches that of surface-code lattice surgery: given $k$ logical qubits, up to $k/2$ joint logical Pauli measurements can be scheduled in a single round.
Hence, parallel product surgery on CC codes addresses the parallelism bottleneck by imposing a structured logical operator basis, thereby enabling highly parallel logical measurements.

In Tables~\ref{Table: space_time_overhead} and~\ref{Table: space_time_overhead_gross_136} and Figure~\ref{Figure: violin_sub}, we report the most favourable cases of space-time overhead reduction of our protocol for suitably chosen PPMs on CC codes with respect to the state-of-the-art logical PPM gadgets~\cite{Gauging_logical,Extractor}.

Our CC code family, combined with parallelised product surgery, embeds parallel logical control directly at the code level. In contrast to surface code computation, where parallelism resembles a multi core CPU with separate local processing units, our construction realises a single encoding block with intrinsic logical parallelism, more akin to a quantum GPU. Logical operations are not distributed across distinct patches but are natively supported within the global structure of the code. Building on the existing corpus of work on logical computation in qLDPC codes \cite{GuoqLDPC,swaroop2025universaladaptersquantumldpc,AutQEC,malcolm2025computingefficientlyqldpccodes,Extractor,Gauging_logical,PlanarQLDPC,qian_batched,GolowichChangZhu2025,ConstantOverheadQLDPC}, we believe this work represents an important step towards the development of universal instruction sets that will ultimately underpin utility-scale qLDPC-based quantum computers.

\subsection{Structure of this work}
This manuscript is organised as follows: In Sec.~\ref{Section: key_results} we summarise the key insights and main results. In Sec.~\ref{Section: preliminaries} we review the algebraic and coding-theoretic background underlying our construction, including LP codes and their representation as chain complexes. Building on this, Sec.~\ref{Section: CC qLDPC code} introduces CC codes and proves the existence of a clustered logical basis with the required patch-like structure. In Sec.~\ref{Section: parallelisable surgery} we present parallel product surgery for quantum product codes and specialise it to CC codes, deriving explicit parallel scheduling rules and conditions under which surface-code-level maximal parallelism is achieved. We also investigate the resource overhead of the proposed measurement procedures. Sec.~\ref{Section: FT} analyses fault tolerance, including distance-preservation properties and the LDPC nature of parallel product surgery. In Sec.~\ref{Section: clifford} we provide finite-size case studies and compilation examples for full Clifford groups with automorphism and fold-transversal logical operations, illustrating how the CC structure and surgery primitives combine into practical fault-tolerant logical gadgets. Finally, we discuss the physical implementation of our proposed schemes in Sec.~\ref{Section: physical implementation} and we discuss future directions to explore and give a summary of our work in Sec.~\ref{Section: conclusion}.

\begin{figure*}[t]
  \centering
  \includegraphics[width=\textwidth]{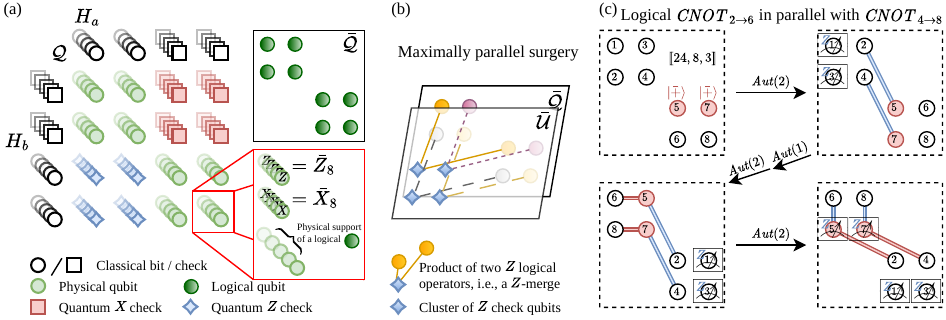}
    \caption{\textbf{Illustration of the clustered logical operator basis of CC codes and maximally parallel surgery.}
    \textbf{(a)} Each CC code $\QQQ$ is constructed as the tensor product of two classical seed codes $H_a$ and $H_b$. As an LP code defined over $R=\FF_2[x]/(x^p+1)$, passing to the binary representation lifts each ring element to a \emph{cluster} of $p$ physical qubits. Qubits and quantum checks can therefore be visualised as a square array of clusters inheriting the structure of the classical seed codes. Physical qubits partition into two sectors corresponding to the direct-sum structure in Equation~\eqref{eq: LP chain complexes over R}. Every CC code admits a clustered logical operator basis in which each logical operator is supported on an entire cluster. If two basis elements anti-commute, their supports coincide on that whole cluster of physical qubits. Each logical qubit can thus be addressed directly via its associated physical cluster.
    \textbf{(b)} Maximally parallel surgery on CC codes performing four joint $Z$ logical measurements across eight logical qubits. A second copy of the data patch is prepared as an auxiliary patch. Connections between $Z$-type checks of the auxiliary patch and data qubits in the original patch are determined by the product connection code $\PPP$ with parity check matrices $(H_X',H_Z')$. 
    Maximal parallelism is obtained when the parity check matrix of $\PPP$ has full rank. The four 
    measured logical pairs are distinguished by colours and connection lines.
    \textbf{(c)} Two logical $\mathit{CNOT}$ gates executed in parallel with fixed space overhead for the $\code{24,8,3}$ CC code. Logical $\cnot{2}{6}$ and $\cnot{4}{8}$ are implemented using both $X$- and $Z$-basis logical PPMs as described in Figure~\ref{Figure: CNOT_surgery_framework}, treating logical qubits $1,3,5,7$ as auxiliaries. For example, a joint $ZZ$ measurement between $2,5$ followed by a joint $XX$ measurement between $5,6$, together with state initialisation and logical $Z$ measurement on $5$ implements the logical $\cnot{2}{6}$. Together with automorphism-induced logical $\mathit{SWAP}$ gates available in the $\code{24,8,3}$ CC code, every stage of the protocol can be parallelised, including state initialisation, both types of joint logical measurements, and measurements of auxiliary qubits. Logical qubits are depicted as numbered circles, with red circles initialised in the $\ket{\bar{+}}$ state vector. Blue and red double lines represent $Z$-type and $X$-type merges, respectively. The entire procedure requires only $48$ additional physical qubits, consisting of $24$ auxiliary data qubits and $24$ auxiliary check qubits. A concrete construction for implementing arbitrary logical $\mathit{CNOT}$ gates, and for parallelising arbitrary pairs of logical $\mathit{CNOT}$ gates on four logical qubits of the $\code{24,8,3}$ CC code, appears in Appendix~\ref{Appendix: CNOTs_24_8_3}. Compilation of the full Clifford group on four logical qubits is illustrated in Section~\ref{Section: clifford}.}
  \label{Figure: key_results}

\end{figure*}

\section{Key results}
\label{Section: key_results}

Before proceeding, we briefly survey several recent approaches to parallel logical computation. These include schemes based on concatenated codes~\cite{YamasakiKoashi2024,YoshidaTamiyaYamasaki2025,Goto2024}, colour codes defined on hyperbolic manifolds~\cite{ZhuSikanderPortnoyCrossBrown2024}, and, in particular, \emph{hypergraph product} (HGP) codes. A key advantage of HGP codes is that they admit a canonical logical operator basis, following the construction of Quintavalle \emph{et al.}~\cite{Armanda_partitioning}, which enables logical operators to be partitioned and addressed systematically. Leveraging this structure, a range of parallel protocols has been developed for HGP codes, including approaches based on code switching~\cite{GolowichChangZhu2025}, teleportation~\cite{PecorariGuerciPerrinPupillo2025}, and homomorphic logical measurements~\cite{Shilin_Huang_Homomorphic,Qian_fast_and_para}. Very recently, a constant-time-overhead surgery protocol has also been proposed for HGP codes, again exploiting their canonical logical operator basis~\cite{ChangHeYoderZhuJochymOConnor2026}.

Our work complements these studies by focusing instead on CC codes, which exhibit better finite-size code parameters than previously studied HGP codes~\cite{Qian_reconfigurable}. We also note that a related randomised parallel surgery technique was introduced for high-rate qLDPC codes~\cite{Qian_high_rate}; here we give an explicit construction of the required auxiliary system, thereby systematising that approach. Finally, Xu \emph{et al.}~\cite{qian_batched} introduced a method for executing the same logical operations in parallel across multiple patches of high-rate qLDPC codes; by contrast, we develop a parallel scheme that realises many logical operations within a single code patch—enabling a quantum GPU style of computation.

\subsection{Clustered-cyclic code construction}

In this subsection, we sketch the new family of codes
we introduce in this work. We define CC codes as a constrained family of LP codes (Def.~\ref{Definition: LP code}) over the ring
$R=\FF_2[x]/(x^p+1)$ with prime lift parameter  $p$, specified by cyclically structured seed matrices (Def.~\ref{Definition: cc code}). We write $\ccab$ for a CC code with square seed matrices of sizes $n_a$ and $n_b$, respectively. The binary CSS code $
\ccab$ has parameters \begin{equation}\code{N=2pn_an_b,k=2n_an_b,d\leq p},\end{equation}
and constant check weight $W = 2(w_a+w_b)$, where $w_a$ and $w_b$ are the row weights of $H_a$ and $H_b$ respectively, which are uniform by construction.

Table~\ref{Table: example code para} lists explicit CC codes with check weight $W=8$
identified via numerical search (App.~\ref{Appendix: numerical code dist}), including high-rate examples such as
$\code{136,8,14}$ and $\code{198,18,10}$, and compares their parameters with those of
other known code families. Fig.~\ref{Figure: threshold and comparison} presents
phenomenological logical error rate simulations comparing these codes with the
$\code{144,12,12}$ Gross BB code, together with circuit-level memory simulations for
the $k=8$ and $k=18$ CC code families.

The defining computational feature of CC codes is the existence of the clustered logical operator basis (Def.~\ref{Definition: clustered basis}). Informally, the $N$ physical qubits decompose into clusters of size $p$, and each logical operator in the basis is supported on an entire cluster, without overlapping with other logical operators of the same type. Whenever two logical operators in the basis anti-commute, their supports coincide on the full cluster (Fig.~\ref{Figure: key_results} (a)). The resulting interaction pattern closely resembles the structure of surface code architectures.

Theorem~\ref{Theorem: logical operator basis} establishes that every CC code $\ccab$ admits such a clustered logical basis. The clustered structure is the key ingredient enabling selective and parallel logical measurements. Scheduling and compilation reduce to combinatorial questions on clusters, rather than to unstructured manipulation of heavily overlapping, high-weight representatives.

\subsection{Parallel product surgery}
To implement logical PPMs with high degrees of parallelism, we introduce a surgery primitive tailored to quantum product codes. Starting from a generalised CSS surgery framework (Def.~\ref{Definition: code surgery})~\cite{Xanadu_Homological_measurements}, we choose the auxiliary
patch to be a copy of the data patch and take the connection homomorphisms to be the parity check matrix of a same-size quantum product code (Def.~\ref{Definition: product connection code}), which guarantees that the resulting merged complex defines a valid CSS surgery construction (Lem.~\ref{Lemma: surgery for product codes}).

Product surgery nevertheless has limitations. Specifically, when used on HGP and CC codes~\cite{Qian_high_rate, Qian_fast_and_para}, two same-type logical operators can be merged only if they are supported in different sectors, or if they are aligned along the same row or column within a sector (Coro.~\ref{Corollary: merge any two Z}). To handle more general measurements, we introduce a hybrid gadget based on the gauging-based logical measurement technique~\cite{Gauging_logical}, a state-of-the-art surgery method for arbitrary measurements, and augment it with parallel surgery whenever the target measurement contains sub-configurations compatible with product surgery.

For CC codes, we specialise to LP connection codes whose seed entries are
restricted to ${0,1}$ over the ring $R=\FF_2[x]/(x^p+1)$
(Def.~\ref{Definition: CC surgery}). This restriction ensures that merges are
well defined on the clustered basis: products of logical operators supported on
entire clusters can be realised as products of stabilisers introduced by the
connection code (Theo.~\ref{Theorem: Merge Z logical}). Concretely, each non-zero row of the connection code $Z$-check matrix $H'_Z$ induces a merge of $Z$-type logical operators of the data patch. Consequently, the number $M$ of logical PPMs being performed in one surgery round (Coro.\ref{Corollary: number of logicals of the merged code}) is
\begin{equation}
M = \rank H'_Z.
\end{equation}
These properties provide a direct design rule: $H'_Z$ decides the set of simultaneously measurable joint logical observables, and $\rank H'_Z$ quantifies the number of independent merges achieved in that round.
In short, for a CC code
encoding $k=2n_an_b$ logical qubits, up to
\begin{equation}
M_{\max} = k/2 = n_an_b
\end{equation}
$Z$-type merges can be performed in a single round. This bound is achieved whenever the chosen product connection code satisfies
$\rank H'_Z = n_an_b$ (Theo.~\ref{Theorem: max merge}).

As an example, for the $\code{136,8,14}$ CC code, we observe that `boosting' gauging-based logical measurements with parallel surgery reduces the auxiliary space overhead by approximately $150$ physical qubits on average, when measuring combinations of single logical operators and pairs of joint logical operators (Fig.~\ref{Figure: violin_sub}).

\subsection{Case study: $\code{24,8,3}$ and complete Clifford group over four logical qubits}

To provide an explicit finite-size demonstration, we instantiate our construction on the $\code{24,8,3}$ CC code. In this example, we exploit both fold-transversal logical gates~\cite{Fold_transversal,Armanda_partitioning} and automorphism-induced logical gates~\cite{AutQEC} to realise fault-tolerant logical phase, Hadamard, and swap gates (App.~\ref{Appendix: aut_24}). Using these symmetries, we enlarge the set of merge configurations compatible with product surgery.

Treating four of the eight logical qubits as reusable auxiliaries, we obtain a four-logical-qubit data block on which we can implement arbitrary logical $\mathit{CNOT}$ connectivity in parallel via parallel surgery (App.~\ref{Appendix: CNOTs_24_8_3}, Fig.~\ref{Figure: key_results} (c)). Crucially, together with fold-transversal and automorphism-induced logical operations, these primitives generate the full Clifford group on the data block of four logical qubits (Sec.~\ref{Section: clifford toy model 24 8 3}), via a new generating set of the Clifford groups up to global phase (Theo.~\ref{Theorem: full clifford group})
  \begin{equation*}
    \CCC_m/\mathrm{U}(1) \cong \big\langle X_i,Z_i,\cnot{i}{j},\sgate_i\sgate_j^\dagger,{\hgate}^{\otimes m}\big\rangle,\ m\geq 3.
  \end{equation*}
This case study therefore, illustrates end-to-end how the CC structure turns abstract qLDPC code parameters into an explicit, schedulable logical toolbox for Clifford operations with fixed per-round auxiliary overhead.

\section{Preliminaries}
\label{Section: preliminaries}

\subsection{Ring, module and binary representation}

We now outline the mathematical foundations of our code construction.
Let $G$ be a finite abelian group of order $l$. The group algebra of $G$ over $\FF_2$,
denoted $R=\FF_2[G]$, is a finite commutative ring. Throughout this work, we only
consider such group algebras and free, finite-rank $R$-modules.
A polynomial ring in $x$ over $\FF_2$, denoted $\FF_2[x]$, consists of polynomials
$q=q_0+q_1x+\cdots+q_mx^m$ with coefficients $q_i\in\FF_2$. The quotient ring
$\FF_2[x]/(x^l+1)$ is obtained by imposing the relation $x^l=1$. As a group algebra,
$\FF_2[x]/(x^l+1)\cong\FF_2[C_l]$, where $C_l=\braket{g\mid g^l=1}$
is the cyclic group
of order $l$.

An $R$-module may be viewed as a generalisation of a vector space in which the scalars
are drawn from $R$ rather than a field. Homomorphisms between modules of the form
$R^n$ and $R^m$ are represented by matrices $M\in R^{m\times n}$. We use the standard
notion of rank over a commutative ring, defined as the largest $r$ such that $M$ contains
an $r\times r$ minor
submatrix with non-zero determinant, and the matrix is called full rank if $r=\min \LR{m,n}$.

Since $R=\FF_2[G]$ is an $l$-dimensional vector space over $\FF_2$ with basis
$\{g\}_{g\in G}$, any $R$-linear map admits a faithful binary representation~\cite[Theorem~1.3.1]{Finite_d_algebras}. We define the following binary representation of homomorphisms
\eqa{
  \BB: R^{m\x n}\to \FF_2^{(lm)\x (ln)};\quad M_{i,j}\mapsto \BB(M_{i,j})\in \FF_2^{l\x l},
}
where $\BB(M_{i,j})$ is the faithful binary representation of the ring element
$M_{i,j}$~\cite{Panteleev_almost_linear}. Accordingly, $R$-modules are lifted as $\BB(R^m)=\FF_2^{lm}$.

The group algebra $R=\FF_2[G]$ is equipped with a canonical involution
$*:R\to R$, defined as the $\FF_2$-linear extension of group inversion
$g\mapsto g^{-1}$. For $R\cong\FF_2[x]/(x^l+1)$, we identify $x\leftrightarrow g$, this involution acts as
\eqa{
  *: q_0+q_1x+\cdots+q_mx^m \mapsto
  q_0+q_1x^{l-1}+\cdots+q_mx^{l-m}.
}
For a matrix $M=(M_{i,j})_{m\times n}$, we define its conjugate transpose as
$M^*=(M_{j,i}^*)_{n\times m}$. The involution guarantees consistency between
$R$-module conjugate transposition and the binary representation, satisfying
\begin{equation}
  \BB(M^*)=\BB(M)^\intercal.
\end{equation}
For example, let $R = \FF_2[x]/(x^3+1)$, the binary representation of the following matrix and its conjugate transpose are, respectively,
\begin{eqnarray}
  M &=
  \begin{pmatrix}
    1 & 1+x\\
    x^2 & x+x^2
  \end{pmatrix},\ M^* =
  \begin{pmatrix}
    1 & x\\
    1+x^2 & x+x^2
  \end{pmatrix};\\
  \BB(M) &=
  \begin{pmatrix}
    1 & 0 & 0 & 1 & 1 & 0\\
    0 & 1 & 0 & 0 & 1 & 1\\
    0 & 0 & 1 & 1 & 0 & 1\\
    0 & 0 & 1 & 0 & 1 & 1\\
    1 & 0 & 0 & 1 & 0 & 1\\
    0 & 1 & 0 & 1 & 1 & 0
  \end{pmatrix},\
  \BB(M)^\intercal =
  \begin{pmatrix}
    1 & 0 & 0 & 0 & 1 & 0\\
    0 & 1 & 0 & 0 & 0 & 1\\
    0 & 0 & 1 & 1 & 0 & 0\\
    1 & 0 & 1 & 0 & 1 & 1\\
    1 & 1 & 0 & 1 & 0 & 1\\
    0 & 1 & 1 & 1 & 1 & 0
\end{pmatrix}.
\end{eqnarray}

\subsection{The chain complex representation of CSS codes}

Chain complexes from homological algebra form the structural backbone of our construction. They already occupy a central role in quantum error correction, and here serve as the primary framework for designing CC codes.

A length-$n$ chain complex $\AAA = (\LR{A_i},\LR{\partial_i})$ is a
sequence of $R$-modules, $A_0,\dots, A_n$ and homomorphisms $\partial_i : A_i\to A_{i-1}$,
\begin{equation*}
  \begin{tikzcd}[row sep=scriptsize, column sep = scriptsize]
    A_n \arrow[r, "\partial_n"] & A_{n-1} \arrow[r, "\partial_{n-1}"] & \cdots \arrow[r, "\partial_{2}"] & A_{1} \arrow[r, "\partial_{1}"] & A_0
  \end{tikzcd},
\end{equation*}
satisfying the boundary
condition $\partial_i\circ\partial_{i+1} = 0$. The $i$\textsuperscript{th} homology group, $H_i(\AAA)$, and corresponding $i$\textsuperscript{th} cohomology group, $H^i(\AAA)$, of this chain complex are defined as
\eqa{
  H_i(\AAA) &= \ker (\partial_i)/\im (\partial_{i+1}),\\
  H^i(\AAA) &= \ker (\partial_{i+1}^*)/\im (\partial_{i}^*).
}

Applying the binary
representation entry-wise yields a length-$n$ chain complex $\BB(\AAA)$ over $\FF_2$,
\begin{equation*}
  \begin{tikzcd}[row sep=scriptsize, column sep = scriptsize]
    \BB(A_n) \arrow[r, "\BB(\partial_n)"] & \BB(A_{n-1}) \arrow[r, "\BB(\partial_{n-1})"] & \cdots \arrow[r, "\BB(\partial_{2})"] & \BB(A_{1}) \arrow[r, "\BB(\partial_{1})"] & \BB(A_0).
  \end{tikzcd}
\end{equation*}
Since $\BB(\cdot)$ preserves matrix multiplication, the boundary condition is preserved under binary lifting. Taking the binary representation of (co-)~homology groups yields exactly the (co-)~homology groups of $\BB(\AAA)$.

The simplest example here is a classical error-correcting code as a length-$1$ chain complex,
\begin{equation*}\CCC:
  \begin{tikzcd}[row sep=scriptsize, column sep = scriptsize]
    C_{1} \arrow[r, "H"] & C_0
  \end{tikzcd},
\end{equation*}
where $C_1 = R^n$, $C_0=R^m$ and $H\in R^{m\x n}$ is the parity
check matrix over $R$. By taking binary representation, $\BB(\CCC)$ recovers the standard binary format of the code with check matrix $\BB(H) \in \FF_2^{(lm)\x(ln)}$. Binary classical codes
are the special cases when $R=\FF_2$.

An $\code{N,k,d}$ quantum stabiliser code $\QQQ$ is a $2^k$-dimensional logical subspace of the $N$-qubit physical Hilbert space \cite{Gottesman1997, Roads, Roffe_2019, GottesmanNewReview}. It is specified as the common $+1$ eigenspace of an Abelian subgroup $\SSS$ of $\PPP_N$ that does not contain $-I$. Logical operators are Pauli operators that act non-trivially on logical qubits while preserving the stabilisers. They are elements of the quotient group $\NNN(\SSS)/\SSS$, where $\NNN$ denotes the normaliser in $\PPP_N$. The distance $d$ of the code is defined as the minimum Hamming weight of a logical operator in binary representation: it can be seen as a \emph{quality measure} of the code, and higher distances are beneficial. When relevant, we distinguish $d_x$ and $d_z$ for logical operators of $X$- and $Z$-type.

In all that follows, we will consider instances of
\emph{Calderbank--Shor--Steane} (CSS) codes. They are an important family of stabiliser codes whose generators can be chosen
to be purely $X$- or $Z$-type, described by check matrices
$H_X\in R^{m_X\times n}$ and $H_Z\in R^{m_Z\times n}$, respectively.
Such codes are by far the most common type of quantum error correction code.
Such codes admit a
natural description as length-$2$ chain complexes,
\begin{equation*}\QQQ:
  \begin{tikzcd}[row sep=scriptsize, column sep = scriptsize]
    Q_2 \arrow[r, "H_Z^*"] & Q_{1} \arrow[r, "H_X"] & Q_0
  \end{tikzcd}.
\end{equation*}
The binary representation $\BB(\QQQ)$ yields the standard stabiliser formalism for
qubits, with $\BB(Q_1)$ corresponding to the $N=nl$ physical qubits, and $\BB(Q_2)$
and $\BB(Q_0)$ corresponding to $Z$- and $X$-type stabiliser checks, respectively.
Each row of $\BB(H_X)$ ($\BB(H_Z)$) specifies an $X$-type ($Z$-type) stabiliser.

Stabiliser commutativity is equivalent to the condition

\eqa{
  \BB(H_X)\BB(H_Z)^\intercal = 0
  \iff H_X H_Z^* = 0 ,
}
which coincides with the boundary condition of the chain complex and is commonly
referred to as the CSS condition.

For CSS codes, logical $X$ and $Z$ operators naturally appear as the cohomology and homology groups, respectively, of the associated chain complex,
\begin{eqnarray}
  \LLL_X\coloneqq H^1(\QQQ) &= \ker H_Z/\im H_X^*,\\
  \LLL_Z\coloneqq H_1(\QQQ) &=  \ker H_X/\im H_Z^*.
\end{eqnarray}

\subsection{Lifted product code}
The lifted product is a technique for constructing CSS codes from pairs of \emph{classical seed codes}. In this section, we derive it from the starting point of chain complexes.

Let $\AAA = (\LR{A_i},\LR{\partial_i^A})$ and $\BBB = (\LR{B_i},\LR{\partial_i^B})$ be length-$m_A$ and length-$m_B$ chain complexes over $R$, respectively. Their tensor product $\AAA \ox \BBB$ is defined as a length-$(m_A+m_B)$ chain complex over $R$ with modules
$$\LR{(\AAA \ox \BBB)_m = \oplus_{i+j=m} A_i\ox B_j}_{m=0}^{m_A+m_B}$$ and homomorphisms $$\LR{\partial^{AB}_m: (\AAA \ox \BBB)_m \to (\AAA \ox \BBB)_{m-1}}_{m=1}^{m_A+m_B},$$
where $\partial^{AB}_m$ is defined as
\eqa{
  \partial^{AB}_m (a_i\ox b_j) = \partial^A_i (a_i) \ox b_j + a_i\ox \partial_j^B (b_j),
}
for $a_i\in A_i$, $b_j\in B_j$ and $i+j =m$~\cite{Algebraic_topo}. This construction is also referred to as the total complex of the product
complex.
We then define the \emph{lifted product} (LP) codes
in this framework.
\begin{definition}[Lifted product codes]
\label{Definition: LP code}
  Given classical codes $\AAA : A_1\xrightarrow{H_a}  A_0$ and $\BBB: B_1\xrightarrow{H_b}  B_0$ over $R$, a lifted product code, denoted $\QQQ = \lpab$, is defined as the total complex of $\AAA \ox \BBB$ given by
  \begin{equation}
  \begin{aligned}
    \label{eq: LP chain complexes over R}
    &\begin{tikzcd}[row sep=scriptsize, column sep = scriptsize]
      Q_0 \\
      Q_1 \arrow[u, "H_X"]\\
      Q_2 \arrow[u, "H_Z^*"]
    \end{tikzcd}
    \begin{tikzcd}[row sep=scriptsize, column sep = scriptsize]
      & A_0\ox B_0 & \\
      A_1\ox B_0 \arrow[ur,"H_a\ox I"] & & A_0\ox B_1 \arrow[ul,swap,"I\ox H_b"]\\
      & A_1\ox B_1 \arrow[ul,"I \ox H_b"]\arrow[ur,swap,"H_a\ox I"]
    \end{tikzcd},\\
    &\text{where }Q_1 = A_1\ox B_0 \oplus A_0\ox B_1.
    \end{aligned}
  \end{equation}
  The corresponding check matrices of $\QQQ$ as a CSS code are
  \eqa{
    H_X &= \left(
      \begin{array}{c|c}
        H_a \ox I& I \ox H_b
      \end{array}
    \right),\\
    H_Z &= \left(
      \begin{array}{c|c}
        I \ox H_b^* & H_a^* \ox I
      \end{array}
  \right),}
  where the vertical bar indicates horizontal concatenation of matrices. The CSS condition follows directly,
  \eqa{
    H_X H_Z^* &= (H_a\ox I)(I\ox H_b) + (I\ox H_b)(H_a\ox I)\\
    &= H_a\ox H_b + H_a\ox H_b\\
    &= 0,
  }
  which coincides with the boundary condition of the chain complex.
\end{definition}
Owing to the tensor product structure of LP codes, the logical operators
can be characterised using the Künneth formula, a
statement relating the homology of two objects to the homology of their product, in particular when
$R=\FF_2[G]$, for a finite odd order cyclic group $G$.
\begin{lemma}[Künneth formula~\cite{Algebraic_topo,Balanced_product}]
  Let $R=\FF_2[G]$ with $G$ a finite odd order cyclic group. The (co-)homology group of the product complex $\QQQ = \AAA \ox \BBB$ is subject to the isomorphism
  \eqa{
    H^m(\AAA \ox \BBB) &\cong \bigoplus_{p+q =m } H^p(\AAA) \ox H^q(\BBB),\\
    H_m(\AAA \ox \BBB) &\cong \bigoplus_{p+q =m } H_p(\AAA) \ox H_q(\BBB).
  }
\end{lemma}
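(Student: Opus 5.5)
The plan is to prove the Künneth formula by reducing it to the field case through the semisimplicity of $R$. The key observation is that $\FF_2[G]$ is semisimple exactly when $|G|$ is odd, by Maschke's theorem, since the characteristic $2$ does not divide $|G|$. For $G=C_l$ with $l$ odd, $x^l+1$ is separable over $\FF_2$. It therefore factors into distinct irreducibles $x^l+1=\prod_{t} f_t(x)$, and the Chinese remainder theorem gives a ring isomorphism
\begin{equation}
R\cong \bigoplus_t \FF_2[x]/(f_t(x)) = \bigoplus_t \mathbb{F}_t,
\end{equation}
where each $\mathbb{F}_t$ is a finite field.

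The next step is to transport the chain complexes along this decomposition. Let $e_t$ be the primitive idempotents. Every free $R$-module splits as $A_i=\bigoplus_t e_tA_i$, with $e_tA_i$ an $\mathbb{F}_t$-vector space, and every $R$-linear boundary map respects this splitting. Hence $\AAA\cong\bigoplus_t \AAA_t$ and $\BBB\cong\bigoplus_t\BBB_t$, with $\AAA_t,\BBB_t$ complexes of $\mathbb{F}_t$-vector spaces. Taking homology commutes with these finite direct sums. For the tensor product I will check that $e_tM\otimes_R e_sN=0$ when $t\neq s$, because $e_te_s=0$, and that $e_tM\otimes_R e_tN\cong e_tM\otimes_{\mathbb{F}_t} e_tN$. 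This gives
\begin{equation}
\AAA\ox\BBB\cong\bigoplus_t \AAA_t\otimes_{\mathbb{F}_t}\BBB_t
\end{equation}
as total complexes, with the boundary formula preserved componentwise.

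On each component I will apply the classical Künneth theorem over a field, $H_m(\AAA_t\otimes\BBB_t)\cong\bigoplus_{p+q=m}H_p(\AAA_t)\otimes H_q(\BBB_t)$. Its standard proof splits each complex of vector spaces into a sum of a homology part with zero differential and contractible two-term pieces $V\xrightarrow{\sim}V$. Tensor products of contractible pieces with anything are acyclic, so only the homology parts survive. Summing over $t$ and reassembling with the idempotents then gives the homology statement over $R$. For cohomology I will run the same argument on the dual complexes with maps $\partial^*$. The involution $*$ permutes the factors $f_t$ via $f_t\mapsto$ its reciprocal polynomial, but the decomposition argument is unaffected. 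Alternatively, over a semisimple ring $H^m$ can be identified with $\mathrm{Hom}_R(H_m,R)$, again since every module is projective and injective.

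The main obstacle is the bookkeeping that the tensor product over $R$ decomposes as the direct sum of tensor products over the fields. In particular, I need that the homology modules, which need not be free over $R$, still satisfy the isomorphism, and that the cross terms vanish. Semisimplicity handles all of this: every $R$-module is projective, so every short exact sequence splits and all Tor terms vanish. Indeed, one could alternatively invoke the general Künneth short exact sequence over $R$, $0\to\bigoplus H_p\otimes H_q\to H_m\to\bigoplus \mathrm{Tor}_1^R(H_p,H_{q-1})\to0$, and note that the Tor term is zero because every $R$-module is flat. I will state this as the concise route and use the idempotent decomposition only as the concrete justification.
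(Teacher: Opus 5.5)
The paper gives no proof of its own here. The lemma is quoted from Hatcher and from Breuckmann--Eberhardt, and the justification behind that citation is your ``concise route'': for $|G|$ odd, Maschke's theorem makes $\FF_2[G]$ semisimple, so every module is flat and the $\mathrm{Tor}$ term vanishes.

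Your proposal is correct. The CRT splitting $R\cong\bigoplus_t\FF_2[x]/(f_t)$ is a genuine refinement rather than a detour. The algebraic K\"unneth theorem in Hatcher is formulated over a PID, whereas $R$ is not even a domain for odd $l\ge 3$ (e.g.\ $\chi(1+x)=0$). Cycles and boundaries also need not be free (e.g.\ $\chi R$). Reducing to the field components is therefore exactly what makes such a citation legitimate, and it also shows where oddness enters, namely separability of $x^l+1$.

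Two details deserve to be written out in the cohomology part.
\begin{itemize}
\item The cochain differential $(\partial^{AB}_m)^*$ is the total differential of the product of the dual complexes, because $(M\otimes N)^*=M^*\otimes N^*$ for matrices over the commutative ring $R$. With that stated, your homology argument applies verbatim.
\item Your side remark $H^m\cong\mathrm{Hom}_R(H_m,R)$ holds only after twisting the $R$-action by the involution. The paper dualises with the conjugate transpose, and the involution can exchange non-self-reciprocal factors $f_t$ (e.g.\ for $l=7$). The direct dual-complex argument is the one to keep.
\end{itemize}
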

For LP codes, the logical operators are, therefore,  given by
\eqa{
  \LLL_X
  &= H^1(\AAA)\ox H^0(\BBB)
  \oplus
  H^0(\AAA)\ox H^1(\BBB),\\
  \LLL_Z
  &= H_1(\AAA)\ox H_0(\BBB)
  \oplus
  H_0(\AAA)\ox H_1(\BBB).
}
In terms of check matrices, this can be written explicitly as
\eqa{
  \LLL_X
  &=(R^{n_a}/\im H_a^*) \ox \ker H_b^*\oplus
  \ker H_a^* \ox (R^{n_b}/\im H_b^*),\\
  \LLL_Z
  &=\ker H_a \ox (R^{n_b}/\im H_b)\oplus
  (R^{n_a}/\im H_a) \ox \ker H_b.
}
Interpreting the direct sum above as a partition of physical qubits into
two sectors, as in Figure~\ref{Figure: key_results} (a), logical operators can be generated (up to stabilisers) 
by
\begin{equation}
  \label{eq: logical operator partitioning}
  \begin{aligned}
    \LLL_X =&\ \LLL_X^{\lef}\cup \LLL_X^\rig \\
    =&\ \LR{(\bar{l}_a\ox \bar{k}_b, \boldsymbol{0}): \bar{l}_a\in R^{n_a}/\im H_a^*, \bar{k}_b\in \ker H_b^*} \\
    &\cup \LR{(\boldsymbol{0},\bar{k}_a\ox \bar{l}_b): \bar{k}_a\in \ker H_a^*, \bar{l}_b\in R^{n_b}/\im H_b^* },\\
    \LLL_Z =& \LLL_Z^{\lef}\cup \LLL_Z^\rig\\
    =& \LR{({k}_a\ox l_b,\boldsymbol{0}): {k}_a\in \ker H_a, l_b\in R^{n_b}/\im H_b} \\
    &\cup \LR{(\boldsymbol{0},l_a\ox k_b): {l}_a\in R^{n_a}/ \im H_a, k_b\in \ker H_b}.
  \end{aligned}
\end{equation}
\begin{remark}[Hypergraph product codes]
  The \emph{hypergraph product} (HGP) codes are LP codes defined over the binary field, i.e., $R=\FF_2$.
\end{remark}

\section{Clustered-cyclic quantum LDPC code}
\label{Section: CC qLDPC code}

A central ingredient for fault-tolerant logical operations is the availability of a well-organised logical operator basis. In HGP codes, the tensor product structure induces a canonical decomposition of logical operators via the Künneth formula~\cite{Reshape, Armanda_partitioning}. This canonical logical operator basis supports structured parallel implementations of logical operations~\cite{Qian_fast_and_para,Qian_high_rate,GolowichChangZhu2025}.

In this section, we introduce a subclass of LP codes, which we call \emph{clustered-cyclic} (CC) codes. By imposing constraints on the classical seed codes and exploiting the tensor-product structure through the Künneth formula, we obtain a logical operator basis with explicit representatives and a well-controlled symmetric structure. These constraints are chosen with the aim of facilitating logical measurements, in particular joint measurements of pairs of same-type logical operators (e.g., $XX$-type or $ZZ$-type), which can then be scheduled in a highly parallel manner in our surgery-based setting introduced in Section~\ref{Section: parallelisable surgery}. As we show in Section~\ref{Section: clifford toy model 24 8 3}, these measurement primitives enable parallel implementations of logical Clifford operations for the $\code{24,8,3}$ CC code when combined with fold-transversal and automorphism-induced logical operations. We begin with a formal definition of CC codes.

\begin{definition}[Clustered-cyclic code]
  \label{Definition: cc code}
  A clustered-cyclic code is a lifted product code $\lpab$ where $H_a$ and $H_b$ are square matrices defined over a polynomial quotient ring with prime lift parameter $p$, $R=\FF_2[x]/(x^p+1)$. Entries of these matrices are either binomial or zero and the supports of the non-zero terms are placed in a cyclic way such that each row or column in a matrix has exactly the same number of non-zero binomials and the matrices are full rank.

  In order to avoid confusion, we denote the CC code as $\ccab$, where the sizes of the seed matrices are $n_a$ and $n_b$ respectively. The parity check matrices of the CC code are then given by
  \eqa{
    H_X &= \left(
      \begin{array}{c|c}
        H_a \ox I_{n_b}& I_{n_a} \ox H_b
      \end{array}
    \right),\\
    H_Z &= \left(
      \begin{array}{c|c}
        I_{n_a} \ox H_b^* & H_a^* \ox I_{n_b}
      \end{array}
    \right).
  }
  Here $H_a^*$ is the conjugate transpose of $H_a$ and $I_m$ is the identity matrix of size $m$ over the ring $R$.
\end{definition}

There are no restrictions on the sizes of two seed matrices, they can be choosen to be the same or different. For example, over $R=\FF_2[x]/(x^5+1)$,
\eqa{
  H_a =
  \begin{pmatrix}
    x^3+x^4 & x^2+x^3 & 0\\
    0 & x+x^2 & x^3+x^4\\
    x+x^3 & 0 & x^3+x^4
  \end{pmatrix}\tand H_b=
  \begin{pmatrix}
    1+x^2 & 0\\
    0 & x^2+x^4
  \end{pmatrix},
}
are two classical seed codes that would combine to form a valid CC code as described in Definition~\ref{Definition: cc code}.

Both matrices $H_X$ and $H_Z$ have size $(n_an_b) \x (2n_an_b)$. After taking the binary representation, these two matrices are mapped to matrices $\BB(H_X)$ and $\BB(H_Z)$ of size $(pn_an_b)\x (2pn_an_b)$ over $\FF_2$. Each row $s$ of length $N = 2pn_an_b$ of $\BB (H_X)$ defines an $X$-type stabiliser $X(s) = \bigotimes_{i=1}^N X_i^{s_i}$; each row $v$ of the same length of $\BB (H_Z)$ defines a $Z$-type stabiliser $Z(v) = \bigotimes_{i=1}^N Z_i^{v_i}$.
Following Theorem~\ref{Theorem: logical operator basis} and Corollary~\ref{Corollary: code para}, the code $\ccab$ has parameters
\eqa{
  \code{N=2pn_an_b,\ k=2n_an_b,\ d\leq p},
}
where $N$ is the number of physical qubits, $k$ is the number of logical qubits, and $d$ is the code distance. When the code distance saturates its upper bound, we have $N = k d$. As shown in Table~\ref{Table: example code para}, this scaling is achieved for small $p$, whereas for $p \geq 11$ the distance becomes increasingly separated from $p$.
Once the row weights of the seed matrices -- i.e.\,, the number of non-zero entries of each row -- are fixed, with all rows having the
same weight, CC codes have constant check weight for all values of the lift
parameter $p$ and seed matrix sizes $n_a$ and $n_b$. In particular, if the row
weights of $H_a$ and $H_b$ are $w_a$ and $w_b$, respectively, the check weight
of the resulting CC code $\lpab$ is

\begin{equation}
  W = 2(w_a+w_b).
 \label{eq: check weight}
\end{equation}

\begin{table}[tbp]
  \centering
  \begin{tabular}{cccc}
    \toprule
    \multicolumn{2}{c}{CC codes}
    & \multirow{2}{*}{BB codes}
    & \multirow{2}{*}{QRC} \\
    $k=8$ & $k=18$ & & \\
    \midrule
    $\code{24,8,3}\ (3)$   & -                  & $\code{28,6,4}$  & - \\
    $\code{40,8,5}\ (5)$   & -                  & $\code{42,6,6}$  & - \\
    $\code{56,8,7}\ (7)$   & $\code{54,18,3}\ (3)$   & $\code{56,6,8}$  & $\code{54,8,3}$ \\
    $\code{88,8,10}\ (11)$  & $\code{90,18,5}\ (5)$   & $\code{90,8,10}$ & $\code{90,8,10}$ \\
    $\code{104,8,11}\ (13)$ & $\code{126,18,7}\ (7)$  & $\code{120,8,12}$ & $\code{126,8,14}$ \\
    $\code{136,8,14}\ (17)$ & $\code{198,18,10}\ (11)$  & $\code{144,12,12}$ & - \\
    \bottomrule
  \end{tabular}
  \caption{Parameters of clustered-cyclic (CC) codes compared with \emph{bicycle-bivariate} (BB) codes~\cite{Yu-an_generalised_toric_code,BB_IBM}
    and \emph{quantum radial codes} (QRC)~\cite{QRC}. Two CC code families are shown, with logical dimensions $k=8$ ($n_a=n_b=2$) and $k=18$ ($n_a=n_b=3$). For both families of CC codes, the check weight is $8$ and the parameters are displayed as $\code{N,k,d}\ (p)$.
    The corresponding seed matrices are provided in
  Appendix~\ref{Appendix: code examples}. Code distances for CC codes are obtained via numerical estimation using QDistRand~\cite{QDistRand}, for accuracies of this estimation refer to Appendix~\ref{Appendix: numerical code dist}.}
  \label{Table: example code para}
\end{table}

Table~\ref{Table: example code para} lists numerically discovered examples of high-rate CC codes. The codes are ordered by their number of physical qubits and are compared with \emph{bicycle-bivariate} (BB) codes~\cite{Yu-an_generalised_toric_code,BB_IBM} and \emph{quantum radial codes}  (QRC)~\cite{QRC} of similar numbers of physical qubits.
We observe  a comparable logical error suppression between the $\code{136,8,14}$ CC code, $\code{198,18,10}$ CC code and the $\code{144,12,12}$ Gross BB code in Fig.~\ref{Figure: threshold and comparison} (c).

\subsection{Clustered logical operator basis}
Surface codes come with a particularly transparent set of logical operators: one can identify logical $X$ and $Z$ operators as string-like Pauli operators on the underlying 2D lattice, and this geometric picture directly underpins fault-tolerant logical primitives such as lattice surgery. By contrast, for most known families of high-rate qLDPC codes, logical operators are typically not given by an equally simple logical operator basis, which makes it difficult to systematically design and schedule logical operations. The purpose of this subsection is to show that CC codes admit a \emph{clustered} logical operator basis, in which each logical operator has an explicit representative supported on a `cluster' of $p$ physical qubit as demonstrated in Figure~\ref{Figure: key_results} (a). This added structure plays the same organisational role as the geometric logical basis in surface codes: it provides a concrete set of measurement primitives that can be composed into fault-tolerant logical operations.

\begin{definition}[Clustered logical operator basis]
  \label{Definition: clustered basis}
  A logical operator basis $(\LLL_X, \LLL_Z)$ of a LP code defined over $R=\FF_2[x]/(x^p+1)$ is \emph{clustered} if its binary representation $(\BB(\LLL_X), \BB(\LLL_Z))$ satisfy
  \begin{enumerate}
    \item Any operator in $\BB(\LLL_X)$ or $\BB(\LLL_Z)$ is supported on a `cluster' of qubits: here, by a cluster we mean that the $N$ physical qubits can always be partitioned consecutively into clusters of $p$ physical qubits.
    \item For any operator in $\BB(\LLL_X)$ there exists one and only one operator in $\BB(\LLL_Z)$ that anti-commutes with it, and vice versa. More precisely for every pairs of operators, one in $\BB(\LLL_X)$ and one in $\BB(\LLL_Z)$, either their supports overlap on entire cluster of $p$ physical qubits or do not overlap at all.
  \end{enumerate}
\end{definition}

\begin{theorem}[Clustered logical operator basis of CC codes]
  \label{Theorem: logical operator basis}
  clustered-cyclic codes $\ccab$ defined over $R=\FF_2[x]/(x^p+1)$ have a clustered logical operator basis as defined in Definition~\ref{Definition: clustered basis}.
  
\end{theorem}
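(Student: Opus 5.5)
The plan is to write down explicit representatives using the factorisation $x^p+1=(1+x)\,\Phi_p(x)$ with $\Phi_p(x)=1+x+\cdots+x^{p-1}$, and then show they form a basis using the canonical pairing together with a dimension count from the Künneth formula. Let $j:=\Phi_p(x)\in R$; its binary representation $\BB(j)$ is the all-ones vector on $p$ bits.

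\textbf{Step 1: a kernel element from the binomial structure.} Every non-zero entry of $H_a$, $H_b$ (and hence of $H_a^*$, $H_b^*$) is a binomial $x^{s}+x^{t}=x^{s}(1+x^{t-s})$ with $t\neq s \bmod p$. Since $j\,x^{c}=j$ for every $c$, we get $j\,(1+x^{c})=0$. Therefore for every standard basis vector $e_i\in R^{n_a}$ we have $H_a(j e_i)=0$ and $H_a^*(j e_i)=0$, and similarly for $H_b$. This gives the candidate kernel elements $j e_i\in\ker H_a\cap\ker H_a^*$.

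\textbf{Step 2: cokernel representatives.} Each binomial lies in the ideal $(1+x)R$, so $\im H_a\subseteq (1+x)R^{n_a}$. The augmentation map $\varepsilon:R\to\FF_2$, $x\mapsto 1$, vanishes on $(1+x)R$. Hence the functional $\varepsilon_i(v)=\varepsilon(v_i)$ vanishes on $\im H_a$, and the classes of $e_i$ are non-trivial in $R^{n_a}/\im H_a$. The same holds for $H_a^*$, $H_b$ and $H_b^*$.

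\textbf{Step 3: the candidate basis.} Plug these into the decomposition \eqref{eq: logical operator partitioning}. For each pair $(i,l)$ take
\begin{itemize}
\item on the left sector: $Z$-representative $(j e_i\ox e_l,\boldsymbol 0)$ and $X$-representative $(e_i\ox j e_l,\boldsymbol 0)$;
\item on the right sector: $X$-representative $(\boldsymbol 0,j e_i\ox e_l)$ and $Z$-representative $(\boldsymbol 0,e_i\ox j e_l)$.
\end{itemize}
In binary, each of these is the all-ones vector on one block of $p$ consecutive qubits, namely the block indexed by $(i,l)$ and the sector. The $2n_an_b$ blocks partition the $N=2pn_an_b$ qubits, so property 1 of Definition~\ref{Definition: clustered basis} holds.

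\textbf{Step 4: the pairing.} An $X$- and a $Z$-representative either sit on different clusters, so their supports are disjoint, or on the same cluster, where they overlap on all $p$ qubits. Since $p$ is odd, an overlap of $p$ qubits means they anticommute. So the symplectic pairing matrix between the $X$- and $Z$-representatives is the identity. This gives property 2.

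It also gives nontriviality and independence: a non-trivial combination of $Z$-representatives anticommutes with some $X$-representative, so it cannot be a stabiliser. Hence the $2n_an_b$ $Z$-representatives are independent modulo $\im H_Z^*$, and likewise for the $X$-representatives modulo $\im H_X^*$.

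\textbf{Step 5: spanning, and the main obstacle.} It remains to show these representatives span, i.e.\ that $k=2n_an_b$. By the Künneth formula (valid since $p$ is odd), it suffices to show
\[
\dim_{\FF_2}\BB(\ker H_a)=\dim_{\FF_2}\BB(R^{n_a}/\im H_a)=n_a,
\]
and similarly for $H_b$ and for the conjugate transposes. Equivalently, we need $\rank\BB(H_a)=n_a(p-1)$.

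I expect this to be the main obstacle, and it is where the full-rank hypothesis enters. Use the CRT decomposition $R\cong\FF_2\times\FF_2[x]/(\Phi_p)$. On the $\FF_2$ component $H_a$ vanishes identically, which accounts for exactly $n_a$ kernel dimensions. What is needed is that $H_a$ is invertible over the second component $\FF_2[x]/(\Phi_p)$, which is itself a product of fields.

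I would read ``full rank'' in Definition~\ref{Definition: cc code} in this sense: $\det H_a$ is a unit modulo every irreducible factor of $\Phi_p$, equivalently $\rank\BB(H_a)=n_a(p-1)$. I would note explicitly that mere non-vanishing of $\det H_a$ in $R$ is not sufficient. Under this reading the kernel and cokernel are exactly the $n_a$-dimensional spaces produced in Steps 1 and 2, the dimension count closes, and the representatives form a clustered basis.
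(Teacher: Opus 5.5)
Your proof is correct and has the same skeleton as the paper's. The paper first shows (Corollary~\ref{Corollary: ker and im of Ha}) that $\ker H_a=\ker H_a^*=\rowsp\diag_{n_a}(\chi)$ and $\im H_a=\im H_a^*=\rowsp\diag_{n_a}(1+x)$. It gets these from exactly your two observations: $\chi$ kills binomials, and binomials lie in $(1+x)R$. It then substitutes them into the K\"unneth decomposition and reads off the diagonal basis $L$.

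Two differences are worth recording.

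First, you get nontriviality and independence of the cluster representatives from the identity pairing matrix, which uses no rank hypothesis. So $k\ge 2n_an_b$ holds for any such code with odd $p$ and binomial seed entries, and the hypothesis enters only in the spanning step.

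Second, your Step~5 caveat is correct and identifies a step the paper does not justify. The paper's corollary closes the dimension count by saying the rank--nullity check from the diagonal case ``still holds''. For non-diagonal $H_a$ this is precisely the condition $\rank\BB(H_a)=n_a(p-1)$. That condition does not follow from the paper's notion of full rank (a non-vanishing maximal minor in $R$) when $\Phi_p$ is reducible over $\FF_2$, e.g.\ $p=7$ or $p=17$, both of which appear in the paper's tables.

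Concretely, take $p=7$ and $H_a=\begin{pmatrix}1+x&1+x\\ x^4+x^6&1+x\end{pmatrix}$.
\begin{itemize}
\item $\det H_a=x^2+x^4+x^5+x^6\neq 0$ in $R$, so $H_a$ is full rank in the paper's sense.
\item But $\det H_a$ vanishes modulo $x^3+x+1$.
\item With $H_b=H_a$, the corresponding $\FF_8$ component contributes extra homology, giving $k=14>2n_an_b=8$.
\item Condition~2 of Definition~\ref{Definition: clustered basis} forces every basis element to be the all-ones vector on a single cluster, so a clustered basis has at most $2n_an_b$ elements. Hence none exists here.
\end{itemize}
So your stronger reading of ``full rank'' is genuinely needed for the theorem to hold.

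Strictly, your condition is a bit more than necessary. It suffices that for each irreducible factor $f$ of $\Phi_p$, at least one of $\det H_a$, $\det H_b$ is non-zero modulo $f$. The reason is that on each CRT field component, the degree-one homology of the product complex vanishes as soon as one seed is invertible there.
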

\begin{proof}
  As outlined in Corollary \ref{Corollary: ker and im of Ha}, the kernel and image of the seed codes are given by
  
  \eqa{&\rowsp\diag_{n_a}(\chi) = \ker H_a = \ker H_a^*,\\
    &\rowsp\diag_{n_b}(\chi) = \ker H_b = \ker H_b^*,\\
    &\rowsp\diag_{n_a}(1+x) = \im H_a = \im H_a^*,\\
    &\rowsp\diag_{n_b}(1+x) = \im H_b = \im H_b^*,
  }
  where $\chi = 1+x+x^2+\cdots+x^{p-1}$. We insert them back to Eq.~\eqref{eq: logical operator partitioning} of the Künneth formula.
  As in $R = \FF_2[x]/(x^p+1)$, $\chi \chi =\chi$, we find the basis for the both $X$ and $Z$ types of logical operators for $\ccab$ to be the rows of
  \begin{equation}
    \label{eq: logical basis matrix}
    L=\left(
      \begin{array}{c|c}
        \diag_{n_an_b}(\chi) & 0_{n_an_b} \\
        \midrule
        0_{n_an_b}& \diag_{n_an_b}(\chi)\\
    \end{array}\right),
  \end{equation}
where $0_{n_an_b}$ denotes the zero matrix. This diagonal structure shows that each logical $Z$ operator anti-commutes with exactly one logical $X$ operator, since they overlap on a single $\chi$. After expressing each entry of $R$ in binary form, this $\chi$ corresponds to a cluster of $p$ physical qubits. Furthermore, any two logical operators of the same type are supported on distinct clusters of equal size, with no overlap between them.
\end{proof}
\begin{remark}
\label{Remark: chi rank}
  Over $R=\FF_2[x]/(x^p+1)$, the binary representation of $\chi$ is the matrix whose each entry is $1$, $\BB(\chi)_{i,j} = 1$ for all $i,j\in\LR{1,2,\dots,p}$. The rank of $\BB(\chi)$ is simply $1$. Hence, each row of $L$ in Eq.~\eqref{eq: logical basis matrix} lifts into only one logical operator of Hamming weight $p$ after mapping to the binary representation.
\end{remark}

We now present specific instance of a $\code{12,4,3}$ code as a concrete example of the CC code construction:

\begin{example}
  \label{Example: 12,4,3}
  Consider the code $\ccab$ defined over the ring $R=\FF_2[x]/(x^3+1)$ with seed matrices defined as follows
  
  \eqa{
    H_a =
    \begin{pmatrix}
      1+x & 1+x^2\\
      1+x^2 & 1+x
    \end{pmatrix}\tand H_b=
    \begin{pmatrix}
      1+x
    \end{pmatrix}
  }
  The $X$- and $Z$-type logical operators of this code over the ring are given by the rows of
  \begin{equation}
    L=\left(
      \begin{array}{cc|cc}
        \chi&0 &0 & 0 \\
        0&\chi&0&0\\
        \midrule
        0&0& \chi &0\\
        0&0&0&\chi
    \end{array}\right).
  \end{equation}
  Mapping to the binary representation, the four logical operators of each type are represented as 
  \eqa{
    \BB(\bX_1)= \BB(\bZ_1) &=
    \begin{pmatrix}
      111 & 000 & 000 & 000\\
      111 & 000 & 000 & 000\\
      111 & 000 & 000 & 000
    \end{pmatrix} ,\\
    \BB(\bX_2)= \BB(\bZ_2) &=
    \begin{pmatrix}
      000 &111 & 000 & 000\\
      000 &111 & 000 & 000\\
      000 &111 & 000 & 000
    \end{pmatrix} ,\\
    \BB(\bX_3)= \BB(\bZ_3) &=
    \begin{pmatrix}
      000  & 000&111 & 000\\
      000  & 000&111 & 000\\
      000  & 000&111 & 000
    \end{pmatrix} ,\\
    \BB(\bX_4)= \BB(\bZ_4) &=
    \begin{pmatrix}
      000  & 000 & 000&111\\
      000  & 000 & 000&111\\
      000  & 000 & 000&111
    \end{pmatrix},
  }
where each matrix now corresponds to one linearly independent binary logical operator of that type as mentioned in Remark~\ref{Remark: chi rank}. Hence, the binary clustered logical operator basis of this $\code{24,8,3}$ CC code is
  \eqa{
    \bX_1^\BB= \bZ_1^\BB &=
    \begin{pmatrix}
      111 & 000 & 000 & 000\\
    \end{pmatrix} ,\\
    \bX_2^\BB= \bZ_2^\BB &=
    \begin{pmatrix}
      000 &111 & 000 & 000\\
    \end{pmatrix} ,\\
    \bX_3^\BB= \bZ_3^\BB &=
    \begin{pmatrix}
      000  & 000&111 & 000\\
    \end{pmatrix} ,\\
    \bX_4^\BB= \bZ_4^\BB &=
    \begin{pmatrix}
      000  & 000 & 000&111\\
    \end{pmatrix}.
  }

\end{example}

From this clustered logical operator basis, we can infer that the code $\ccab$ has parameters $\code{12,4,3}$. In contrast to hypergraph product codes, there is no general method for determining the distance of an LP code directly from the distances of its seed codes. Instead, one typically relies on heuristic approaches to obtain upper bounds on the distance, for instance by using the \texttt{QDistRand} software package \cite{QDistRand} or the BP+OSD distance estimation method introduced in \cite{BB_IBM}. Since the code under consideration is small, it is feasible to perform an exhaustive search over the logical operator space, which confirms that the distance is $d=3$.
\begin{corollary}[Code parameters of CC codes]
  \label{Corollary: code para}
  The codes $\ccab$ over $R=\FF_2[x]/(x^p+1)$ have code parameters
  \eqa{\code{N=2pn_an_b,\ k=2n_an_b,\ d\leq p}.}
\end{corollary}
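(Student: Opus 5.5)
The three parameters can be read off separately from Definition~\ref{Definition: cc code}, Theorem~\ref{Theorem: logical operator basis} and Remark~\ref{Remark: chi rank}.

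\textbf{Block length.} By Definition~\ref{Definition: cc code}, $Q_1 = A_1\ox B_0\oplus A_0\ox B_1 \cong R^{n_an_b}\oplus R^{n_an_b}$, since both seed matrices are square. Applying $\BB$ sends $R$ to $\FF_2^p$, so $N = 2pn_an_b$.

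\textbf{Logical dimension.} I will use the basis matrix $L$ of Eq.~\eqref{eq: logical basis matrix} from the proof of Theorem~\ref{Theorem: logical operator basis}. It has $2n_an_b$ rows over $R$, each with a single nonzero entry $\chi$. By Remark~\ref{Remark: chi rank}, $\BB(\chi)$ has rank one, so each row lifts to exactly one binary logical operator. The dimension count is then a statement about modules:
\begin{itemize}
\item Corollary~\ref{Corollary: ker and im of Ha} gives $\ker H_a = \chi R^{n_a}$ and $\im H_a = (1+x)R^{n_a}$.
\item Since $R\cong \FF_2[x]/(x+1)\times \FF_2[x]/(\chi)$ for odd prime $p$, the summand $\chi R\cong \FF_2$ has binary dimension $1$, and $R/(1+x)R \cong \FF_2$ as well.
\item By the Künneth formula, $H_1$ splits into two summands of the form $\ker H_a\ox (R^{n_b}/\im H_b)$ (and its mirror). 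Each is isomorphic to $(\chi R\ox_R R/(1+x))^{n_an_b}$.
\item Because $\chi\equiv p \equiv 1 \pmod{1+x}$, this tensor is $\FF_2$.
\end{itemize}
Each summand therefore contributes $n_an_b$ binary logical qubits, and $k = 2n_an_b$. The representatives in $L$ are independent modulo stabilisers because every $\bar Z$ pairs with exactly one $\bar X$ on a single cluster of odd weight $p$, giving overlap parity $1$; the pairing matrix is the identity.

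\textbf{Distance bound.} Each basis operator is a nontrivial logical of Hamming weight $p$: it is all-ones on one cluster, and its nontriviality is witnessed by its anticommuting partner. This gives $d_X, d_Z \le p$, hence $d\le p$.

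The main obstacle I expect is the dimension count. Taking the binary representation does not naively commute with the $R$-module tensor product, so I need the semisimple decomposition of $R$ (valid since $p$ is odd) to justify that each $\chi$-row contributes exactly one qubit. That decomposition is the step I would write out carefully, invoking Remark~\ref{Remark: chi rank} for the final rank statement.
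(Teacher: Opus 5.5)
Your proof is correct. For $N$ and the bound $d\le p$ it coincides with the paper's; the difference is in how $k$ is counted.

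The paper uses faithfulness of $\BB$ to identify $k$ with $\dim\BB[H_1(\QQQ)]$. It then reads off $k=\rank\BB(L)=2n_an_b$ from the rank-one blocks $\BB(\chi)$. This tacitly treats linear independence of the lifted rows of $L$ as independence modulo stabilisers.

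You instead compute the $\FF_2$-dimension of each Künneth summand as a module, via $R\cong\FF_2[x]/(x+1)\times\FF_2[x]/(\chi)$ and $\ker H_a\ox(R^{n_b}/\im H_b)\cong(\chi R\ox R/(1+x)R)^{n_an_b}\cong\FF_2^{\,n_an_b}$. You then certify the basis property separately: the overlap matrix between the $\bar X$ and $\bar Z$ cluster representatives is the identity, because each overlap is a full cluster of odd size $p$.

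Your route is more rigorous exactly where you expected trouble, since $\BB$ does not commute with the tensor product over $R$. The paper's route is shorter because it reuses the explicit basis of Theorem~\ref{Theorem: logical operator basis}. Both rest on the same inputs: the Künneth formula for semisimple $R$ (hence $p$ odd) and Corollary~\ref{Corollary: ker and im of Ha}.

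A minor remark on the tensor step. The most direct reason that $\chi R\ox R/(1+x)R\cong\FF_2$ is $(1+x)\chi=0$, so the tensor equals $\chi R/(1+x)\chi R=\chi R=\{0,\chi\}$. Your congruence $\chi\equiv 1$ modulo $1+x$ says the same thing less directly, by identifying $\chi$ as the idempotent of the $\FF_2[x]/(x+1)$ factor.
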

\begin{proof}
  The logical dimension of an LP code is the dimension of the (co-)homology groups of the chain complex in its binary representation
  \begin{equation}
    k = \dim H^1[\BB(\QQQ)] = \dim H_1[\BB(\QQQ)].
  \end{equation}
  The binary representation is a faithful representation preserving
  the kernel and images of homomorphisms
  \begin{equation}
    \BB(\ker \partial_i) = \ker\BB( \partial_i)\tand \BB(\im \partial_i) = \im\BB( \partial_i),
  \end{equation}
  as $\forall v\in R^m$, $\BB[\partial (v)] = \BB(\partial)\BB(v)$.

The dimensions of the (co-)homology groups defined above, as quotient groups of kernels by images, therefore coincide with the dimensions of the binary representations of the corresponding (co-)homology groups of the chain complex over ring modules,
  \begin{equation}
    k = \dim \BB[H^1(\QQQ)] = \dim \BB[H_1(\QQQ)].
  \end{equation}
  As the binary representation of $\chi = 1+x+\cdots+x^{p-1}$ has rank $1$,
  the binary representation of the matrix of the logical operator basis in Eq.~\eqref{eq: logical basis matrix} has exactly $2n_an_b$ independent rows. From this, we see that
  \eqa{
    k=\rank\BB(L) = \rank L = 2n_an_b.
  }
  Each row of $\BB(L)$ has Hamming weight $p$, as the only non-zero entry in each row of $L$ is $\chi$.
  This gives a distance upper bound for both types of logical operators.
\end{proof}

\section{Parallel surgery for product codes}
\label{Section: parallelisable surgery}

In this section we present a surgery protocol tailored to product-code constructions, including HGP and LP codes. Existing surgery-based measurement schemes for qLDPC codes can achieve small space overheads for individual logical measurements. However, obtaining a high degree of parallelism at comparably low overhead is more subtle. 

In generic qLDPC codes, logical operators often exhibit a complicated and highly overlapping structure. As a consequence, the set of logical operators that can be measured jointly in a single round is restricted by their mutual support overlap, commutation constraints, and the available auxiliary resources. This leads to an inherent trade-off between parallelism and auxiliary overhead~\cite{Qian_high_rate,Extractor,Improved_surgery,Gauging_logical}.

Our protocol enables parallel logical PPMs of the same Pauli type with a fixed per-round auxiliary overhead. For an underlying $\code{N,k,d}$ product code, each surgery round requires $2N$ auxiliary physical qubits, consisting of $N$ data auxiliaries and $N$ check auxiliaries. Crucially, this overhead does not scale with the number of PPMs performed within that round.
The number of logical PPMs that can be executed simultaneously is therefore limited only by compatibility constraints between the chosen logical operators.

In the following sections we show that the clustered logical-operator basis derived in Section~\ref{Section: CC qLDPC code} naturally supports such parallel measurements. For CC codes, we demonstrate levels of parallelism comparable to those achieved in distributed surface-code architectures: within a single surgery round, it is possible to measure the joint product of up to $k/2$ pairs of two-logical operators of the same Pauli type.

\subsection{Parallel product surgery}
The principal technical challenge in implementing a logical PPM via code surgery is to design a fault-tolerant measurement procedure that relies only on constant- or otherwise low-weight measurements. In other words, we seek to extract exactly the eigenvalue of the intended joint logical observable, such as $(L_1\otimes L_2)$, without inadvertently leaking any additional logical information. Code surgery provides a family of techniques that achieve this by temporarily merging the data block with an auxiliary system so that, in the merged code, the desired joint logical operator $(L_1\otimes L_2)$ is promoted to an element of the stabiliser group \cite{HorsmanSurgery,Xanadu_Homological_measurements}. Its value can then be inferred through standard syndrome extraction, by measuring a suitable collection of low-weight stabilisers whose product yields the target PPM. We now describe how such merge operations can be formulated in the language of chain complexes.

Consider two CSS codes expressed as length-$2$ chain complexes
\begin{equation*} \QQQ:
  \begin{tikzcd}[row sep=scriptsize, column sep = small]
    Q_2 \arrow[r, "H_Z^*"] & Q_{1} \arrow[r, "H_X"] & Q_0
  \end{tikzcd} \tand \UUU:
  \begin{tikzcd}[row sep=scriptsize, column sep = small]
    U_1 \arrow[r, "\partial_1"] & U_{0} \arrow[r, "\partial_0"] & U_{-1}
  \end{tikzcd},
\end{equation*}
where $\QQQ$ is the data code patch with non-zero logical dimension and $\UUU$ is the auxiliary code patch, whose logical dimension can be either zero or non-zero.

\begin{definition}[Code surgery and merged code~\cite{Xanadu_Homological_measurements}]
  \label{Definition: code surgery}
  Given a data code $\QQQ$ and an auxiliary code $\UUU$, the $Z$-type code surgery is uniquely defined by the complex 
  \begin{equation}
    \label{eq: CSS surgery}
    \begin{tikzcd}[]
      U_1 \arrow[r, "\partial_1"]\arrow[rd,"f_1"] & U_{0} \arrow[r, "\partial_0"]\arrow[rd,"f_0"] & U_{-1}\\
      Q_2 \arrow[swap,r, "H_Z^*"] & Q_{1} \arrow[swap,r, "H_X"] & Q_0
    \end{tikzcd},
  \end{equation}
  where the diagram commutes, i.e., $f_0\partial_1 = H_Xf_1$. Such a complex defines a CSS code, which we refer to as the merged code, with parity check matrices
  \begin{equation}
    \widetilde{H_X}=
    \begin{pmatrix}
      H_X & f_0\\
      & \partial_0
    \end{pmatrix}\tand \widetilde{H_Z}=
    \begin{pmatrix}
      H_Z & \\
      f_1^* & \partial_1^*
    \end{pmatrix}.
  \end{equation}
  The CSS condition of the merged code, $\widetilde{H_X} \widetilde{H_Z}^* = 0$, is satisfied as long as the above diagram commute. We can also denote the merged code as
  \begin{equation*} \widetilde{\MMM}:
    \begin{tikzcd}[row sep=scriptsize, column sep = scriptsize]
      Q_2\oplus U_1 \arrow[r, "\widetilde{H_Z}^*"] & Q_{1}\oplus U_0 \arrow[r, "\widetilde{H_X}"] & Q_0 \oplus U_{-1}
    \end{tikzcd}.
  \end{equation*}
  The $Z$-type PPMs are therefore implemented by measuring the $Z$-type stabilisers corresponding to the rows of $\begin{pmatrix}
      f_1^* & \partial_1^*
  \end{pmatrix}$, where a designated subset of the rows of $f_1^*$ realises the target joint logical operator on the data code patch.
\end{definition}

We view the merged code as a subsystem code on the joint system consisting of the data code patch and the auxiliary code patch. Accordingly, $\dim H_1[\BB(\widetilde{\MMM})]=\widetilde{k}+\widetilde{r}$, where $\widetilde{k}$ is the
number of logical qubits and $\widetilde{r}$ the number of gauge qubits. Passing to the binary representation
$\BB(\cdot)$ of codes defined over the ring, and treating any logical operator of $\widetilde{\MMM}$ supported on the auxiliary patch as a gauge operator, we obtain~\cite{Xanadu_Homological_measurements}
\begin{equation}
  \label{eq: number of logical and gauge qubits in merged code}
  \begin{aligned}
    \widetilde{k} =&\ k+\Lr{\dim \ker [\BB(\widetilde{H_Z})]^\intercal - \dim \im [\BB(H_Z)]^\intercal} , 
    \\
    &- \dim \ker \BB(\partial_1),\\
    \widetilde{r} =&\ \dim \Lr{\ker \BB(\partial_0)/ \im\BB(\partial_1)} - \dim \ker [\BB(H_X)]^\intercal \\
    &+\Lr{\dim \ker [\BB(\widetilde{H_X})]^\intercal - \dim \ker [\BB(\partial_0)]^\intercal}.
  \end{aligned}
\end{equation}
To interpret these terms, recall that if a family of Pauli generators is specified by a binary check matrix $H$, then any $u\in\ker H^\intercal$ represents a multiplicative dependency among
them: the product of generators selected by $u$ is the identity (up to phase). Thus $\dim\ker H^\intercal$ counts
the \emph{redundancy} of the generating set. In the merged code $\widetilde{\MMM}$, some stabiliser generators are inherited from the data patch, while additional generators arise from the auxiliary patch and the connection
maps. The difference
\[
\dim \ker [\BB(\widetilde{H_Z})]^\intercal - \dim \im [\BB(H_Z)]^\intercal
\]
therefore measures the amount of $Z$-type stabiliser redundancy present in $\widetilde{\MMM}$ that is not directly inherited from the data patch, i.e., the new dependencies created by the merge. The expression for $\widetilde{r}$ is the corresponding $X/Z$-swapped analogue, comparing redundancies before 
and after merging with the roles of $\QQQ$ and $\UUU$ interchanged, and 
$\ker \BB(\partial_0)/ \im \BB(\partial_1)$ is 
the logical operator space 
of 
the auxiliary code patch.
 
Furthermore, the number of logical PPMs being executed during the surgery process is given by~\cite{Qian_high_rate}
\begin{equation}
  \label{eq: number of bits being extracted}
  M = \dim \BB(f_1)[\ker \BB(\partial_1)].
\end{equation}
We refer to each PPM (possibly on single logical qubit) as a \emph{merge} performed in an appropriate basis, following the standard code-surgery convention~\cite{HorsmanSurgery}.

In the case when the data code $\QQQ$ is a CSS code under product construction, including HGP and LP codes, one can always build up code surgery for $\QQQ$ via introducing another copy of the same code as the auxiliary system $\UUU = \QQQ$ and taking the parity check matrices of another product code $\PPP$ of the same size, $(H_X',H_Z')$, as $(f_0,f_1^*)$, respectively.
\begin{definition}[Product connection code]
\label{Definition: product connection code}
Given an LP code $\QQQ$ with parity check matrices $(H_X,H_Z)$ defined over $R=\FF_2[x]/(x^p+1)$, a corresponding \emph{product connection code} $\PPP$ is another LP code over $R$ whose parity check matrices $(H_X',H_Z')$ are in the same size with $(H_X,H_Z)$.
\end{definition}
\begin{lemma}[Parallel product surgery]
  \label{Lemma: surgery for product codes}
  Given an LP code $\QQQ$ and one corresponding product connection code $\PPP$ with parity check matrices $(H_X,H_Z)$ and $(H_X',H_Z')$, respectively, the following complex gives a valid surgery
  \begin{equation}
    \label{eq: surgery diagram for product codes}
    \begin{tikzcd}[]
      Q_2^{\mathrm{aux}} \arrow[r, "H_Z^*"]\arrow[rd,"{H_Z'}^*"] & Q_{1}^{\mathrm{aux}} \arrow[r, "H_X"]\arrow[rd,"H_X'"] & Q_{0}^{\mathrm{aux}}\\
      Q_2 \arrow[swap,r, "H_Z^*"] & Q_{1} \arrow[swap,r, "H_X"] & Q_0
    \end{tikzcd}.
  \end{equation}
  The auxiliary code patch is another copy of the data code patch, $\QQQ^{\mathrm{aux}} = \QQQ$. The corresponding CSS parity check matrices are
  \begin{equation}
    \widetilde{H_X}=
    \begin{pmatrix}
      H_X & H_X'\\
      & H_X
    \end{pmatrix}\tand \widetilde{H_Z}=
    \begin{pmatrix}
      H_Z & \\
      H_Z' & H_Z
    \end{pmatrix}.
  \end{equation}
\end{lemma}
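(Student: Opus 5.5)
The plan is to check the single condition that Definition~\ref{Definition: code surgery} requires, namely that the square in Eq.~\eqref{eq: surgery diagram for product codes} commutes. I also want to confirm directly that the displayed $\widetilde{H_X},\widetilde{H_Z}$ satisfy the CSS condition. With the identifications $\partial_1 = H_Z^*$, $\partial_0 = H_X$, $f_1 = {H_Z'}^*$ and $f_0 = H_X'$, the auxiliary patch is itself a valid complex, because $H_XH_Z^*=0$ by Definition~\ref{Definition: LP code}. What remains is to show
\begin{equation}
  H_X' H_Z^* = H_X {H_Z'}^* .
\end{equation}

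First I fix the structure of the connection code. I read Definition~\ref{Definition: product connection code} as saying that $\PPP = \LP(H_a',H_b')$, where the seeds $H_a'$ and $H_b'$ have the same shapes as $H_a$ and $H_b$. This makes the identity blocks in $H_X'$ and $H_Z'$ agree in size with those in $H_X$ and $H_Z$, so that
\begin{equation}
  H_X' = \left(\begin{array}{c|c} H_a'\ox I & I\ox H_b'\end{array}\right), \qquad
  H_Z' = \left(\begin{array}{c|c} I\ox {H_b'}^* & {H_a'}^*\ox I\end{array}\right).
\end{equation}
Securing this compatible tensor decomposition from the ``same size'' hypothesis is the one point I expect to need care. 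If the seed shapes were not matched, the blocks would not be conformable. I would therefore state the matched-seed-shape assumption explicitly as the meaning of Definition~\ref{Definition: product connection code}.

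Next I expand both sides using the mixed-product rule $(A\ox B)(C\ox D)=AC\ox BD$, which holds because $R$ is commutative, together with $(A^*)^*=A$. This gives
\begin{align}
  H_X'H_Z^* &= (H_a'\ox I)(I\ox H_b) + (I\ox H_b')(H_a\ox I) = H_a'\ox H_b + H_a\ox H_b',\\
  H_X{H_Z'}^* &= (H_a\ox I)(I\ox H_b') + (I\ox H_b)(H_a'\ox I) = H_a\ox H_b' + H_a'\ox H_b .
\end{align}
The two expressions coincide, so the diagram commutes.

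Finally I compute the merged CSS condition block-wise:
\begin{equation}
  \widetilde{H_X}\widetilde{H_Z}^* =
  \begin{pmatrix} H_XH_Z^* & H_X{H_Z'}^* + H_X'H_Z^* \\ 0 & H_XH_Z^*\end{pmatrix}.
\end{equation}
The diagonal blocks vanish by the CSS condition of $\QQQ$. The off-diagonal block vanishes because the two summands are equal and $R$ has characteristic $2$. Hence $\widetilde{\MMM}$ is a valid merged code in the sense of Definition~\ref{Definition: code surgery}.

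Because the map $\BB$ preserves products and transposes, $\BB(M^*)=\BB(M)^\intercal$, these identities carry over to the binary representation. The surgery is therefore valid at the level of physical qubits as well.
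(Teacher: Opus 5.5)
Your proposal is correct and follows the paper's proof. Both reduce validity to the commutativity condition $H_X'H_Z^* = H_X{H_Z'}^*$ and verify it by expanding each side with the mixed-product rule to $H_a\ox H_b' + H_a'\ox H_b$. Your explicit assumption that $H_a', H_b'$ have the same shapes as $H_a, H_b$ (which the paper uses tacitly) and your block-wise check that $\widetilde{H_X}\widetilde{H_Z}^*=0$ are sound additions, not a different route.
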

\begin{proof}

To show that the complex Eq.\eqref{eq: surgery diagram for product codes} yields a valid surgery operation, we need to show that the diagram commutes.

  The product construction guarantees the commutativity of the diagram in Eq.~\eqref{eq: surgery diagram for product codes}
  according to, on the one hand,
  \eqa{
    H_X{H_Z'}^* &= \left(
      \begin{array}{c|c}
        H_a \ox I& I \ox H_b
      \end{array}
    \right)
    \begin{pmatrix}
      I\ox H_b'\\
      H_a'\ox I
    \end{pmatrix}\\
    &= (H_a\ox I)(I\ox H_b') + (I\ox H_b)(H_a'\ox I)\\
    &= H_a\ox H_b' + H_a'\ox H_b,
  }
  and on the other 
  hand, we 
  have
  \eqa{
    H_X'{H_Z}^* &= \left(
      \begin{array}{c|c}
        H_a' \ox I& I \ox H_b'
      \end{array}
    \right)
    \begin{pmatrix}
      I\ox H_b\\
      H_a\ox I
    \end{pmatrix}\\
    &= (H_a'\ox I)(I\ox H_b) + (I\ox H_b')(H_a\ox I)\\
    &= H_a'\ox H_b + H_a\ox H_b'\\
    &= H_a\ox H_b' + H_a'\ox H_b = H_X{H_Z'}^*.
  }
\end{proof}
For parallel product surgery, we have $\partial_1 = H_Z^*$, leading to$\ker \partial_1 = \ker H_Z^*$, which permits a high chance for 

\begin{equation}
  M = \dim \BB(H_Z'^*)[\ker \BB(H_Z^*)]>1.
\end{equation}
This means that multiple PPMs can be carried out within a single surgery round, as illustrated in Section~\ref{Subsec: surgery procedure}. Parallelism is therefore achieved at the logical level.

For CC codes, we further show in Section~\ref{Subsec: maximal para CC surgery} that an appropriate choice of the product connection code $\PPP$ can exploit the clustered logical operator basis to saturate the natural upper bound on
the number of merges per round,
\[
M_{\max} = k/2,
\]
whenever $H_Z'$ is full rank. Finally, in
Section~\ref{Subsec: limitations}, we characterise which configurations of PPMs
are compatible with parallel product surgery for CC codes.

\subsection{Parallel product surgery procedure}
\label{Subsec: surgery procedure}

Below, we present a product surgery procedure for product codes targetting $Z$-type PPMs via tracking stabiliser evolution explicitly. Whilst we consider $Z$-type merges here, the same process can be adapted directly to $X$-type merges. We assume the data patch is already prepared in the codespace of $\QQQ$~\cite{Xanadu_Homological_measurements}.
\begin{enumerate}
  \item \textbf{Initialisation:} Prepare an auxiliary patch of $N$ physical qubits, intended to host the same product code as the data patch, i.e., $\QQQ^{\aux}=\QQQ$. At initialisation, these auxiliary qubits are not yet encoded into $\QQQ^{\aux}$; instead, we prepare the state vector $\ket{+}^{\otimes N}$. Equivalently, the auxiliary patch is stabilized by single-qubit $X$ operators, so that only $Z$-type errors can accumulate on the auxiliary qubits prior to the first syndrome measurement. At this point, the joint system is stabilised by
    \begin{equation*}H_X^{1} =
      \begin{pmatrix}
        H_X & \\
        & I
      \end{pmatrix}\tand H_Z^{1} =
      \begin{pmatrix}
        H_Z & 0
    \end{pmatrix}.\end{equation*}
  \item \textbf{Measure the new $Z$ stabilisers:} Measure each of the new $Z$-type stabilisers in the merged code from the rows $
    \begin{pmatrix}
      f_1^* & \partial_1^*
    \end{pmatrix} =
    \begin{pmatrix}
      H_Z' & H_Z
    \end{pmatrix}$ of $\widetilde{H_Z}$. Since
    \begin{equation*}
      \rowsp
      \begin{pmatrix}
        H_X & H_X'\\
        & I
      \end{pmatrix} = \rowsp
      \begin{pmatrix}
        H_X & \\
        & I
      \end{pmatrix},
    \end{equation*}
    and the rows $
    \begin{pmatrix}
      H_X & H_X'
    \end{pmatrix}$ commute with the newly measured $Z$-type stabilisers by construction, the only $X$-type stabilisers corresponding to rows $
    \begin{pmatrix}
      0 & I
    \end{pmatrix}$ that anti-commute with the measurements now be updated to $
    \begin{pmatrix}
      0 & G
    \end{pmatrix}$,
    where $G$ indicates this intermediate step of stabiliser measurements of the merged code. The row of $G$ generates the subspace that commutes with $H_Z$, i.e., $G$ is the matrix satisfying $\ker G = \im H_Z^*  \subseteq \ker H_X$. The physical system is now stabilised by
    \begin{equation*}H_X^{2} =
      \begin{pmatrix}
        H_X & H_X'\\
        & G
      \end{pmatrix}\tand H_Z^{2} = \widetilde{H_Z}=
      \begin{pmatrix}
        H_Z & \\
        H_Z' & H_Z
    \end{pmatrix}.\end{equation*}
    The targetted PPMs are performed at this step as the product of $Z$-type stabiliser measurement outcomes corresponding to the rows of $\begin{pmatrix}
      H_Z' & H_Z
    \end{pmatrix}$, where a designated subset of the rows of $H_Z'$ realises the target joint logical operators on the data code patch.
  \item \textbf{Measure the new $X$ stabilisers:} Measure the rows of $
    \begin{pmatrix}
      H_X & H_X'
    \end{pmatrix}$. Since
    \( \im G^* \setminus \im H_X^* = \ker H_Z \setminus \im H_X^* = \LLL_Z^\mathrm{aux}\), the rows of $G$ which is not in the row space of $H_X$ should be treated as fixed gauges, and we do not measure them. The physical system is now finally stabilised by
    \begin{equation*}\widetilde{H_X}=
      \begin{pmatrix}
        H_X & H_X'\\
        & H_X
      \end{pmatrix}\tand \widetilde{H_Z}=
      \begin{pmatrix}
        H_Z & \\
        H_Z' & H_Z
    \end{pmatrix}.\end{equation*}
  \item \textbf{Measure the stabilisers of the merged code $\widetilde{\MMM}$ for $d$ rounds:} We need to measure the stabilisers of the merged code $\widetilde{\MMM}$ for $d$ rounds to ensure fault tolerance, where $d$ is the distance of the data code patch $\QQQ$.
  \item \textbf{Split:} Measure the auxiliary qubits in $X$ basis transversally to split the merged code back to the data code patch and auxiliary code patch. Apply Pauli corrections on the data code patch according to the measurement outcomes.
\end{enumerate}

\subsection{Maximally parallel surgery for CC codes}
\label{Subsec: maximal para CC surgery}
The clustered logical operator bases of CC codes maximise the parallelism enabled by product surgery. Recall that a CC code defined over the ring $R = \FF_2[x]/(x^p+1)$ with code parameters
\begin{equation}
  \code{N=2pn_an_b,k=2n_an_b,d\leq p}
\end{equation} is equipped with a basis for the both $X$- and $Z$-types of logical operators being
the rows of
\begin{equation}
  L=\left(
    \begin{array}{c|c}
      \diag_{n_an_b}(\chi) & 0_{n_an_b} \\
      \midrule
      0_{n_an_b}& \diag_{n_an_b}(\chi)\\
  \end{array}\right),
\end{equation}
where $\chi  = 1+x+\cdots+x^{p-1}$. We now introduce parallel surgery for CC codes.
\begin{definition}
  \label{Definition: CC surgery}
  [Parallel surgery for CC codes]
  Consider a CC code $\QQQ = \ccab$ and another LP code $\PPP = \lpAB$ defined over the same ring $R = \FF_2[x]/(x^p+1)$, as the product connection code. Each entry of $H_a'$ and $H_b'$ is either $0$ or $1$.

  The parallel surgery is defined as
  \begin{equation}
    \begin{tikzcd}[]
      Q_2^{\mathrm{aux}} \arrow[r, "H_Z^*"]\arrow[rd,"{H_Z'}^*"] & Q_{1}^{\mathrm{aux}} \arrow[r, "H_X"]\arrow[rd,"H_X'"] & Q_{0}^{\mathrm{aux}}\\
      Q_2 \arrow[swap,r, "H_Z^*"] & Q_{1} \arrow[swap,r, "H_X"] & Q_0
    \end{tikzcd},
  \end{equation}
  where $H_X'$ and $H_Z'$ are the parity check matrices of the product connection code $\PPP$, and same as generic parallel product surgery, $\QQQ^\aux = \QQQ$. The corresponding CSS parity check matrices of the merged code are
  \begin{equation}
    \widetilde{H_X}=
    \begin{pmatrix}
      H_X & H_X'\\
      & H_X
    \end{pmatrix}\tand \widetilde{H_Z}=
    \begin{pmatrix}
      H_Z & \\
      H_Z' & H_Z
    \end{pmatrix}.
  \end{equation}
\end{definition}

\begin{theorem}[Merge of $Z$-type logical operators]
  \label{Theorem: Merge Z logical}
  Each non-zero row in $H_Z'$, and the corresponding non-zero row in $  \begin{pmatrix}
    H_Z' & H_Z
  \end{pmatrix}$ defines a merge of some $Z$-type logical operators of the data code patch. They are measured as stabilisers of the merged code, and the measurement outcome corresponds to the value of the PPMs in the data code.
  
\end{theorem}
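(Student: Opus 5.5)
The plan is to exhibit, for each non-zero row of $H_Z'$, an explicit product of newly measured $Z$-stabilisers whose support lies entirely on the data patch and equals a product of clustered logical $Z$ operators from Theorem~\ref{Theorem: logical operator basis}. I will work over $R$ and pass to the binary representation at the end.

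\textbf{Step 1: a generic principle for product surgery.} Take any $u\in\ker H_Z^*$, viewed as a row combination of the new rows $\begin{pmatrix}H_Z' & H_Z\end{pmatrix}$. Then
\[
u\begin{pmatrix}H_Z' & H_Z\end{pmatrix}=\begin{pmatrix}uH_Z' & 0\end{pmatrix},
\]
so this combination acts only on the data patch. By the commutation identity $H_X H_Z'^* = H_X' H_Z^*$ from Lemma~\ref{Lemma: surgery for product codes}, we get
\[
H_X (uH_Z')^* = H_X' H_Z^* u^* = 0 .
\]
Hence $uH_Z'\in\ker H_X$, i.e.\ it is a $Z$-logical (or stabiliser) of $\QQQ$. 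This is the mechanism by which data-patch logicals become stabilisers of $\widetilde{\MMM}$.

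\textbf{Step 2: specialise to CC codes.} By Corollary~\ref{Corollary: ker and im of Ha}, $\chi$ annihilates every binomial entry, since $\chi(x^a+x^b)=0$. Therefore $u_{ij}=\chi\, e_i\otimes e_j$ lies in $\ker(I\otimes H_b)\cap\ker(H_a\otimes I)=\ker H_Z^*$ for each index $(i,j)$. The row $u_{ij}\begin{pmatrix}H_Z' & H_Z\end{pmatrix}$ is exactly $\chi$ times the $(i,j)$-th row of $\begin{pmatrix}H_Z' & H_Z\end{pmatrix}$, with its $H_Z$-part vanishing. Since $H_a'$ and $H_b'$ have entries in $\{0,1\}$, the $(i,j)$-th row of $H_Z'=\begin{pmatrix}I\otimes H_b'^* & H_a'^*\otimes I\end{pmatrix}$ is a $0/1$ vector. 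Multiplying it by $\chi$ therefore gives a sum of rows of the logical basis matrix $L$ in Eq.~\eqref{eq: logical basis matrix}. Concretely, it is the product of the clustered logical $Z$ operators whose clusters are the non-zero positions of that row. Whenever the row is non-zero, this product is a non-trivial element of $\LLL_Z$, since distinct clusters give independent logicals by Corollary~\ref{Corollary: code para}. This is precisely a merge of those $Z$-logicals.

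\textbf{Step 3: binary level and outcomes.} By Remark~\ref{Remark: chi rank}, $\BB(\chi)$ is the all-ones $p\times p$ matrix. So $\BB(u_{ij})$ selects the sum of all $p$ binary rows in block row $(i,j)$ of $\BB(\begin{pmatrix}H_Z' & H_Z\end{pmatrix})$. Consequently the product of those $p$ measured stabiliser outcomes equals the eigenvalue of the operator $\begin{pmatrix}\BB(uH_Z') & 0\end{pmatrix}$. This operator is trivial on the auxiliary patch, so the auxiliary $\ket{+}^{\otimes N}$ initialisation contributes no randomness to the product. The product is therefore exactly the value of the joint PPM on the data patch.

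\textbf{Main obstacle.} The delicate point is Step 2: ensuring that $\chi$ acting on the $H_Z'$ block yields whole clusters. This relies on the entries of $H_Z'$ being $0$ or $1$, since a general ring entry times $\chi$ is either $0$ or $\chi$ according to its weight parity. I would check the $0/1$ restriction carefully, including the conjugation $*$ (which fixes $0$ and $1$), and verify that the $\BB$-lift respects the cluster partition.
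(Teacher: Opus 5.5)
Your proof is correct and follows essentially the same route as the paper: multiply each row of $\begin{pmatrix} H_Z' & H_Z\end{pmatrix}$ by $\chi$, so that the binomial $H_Z$ half is annihilated and the $0/1$ entries of $H_Z'$ become whole clusters, and read this in binary as the sum of the $p$ lifted rows, whose product of outcomes gives the PPM value. Your Step~1, which uses $H_XH_Z'^*=H_X'H_Z^*$, and your non-triviality remark are harmless additions the paper does without, since the result is already directly a sum of rows of $L$.
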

\begin{proof}
    The goal is to express the targetted PPMs in a form where the only non-zero entries are $\chi$'s, for example
    \(
    \bZ_1 \otimes \bZ_3
    =
    \begin{pmatrix}
    \chi & 0 & \chi & 0 & \cdots
    \end{pmatrix}
    \),
    by writing them as products of stabilisers of the merged code, that is, as linear combinations of the rows of
    \(
    \begin{pmatrix}
    H_Z' & H_Z
    \end{pmatrix}
    \).
    Each non-zero row in $H_Z'$ corresponds to such a $Z$-type stabiliser. As each entry of $H_Z'$ is either $0$ or $1$, owing to the constraints on $H_a'$ and $H_b'$, we can always get a product of some $Z$-type logical operators of the data code patch by multiplying the rows of $\begin{pmatrix}
    H_Z' & H_Z
    \end{pmatrix}$ that correspond to the non-zero rows of $H_Z'$ with $\chi$: the $1$'s in the half coming from $H_Z'$ will be converted to $\chi$'s, while the other half coming from $H_Z$ is annihilated as any binomial times $\chi$ is always zero over the ring $R = \FF_2[x]/(x^p+1)$.

    From the binary viewpoint, the $H_Z'$ half lifts to blocks of identity matrices, while each PPM is supported on entire clusters, appearing as contiguous blocks of $1$'s. For example, when $p=3$,
    \[
    \bZ_1^\BB \otimes \bZ_3^\BB =
    \left( 1\ 1\ 1\ 0\ 0\ 0\ 1\ 1\ 1\ 0\ 0\ 0\ \cdots \right).
    \]
    Consequently, `multiplication by $\chi$' on rows coming from $H_Z'$ reduces to summing all rows within each identity block. By contrast, for rows coming from $H_Z$, the corresponding row-sum vanishes after lifting: the contributions cancel pairwise, so the total sum is zero.
  
\end{proof}

More concrete examples of PPMs as $Z$-type merges via parallel surgery of the $\code{24,8,3}$ CC code are presented in Appendix~\ref{Appendix: examples}.

As each non-zero row in $H_Z'$ defines a merge of some $Z$-type logical operators in the data patch. The total number of merges is thus determined by the rank of $H_Z'$.

\begin{corollary}
  The number of merges performed during the parallel surgery for CC codes is given by
  \begin{equation}
    M = \rank H_Z'.
  \end{equation}
\end{corollary}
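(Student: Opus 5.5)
The plan is to compute $M$ through Eq.~\eqref{eq: number of bits being extracted}, which for the parallel surgery of Definition~\ref{Definition: CC surgery} reads $M=\dim \BB(H_Z'^*)[\ker \BB(H_Z^*)]$. We then reduce this to a computation over $R$ using the clustered logical operator basis of Theorem~\ref{Theorem: logical operator basis}.

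The first step is to describe $\ker H_Z^*$ explicitly. The $Z$-checks of the auxiliary patch are indexed by $Q_2^{\aux}=A_1\ox B_1\cong R^{n_an_b}$. Applying the Künneth description of the degree-two homology to the CC seeds, and using Corollary~\ref{Corollary: ker and im of Ha}, we get $\ker H_a=\ker H_a^*=\rowsp\diag(\chi)$ and likewise for $H_b$. This should give
\[
\ker H_Z^*=\ker(I\ox H_b)\cap\ker(H_a\ox I)=\ker H_a\ox\ker H_b=\chi\cdot R^{n_an_b}.
\]
I would check this directly from the block form $H_Z^*=\begin{pmatrix} I\ox H_b\\ H_a\ox I\end{pmatrix}$, using $\chi\chi=\chi$. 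After binary lifting, Remark~\ref{Remark: chi rank} implies that $\BB(\chi R)$ is one-dimensional per coordinate. So $\ker\BB(H_Z^*)$ has dimension $n_an_b$ and is spanned by the vectors $\chi e_j$, whose binary forms are all-ones blocks on cluster $j$. These vectors are exactly the "multiply row $j$ by $\chi$" combinations used in the proof of Theorem~\ref{Theorem: Merge Z logical}.

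Next we push this kernel through $H_Z'^*$. Because every entry of $H_a'$ and $H_b'$ is $0$ or $1$, $H_Z'$ has entries in $\{0,1\}\subset\FF_2\subset R$, and $H_Z'^*(\chi v)=\chi\,H_Z'^*v$. Hence the image is $\chi\cdot\im_{\FF_2}(H_Z'^*)$, viewed as $\FF_2$-vectors with each $1$ replaced by a $\chi$-cluster. The map $v\mapsto\chi v$ from $\FF_2^{m}$ to $\BB(R^m)$ is injective, since $\BB(\chi)\neq0$. Therefore
\[
M=\dim_{\FF_2}\im(H_Z'^{\intercal})=\rank_{\FF_2}H_Z'.
\]

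Finally, we identify this with $\rank H_Z'$ in the ring sense defined in the preliminaries, using the largest nonvanishing minor. For a matrix with entries in $\{0,1\}$, every minor is itself $0$ or $1$ and agrees with the corresponding $\FF_2$ minor. So the ring rank coincides with the $\FF_2$ rank, and together with the previous step this gives $M=\rank H_Z'$.

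The main obstacle is the first step: showing that $\ker H_Z^*$ is exactly $\chi R^{n_an_b}$ and not larger. This requires the full-rank and binomial structure of the CC seeds, through Corollary~\ref{Corollary: ker and im of Ha}. It also requires that the intersection of the two kernels factors as a tensor product, which I would justify via the Künneth formula for odd prime $p$. A secondary subtlety is checking that $\chi$-multiples of distinct rows stay independent after lifting. This follows because distinct coordinates occupy disjoint clusters.
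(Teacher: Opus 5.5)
Your proposal is correct and follows the paper's own proof. Both arguments identify $\ker\BB(H_Z^*)$ with $\BB(\ker H_a\ox\ker H_b)=\BB(\chi R^{n_an_b})$ via Corollary~\ref{Corollary: ker and im of Ha}, use the $0/1$ entries of $H_Z'$ to write the image as $\chi\,H_Z'^{\intercal}$ applied to $\FF_2$-vectors, and read off $\rank H_Z'$; your explicit basis $\{\chi e_j\}$ and the observation that $0/1$ minors agree with their $\FF_2$ values simply make precise the paper's steps $K=\BB[\diag_{n_an_b}(\chi)]$ and $\rank\BB(\chi H_Z'^*)=\rank H_Z'$.
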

\begin{proof}
  Eq.~\eqref{eq: number of bits being extracted} for parallel surgery for CC codes reads
  \begin{equation}
    M = \dim \BB(H_Z'^*)[\ker \BB(H_Z^*)].
  \end{equation}
  The kernel of $\BB (H_Z^*)$ can be identified as $\BB(\ker H_a \ox \ker H_b)$. 
  Let $K$ 
  be the basis matrix of $\ker \BB(H_Z^*)$, we have
  \begin{equation}
    K = \BB [\diag_{n_an_b}(\chi)],
  \end{equation}
  as the basis matrices of $\ker H_a$ and $\ker H_b$ are given by $\diag_{n_a}(\chi)$ and $\diag_{n_b}(\chi)$, respectively (Coro.~\ref{Corollary: ker and im of Ha}). Then we have
  \begin{equation}
    M = \dim \BB(H_Z'^*)[\ker \BB(H_Z^*)] = \rank \BB(H_Z'^*)K.
  \end{equation}
  By Definition~\ref{Definition: CC surgery}, each entry of $H_Z'$ is either $0$ or $1$, which means that ${H_Z'}^*\diag_{n_an_b}(\chi) = \chi{H_Z'}^*$, leading to
  \begin{equation}\rank \BB(H_Z'^*)K = \rank \BB (\chi H_Z')^* = \rank {H_Z'}^* = \rank H_Z'.\end{equation}
\end{proof}

We also derive the number of logical qubits in the merged code, following Eq.~\eqref{eq: number of logical and gauge qubits in merged code}.

\begin{corollary}
  \label{Corollary: number of logicals of the merged code}
  The number of logical qubits in the merged code is
  \eqa{\widetilde{k} = k - M= 2n_an_b - \rank H_Z'.
  }
\end{corollary}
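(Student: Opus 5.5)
The plan is to evaluate the first formula in Eq.~\eqref{eq: number of logical and gauge qubits in merged code} directly. I take $\partial_1 = H_Z^*$ and the merged matrix $\widetilde{H_Z}=\begin{pmatrix} H_Z & \\ H_Z' & H_Z\end{pmatrix}$ from Definition~\ref{Definition: CC surgery}. I read the bracketed term in that formula as the redundancy created by the merge: the dependencies of $\widetilde{H_Z}$ that are not inherited from the $H_Z$ block of the data patch. Two quantities are then needed: $\dim\ker\BB(H_Z^*)$, and the dimension of $\ker[\BB(\widetilde{H_Z})]^\intercal$ beyond the inherited part.

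First, $\dim \ker \BB(\partial_1)=\dim\ker\BB(H_Z^*) = n_an_b$. The proof of the previous corollary, via Corollary~\ref{Corollary: ker and im of Ha}, shows that this kernel has basis matrix $K=\BB[\diag_{n_an_b}(\chi)]$. By Remark~\ref{Remark: chi rank}, each $\BB(\chi)$ has rank $1$. Hence every binary vector in the kernel has the form $\chi c$ with $c\in\FF_2^{n_an_b}$, and the kernel has dimension $n_an_b$.

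Second, I characterise the dependencies of $\widetilde{H_Z}$. A dependency is a pair $(u,v)$ of binary row-selection vectors with $uH_Z+vH_Z'=0$ on the data half and $vH_Z=0$ on the auxiliary half. The auxiliary condition forces $v\in\ker\BB(H_Z^*)$, so $v=\chi c$ by the first step. The data condition then requires $\chi\, cH_Z'$ to lie in the row space of $H_Z$. Because $H_Z'$ has $0/1$ entries, $\chi\, cH_Z'$ is a sum of rows of the clustered basis matrix $L$ in Eq.~\eqref{eq: logical basis matrix}, taken only from the $Z$-logical part. Theorem~\ref{Theorem: logical operator basis} and Corollary~\ref{Corollary: code para} say these rows are independent modulo $\im H_Z^*$. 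Therefore $\chi\, cH_Z'$ is a stabiliser only if it vanishes, that is, only if $cH_Z'=0$ over $\FF_2$. This is the same reduction already used in the previous corollary to obtain $M=\rank H_Z'$.

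It follows that the admissible $v$ form a space of dimension $n_an_b-\rank H_Z' = n_an_b - M$. For each admissible $v$, the vector $u$ is determined up to the inherited dependencies $\ker\BB(H_Z)^\intercal$. So the new redundancy equals $n_an_b$ from the auxiliary copy of $H_Z$ plus $n_an_b-M$ from the surviving $v$'s. Substituting, $\widetilde{k}=k+(2n_an_b-M)-n_an_b-n_an_b = k-M = 2n_an_b-\rank H_Z'$, using $k=2n_an_b$ from Corollary~\ref{Corollary: code para}.

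The main obstacle is the bookkeeping of the bracketed redundancy term. I must check that the auxiliary copy's own redundancy, equal in dimension to $\ker\BB(H_Z^*)$, is counted exactly once, so that it cancels against $-\dim\ker\BB(\partial_1)$ rather than being double counted. The other step needing care is the independence of the $\chi$-rows of $L$ modulo stabilisers. Here I will lean on the rank computation $k=\rank\BB(L)$ in Corollary~\ref{Corollary: code para}.
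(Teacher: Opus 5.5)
Your proposal is correct and is essentially the paper's proof. The paper packages your count of pairs $(u,v)$ into Lemma~\ref{Lemma: kernel of merged code}, taking $A=[\BB(H_Z)]^\intercal$, $B=[\BB(H_Z')]^\intercal$ and $K=\BB[\diag_{n_an_b}(\chi)]$. It then uses the same observation as you, that the clustered vectors $\chi {H_Z'}^*$ meet $\im H_Z^*$ only in zero, to obtain $\dim\ker[\BB(\widetilde{H_Z})]^\intercal=2n_an_b-M$.

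The only slip is verbal. The $n_an_b$ in $2n_an_b-M$ comes from the inherited data-patch redundancy (the freedom in $u$), not from the auxiliary copy, so the genuinely new redundancy is $n_an_b-M$. Your substitution $k+(2n_an_b-M)-n_an_b-n_an_b=k-M$ is nonetheless correct.
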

\begin{proof}
  Recall the formula for the number of logical qubits in the merged code given in Eq.~\eqref{eq: number of logical and gauge qubits in merged code}, for $\partial_1 = H_Z^*$
  \eqa{\widetilde{k} = k + \dim \ker[ \BB(\widetilde{H_Z})]^\intercal - 2\dim \ker [\BB({H_Z})]^\intercal,
  }
  where according to Lemma~\ref{Lemma: kernel of merged code},
  \eqa{\dim \ker[ \BB(\widetilde{H_Z})]^\intercal =&\ 2\dim \ker [\BB({H_Z})]^\intercal + \rank [\BB (H_Z)]^\intercal \\
    &- \rank
    \begin{pmatrix}
      [\BB (H_Z)]^\intercal & [\BB (H_Z')]^\intercal K
  \end{pmatrix}.}
  Here $K= \BB [\diag_{n_an_b}(\chi)]$ is the basis matrix of $\ker (\BB H_Z)^\intercal$.

  Again, as each entry of $H_Z'$ is either $0$ or $1$, we have ${H_Z'}^*\diag_{n_an_b}(\chi) = \chi{H_Z'}^*$, whose columns are clearly not in the row space of $H_Z^*$, and vice versa. Hence,
  \eqa{
    \rank
    \begin{pmatrix}
      [\BB (H_Z)]^\intercal & [\BB (H_Z')]^\intercal K
    \end{pmatrix} =&\ \rank [\BB (H_Z)]^\intercal\\
    &+ \rank \Lr{(\BB H_Z')^\intercal K}.
  }
  Note that \(\rank \Lr{(\BB H_Z')^\intercal K} = \rank H_Z' = M\), we have $\dim \ker [\BB(\widetilde{H_Z})]^\intercal = 2\dim \ker [\BB({H_Z})]^\intercal - \rank H_Z'$, leading us to
  \begin{equation}\widetilde{k} = k - \rank H_Z' = 2n_an_b - \rank H_Z'.\end{equation}
\end{proof}

The merge action of this parallel surgery has a symmetric action on the $X$-type gauge operators of the auxiliary code patch, which is straightforward from the conjugate transpose code 
of the merged code. That is with the $X/Z$-swapped analogue and with the roles of $\QQQ$ and $\QQQ^\aux$ interchanged
\begin{eqnarray}
  \begin{tikzcd}[]
    Q_2^{\mathrm{aux}}  & Q_{1}^{\mathrm{aux}} \arrow[swap,l, "H_Z"]  & Q_{0}^{\mathrm{aux}}\arrow[swap,l, "{H_X}^*"]\\
    Q_2  & Q_{1}\arrow[lu,"{H_Z'}"] \arrow[l, "H_Z"] & Q_0\arrow[l, "{H_X}^*"]\arrow[lu,"{H_X'}^*"]
  \end{tikzcd}.
\end{eqnarray}
Therefore, for each merge of $Z$-type logical operators in the data code patch, there is a corresponding reduction of number of gauge qubits in the auxiliary code patch. 

And this reduced number of gauge qubits in the merged code is given as
\begin{equation}\widetilde{r} = k - M= 2n_an_b - \rank H_X',
\end{equation}
by a similar argument as in the proof of the number of logical qubits in the merged code. Here we remind that the auxiliary code patch has $k$ gauge qubits before the merge as it is the same code as the data patch.

We now turn to the degree of parallelism enabled by the parallel surgery. A central primitive for PPM-induced logical operations is the joint measurement of the product of two logical operators~\cite{Game_of_surface_code}. For a CSS code encoding $k$ logical qubits, the maximum number of disjoint logical pairs is $k/2$, obtained by partitioning the $k$ logical qubits into pairs.
This upper bounds the number of merges that
can be performed in one surgery round by $k/2$.
\begin{definition}[Maximally parallel]
  If PPMs are required to act on disjoint pairs of same-type logical operators, then at most
  \begin{equation}
    M_{\max} = k/2
  \end{equation}
  merges can be performed within a single surgery round. We say that a surgery protocol is maximally parallel if it enables $k/2$ merges per round, for a CSS code encoding $k$ logical qubits.
\end{definition}

For CC codes, this upper bound is achieved provided that the product connection code $\PPP$ used in the surgery has full rank check
matrices. In particular, when
\begin{equation}
  \rank H_Z' = n_a n_b = k/2,
\end{equation}
all targetted logical pairs can be merged in a single surgery round,
thereby realising maximal parallel logical measurement.

\begin{theorem}[Maximally parallel $Z$-type merges for CC codes]
  \label{Theorem: max merge}
  Consider the code $\ccab$ encoding $k=2n_an_b$ logical qubits and we measure on $Z$ logical basis. At most
  \begin{equation}
    M_{\max} = k/2 = n_an_b
  \end{equation}
  merges can be performed within a single surgery round with parallel surgery.
  This bound is achieved whenever the product connection code satisfies $\rank H_Z' = n_an_b$.
\end{theorem}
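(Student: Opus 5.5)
The theorem has two parts: the bound $M\le n_an_b$ and its attainment when $\rank H_Z'=n_an_b$. Both will follow from the corollary $M=\rank H_Z'$ for parallel surgery on CC codes.

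\textbf{Upper bound.} I will bound $\rank H_Z'$ in two ways.

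First, by its shape. By Definition~\ref{Definition: product connection code}, $H_Z'$ has the same size as $H_Z$, namely $(n_an_b)\times(2n_an_b)$ over $R$. Its rank is therefore at most the number of rows, which gives $M=\rank H_Z'\le n_an_b$.

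Second, as a consistency check, I will compare with Corollary~\ref{Corollary: number of logicals of the merged code}. There $\widetilde k = k-M$, and each surviving merge must measure a joint observable $\bZ_i\otimes\bZ_j$ on a disjoint pair. I would note that the merged code still needs an independent $\bar X$-type representative for each measured pair. For example, $\bX_i\otimes\bX_j$ commutes with $\bZ_i\otimes\bZ_j$, and these operators are independent across disjoint pairs. This forces $\widetilde k\ge M$, i.e.\ $M\le k/2$, which agrees with the counting bound on disjoint pairs in the Definition of maximal parallelism.

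\textbf{Attainment.} Suppose $\rank H_Z'=n_an_b$.
\begin{itemize}
\item The corollary immediately gives $M=n_an_b=k/2$.
\item By Theorem~\ref{Theorem: Merge Z logical}, each of the $n_an_b$ independent rows of $H_Z'$, multiplied by $\chi$, yields a product of clustered $Z$ logicals measured as a stabiliser. Since the entries of $H_Z'$ lie in $\{0,1\}$, we have ${H_Z'}^*\diag(\chi)=\chi {H_Z'}^*$.
\item Rank $n_an_b$ over $R$ therefore means these $n_an_b$ products are independent elements of $\LLL_Z$.
\end{itemize}

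\textbf{Main obstacle.} The delicate step is passing from rank over $R$ to binary independence of the lifted measurements. Rank over $R=\FF_2[x]/(x^p+1)$ is defined by minors and could in principle differ from the count of independent $\chi$-multiples, because $\chi$ is a zero divisor. I will handle this by the identity $\chi\cdot r=\epsilon(r)\chi$, where $\epsilon$ is the augmentation $x\mapsto 1$. This identifies $\chi H_Z'$ with the binary matrix $\epsilon(H_Z')$ placed in the span of $\chi$ (recall $\BB(\chi)$ has rank one, Remark~\ref{Remark: chi rank}). For a $0/1$ matrix, $\epsilon(H_Z')=H_Z'$ read over $\FF_2$.

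I will then argue that the $R$-rank of a $0/1$ matrix equals its $\FF_2$-rank. The argument: a nonzero $\FF_2$ minor stays nonzero in $R$, since $\FF_2\subset R$; conversely, any $R$-minor of a $0/1$ matrix is an $\FF_2$ element. This completes the identification, and hence $M=n_an_b$ independent merges are performed in a single round.
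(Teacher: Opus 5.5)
Your proposal is correct and follows essentially the same route as the paper. Both reduce the statement to the corollary $M=\rank H_Z'$, bound this rank by the $n_an_b$ rows of $H_Z'$, and read off attainment when $\rank H_Z'=n_an_b$. Your augmentation argument ($\chi r=\epsilon(r)\chi$, together with the fact that the $R$-rank of a $0/1$ matrix equals its $\FF_2$-rank) justifies the step $\rank \BB(\chi H_Z')^*=\rank H_Z'$, which the paper's corollary asserts without argument; the paper additionally notes that $\rank H_Z'=n_an_b$ holds iff $H_a'$ or $H_b'$ is full rank, and exhibits $H_a'=I_{n_a}$, $H_b'=I_{n_b}$ to show the bound is actually attained.
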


\begin{proof}
By Corollary~\ref{Corollary: number of logicals of the merged code}, the number of independent merges is determined by the rank of $H_Z'$. When $\rank H_Z' = n_an_b$, the merged code admits $n_an_b$ independent joint $Z$-type logical measurements, thereby achieving the upper bound. As we require $H_a'$ and $H_b'$ to be square matrices with only $0$ or $1$ being the entries, $H_Z'$ is full rank if and only if at least one of $H_a'$ and $H_b'$ is full rank.

For example, a trivial choice for $H_a'$ and $H_b'$ could be
\begin{equation}
H_a' = I_{n_a} \tand H_b' = I_{n_b}.
\end{equation}
The logical action of such parallel surgery over a CC code is to merge every $i$\textsuperscript{th} $Z$-type logical operator in the left sector with the $i$\textsuperscript{th} $Z$-type logical operator in the right sector (Fig.~\ref{Figure: key_results} (b)).
\end{proof}

To illustrate the advantage enabled by maximal parallelisation, we compare the space-time overhead for performing a single merge of our protocol with state-of-the-art logical measurement protocols for qLDPC codes in Table~\ref{Table: space_time_overhead}.
The reduction in time overhead is immediate, since up to $k/2$ merges can be performed within a single surgery round.
Importantly, the improvement is not achieved merely by trading time for additional space: the overall space-time overhead is also reduced for CC codes. Throughout this comparison, we define the space overhead as the total number of auxiliary physical qubits required by the
protocol, including both auxiliary data qubits and auxiliary check qubits.

\begin{table}[t]
  \begin{tabular}{c c c c}
    \toprule
    Protocol & Space & Time
    & Space-time \\
    \midrule
    This work  & $2N$ &$2d/k$  & $4Nd/k$\\
    Gauging~\cite{Gauging_logical}  & $\mathcal{O}(d\log^2d)^\dagger$ &$d$ & $\mathcal{O}(d^2\log^2d)$ \\
    Extractor~\cite{Extractor}  & $\mathcal{O}(N\log^3 N)$  &  $d$ & $\mathcal{O}(Nd\log^3N)$ \\
    \bottomrule
  \end{tabular}
\caption{Space-time overhead comparison between maximally parallel surgery for CC codes, gauging logical qubits~\cite{Gauging_logical},
    and the extractor protocol~\cite{Extractor}. We consider the time overhead required to perform a single merge (either a single logical measurement or a joint measurement of two same-type logical operators).
    When maximal parallelisation is available, the effective time overhead
    is reduced by a factor of $k/2$.
    Specifically for CC codes listed in Table~\ref{Table: example code para}, we observe that $N\approx kd$, leading to a space overhead around $2kd$ and total space-time overhead around $4d^2$ for a single merge using the maximally parallel surgery.
    }

  \label{Table: space_time_overhead}
\end{table}

Specifically, we compare the space-time overhead required to perform a single merge using the maximally parallel surgery protocol for the $\code{136,8,14}$ CC code with that of the extractor protocol applied to the $\code{144,12,12}$ Gross BB code, as shown in Table~\ref{Table: space_time_overhead_gross_136}. Such concrete, non-asymptotic, numbers can be helpful to assess the performance of scheme, reminiscent of the situation of the concrete numbers that have
been suggested for the overheads of fault-tolerantly implementing
certain protocols such as factoring $2048$ bit RSA integers
fault-tolerantly and other protocols \cite{GregGidneyRSA,IcebergRSA,Game_of_surface_code}.

\begin{table}[ht]
  \begin{tabular}{c c c c}
    \toprule
    Code & Space (data, check) & Time
    & Space-time\\
    \midrule
    $\code{136,8,14}$  & $272\ (136,\ 136)$ &$3.5$  & $952$\\
    $\code{144,12,12}$  & $103\ (54,\ 49)$ &$12$ & $1236$ \\
    \bottomrule
  \end{tabular}
\caption{Space-time overhead comparison between maximally parallel surgery for the $\code{136,8,14}$ CC code and the extractor protocol applied to the $\code{144,12,12}$ Gross BB code~\cite{Extractor,Improved_surgery}. We consider the time overhead required to perform a single merge (either a single logical measurement or a joint measurement of two same-type logical operators). 
}
  \label{Table: space_time_overhead_gross_136}
\end{table}

\begin{remark}
  Recent work of Xu \emph{et al.}~\cite{Qian_high_rate} proposes a randomised, parallel surgery scheme with low overhead. By contrast, our parallel surgery for CC codes provides an explicit construction of the auxiliary code patch with a fixed per-round overhead.
\end{remark}

\subsection{Limitations of product surgery and hybrid gadget}
\label{Subsec: limitations}

The actual merge action being performed is directed by the product connection code $\PPP$. As illustrated in Corollary~\ref{Corollary: merge any two Z}, the fact that we can choose arbitrary $H_a'$ and $H_b'$ with $1$'s (the multiplicative identity over the ring $R$) to be the only non-zero entries allows us to perform merge on any two same-type logical operators if they are in different sectors and one can merge at most $k/2$ pairs of logical operators in one round. 
For logical operators supported within the same sector, the situation is more constrained. Two such logical operators can be merged if and only if they are aligned along the same row or column in the ring-level Tanner graph of the product construction. This alignment condition reflects the tensor-product structure of the auxiliary system and is consistent with previously observed constraints in logical operation protocols based on product-constructed auxiliary systems~\cite{Qian_fast_and_para,Qian_high_rate}.

\begin{corollary}[Merge two $Z$-type logical operators]
  \label{Corollary: merge any two Z}
  Let $\bZ_\alpha$ and $\bZ_\beta$ be two logical $Z$ operators of a CC code.
  If they are supported in different sectors, or are aligned in the same
  row or column within a sector, then one can construct a connection
  product code $\PPP$ with check matrix $H_Z'$ such that there exists
  $v\in \rowsp H_Z'$ satisfying
  \begin{equation}
    \supp(\bZ_\alpha \otimes \bZ_\beta)= \chi v .
  \end{equation}
\end{corollary}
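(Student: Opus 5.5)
The plan is to construct $\PPP=\LP(H_a',H_b')$ explicitly, with $0/1$ seed matrices, so that one row of $H_Z'=\left(I\ox {H_b'}^* \,|\, {H_a'}^*\ox I\right)$ has $1$'s exactly at the two ring coordinates carrying $\bZ_\alpha$ and $\bZ_\beta$. Label the left-sector ring qubits by pairs $(i,j)\in[n_a]\times[n_b]$ (from $A_1\ox B_0$) and the right-sector ones likewise (from $A_0\ox B_1$). Index the $Z$-checks $Q_2=A_1\ox B_1$ by $(i',j')$. Then row $(i',j')$ of $H_Z'$ has left entries $(H_b')^*_{j',j}\delta_{i,i'}$ and right entries $(H_a')^*_{i',i}\delta_{j,j'}$. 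Since the entries are $0$ or $1$, conjugation does nothing, so these are just transposed entries. By Theorem~\ref{Theorem: logical operator basis}, every basis logical $\bZ$ is $\chi$ placed at one such coordinate. So it suffices to find a vector $v$ with $1$'s at exactly the two target coordinates and to show $v\in\rowsp H_Z'$. Then $\chi v$ equals the support of $\bZ_\alpha\otimes\bZ_\beta$, exactly as in the proof of Theorem~\ref{Theorem: Merge Z logical}.

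\textbf{Case 1: different sectors.} Take $\bZ_\alpha$ at left $(i_1,j_1)$ and $\bZ_\beta$ at right $(i_2,j_2)$. Choose the check $(i',j')=(i_1,j_2)$. Set $H_b'$ to have a single $1$ at the entry making $(H_b')^*_{j_2,j_1}=1$, and set $H_a'$ to have a single $1$ giving $(H_a')^*_{i_1,i_2}=1$; all other entries are zero. Row $(i_1,j_2)$ then has left support $\{(i_1,j_1)\}$ and right support $\{(i_2,j_2)\}$, which is exactly $v$. No further condition is needed, since any pair of $0/1$ matrices defines an LP code, and Lemma~\ref{Lemma: surgery for product codes} makes the surgery valid automatically.

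\textbf{Case 2: same sector, aligned.} For the left sector with a shared row, $(i,j_1)$ and $(i,j_2)$, take $H_a'=0$. Let $H_b'$ have a single row $j'$ with $1$'s in columns $j_1$ and $j_2$. Row $(i,j')$ then gives $v$. For the left sector with a shared column $j$ but different $i_1,i_2$, a single row of $H_Z'$ touches only one left index $i'$ through $\delta_{i,i'}$, so one row is not enough. Instead I would sum rows $(i_1,j')$ and $(i_2,j')$ with $H_a'=0$ and $H_b'$ having a $1$ at $(j',j)$. The sum gives $v\in\rowsp H_Z'$, which is all the statement asks for. The right sector is symmetric, with the roles of $H_a'$ and $H_b'$ swapped.

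The main obstacle is this column-aligned same-sector case: the row-sum must not produce extra $\chi$'s. It could also be organised instead through the right sector: route both logicals via a right-sector coordinate $(i',j)$ reached from both $(i_1,j')$ and $(i_2,j')$. The $\chi$ contributions there cancel because of the mod-2 addition. The final step is to confirm, via the binary-lift argument of Theorem~\ref{Theorem: Merge Z logical}, that multiplying by $\chi$ kills the accompanying $H_Z$ half. This holds because every entry of $H_Z$ is a binomial annihilated by $\chi$.
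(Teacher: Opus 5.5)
Your Case 1 and your shared-first-index case coincide with the paper's constructions. One small slip there: by your own formula for the left entries, it is row $j'$ of ${H_b'}^*$, i.e.\ column $j'$ of $H_b'$, that must carry the two $1$'s.

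\textbf{The gap is in the case you single out as the main obstacle:} two left-sector logicals $(i_1,j)$ and $(i_2,j)$ sharing the second index. With $H_a'=0$ and a single $1$ at $({H_b'}^*)_{j',j}$, rows $(i_1,j')$ and $(i_2,j')$ of $H_Z'$ are the unit vectors $e_\alpha$ and $e_\beta$ themselves. By the argument of Theorem~\ref{Theorem: Merge Z logical}, $\bZ_\alpha$ and $\bZ_\beta$ then each become products of merged-code stabilisers. So $v=e_\alpha+e_\beta\in\rowsp H_Z'$ is literally true, but your surgery performs two single-qubit $Z$ measurements, not a merge of $\bZ_\alpha$ with $\bZ_\beta$. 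It destroys exactly the information a joint PPM must keep; it could not prepare a Bell pair or drive the PPM-based CNOT.

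The same device ($H_a'=0$, $H_b'=I$) puts $e_\alpha+e_\beta$ in $\rowsp H_Z'$ for \emph{any} two same-sector logicals, aligned or not. Under your reading the alignment hypothesis is therefore vacuous, which contradicts the paper's claim that alignment is needed. The real content is that $\bZ_\alpha\otimes\bZ_\beta$ is produced without $\bZ_\alpha$ or $\bZ_\beta$ becoming a stabiliser. The cancellation of stray $\chi$'s is not where the difficulty lies.

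\textbf{The fix is the routing argument you mention only as an aside; it should be the main construction.} Keep $({H_b'}^*)_{j',j}=1$ and also set $({H_a'}^*)_{i_1,i''}=({H_a'}^*)_{i_2,i''}=1$. Rows $(i_1,j')$ and $(i_2,j')$ are then $e_\alpha+e_\gamma$ and $e_\beta+e_\gamma$, where $\gamma$ is the right-sector coordinate $(i'',j')$. It must carry the check index $j'$, not $j$ as you wrote, unless you take $j'=j$. Their sum is $v$.

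Since $e_\alpha$, $e_\beta$ and $e_\gamma$ occur in no other rows, any combination containing $e_\alpha$ but not $e_\gamma$ must also contain $e_\beta$. Hence neither $e_\alpha$ nor $e_\beta$ lies in $\rowsp H_Z'$; the price is that $\bZ_\alpha\bZ_\gamma$ is measured as well. This is precisely the paper's construction, with $i''=j'=1$. The right sector is handled symmetrically by routing through a left-sector coordinate, with the roles of $H_a'$ and $H_b'$ exchanged.
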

\begin{proof}[Proof (Informal)]
  The product connection code $\PPP$ is a LP code with the same block size as the data code and inherits a product structure. Its construction can be read off directly at the ring level of the Tanner graph. One begins with the target quantum connectivity pattern and then reconstructs the classical seed codes $H_a'$ and $H_b'$ according to the product construction rules of the Tanner graph formalism.

In Appendix~\ref{Appendix: merge two Z logical}, we give an explicit algebraic construction of the product connection code $\PPP = \lpAB$ corresponding to an arbitrary compatible merge.

\end{proof}

As a guiding example, we consider the $\code{24,8,3}$ CC code. Although our parallel product surgery protocol permits only certain PPM configurations, these allowed configurations—together with the fold-transversal and automorphism-induced logical gates of the code—are already sufficient to generate the full Clifford group on four of the eight data logical qubits, while the remaining four logical qubits serve as auxiliaries. Details are provided in Section~\ref{Section: clifford}.

Motivated by this observation, we introduce a \emph{hybrid logical measurement gadget}. The central idea is as follows: whenever a requested collection of PPMs contains a compatible sub-configuration (i.e., one realisable via product surgery), we first implement that subset using parallel product surgery, thereby benefiting from its constant overhead and intrinsic parallelism. Any remaining PPMs that fall outside the compatible subset are then carried out using standard logical measurement routines.

This hybrid strategy preserves full generality, while systematically exploiting the low-overhead structure of parallel surgery wherever possible. We refer to the resulting reduction in space overhead, relative to implementing all PPMs using only standard logical measurement routines, as a \emph{boost}. Configurations that contain such a compatible subset are termed \emph{boostable cases}.

Specifically, for the $\code{136,8,14}$ CC code, Figure~\ref{Figure: violin_sub} illustrates that, for configurations compatible with product surgery, boosting gauging-based logical measurements~\cite{Gauging_logical} with parallel product surgery reduces the auxiliary space overhead by approximately $150$ physical qubits on average when measuring combinations of single logical operators and pairs of joint logical operators.
For this $\code{136,8,14}$ CC code, there are 
\[
\sum_{m=0}^{4}
\frac{8!}{2^{m} m! (8-2m)!} 2^{8-2m} - 1
= 7192
\]
non-trivial non-overlapping combinations of single logical operators and pairs of two logical operators in total. 
Among these, $867$ combinations ($867/7192 \simeq 12\%$) are boostable via parallel surgery, meaning that each of these configurations contains at least one subset of measurements compatible with product surgery.

\begin{remark}[Logical PPMs of the $\code{136,8,14}$ CC code via gauging-based logical measurements~\cite{Gauging_logical}]
We begin with the clustered logical operator basis in order to keep track of logical qubits and make fair comparison with the parallel product surgery.
For any requested PPM, we form the target logical Pauli
$L$ (a single $L_i$, or for a pair the product $L_i\otimes L_j$) in binary, and then \emph{first replace it by a minimum Hamming weight
representative in its stabiliser coset}:
\[
L^\star \in L\mathcal{S}\quad\text{minimizing}\quad \mathrm{HW}(L^\star),
\]
obtained by adding stabilisers to shrink support while preserving the logical operator.

We then instantiate the gauging measurement gadget of Williamson--Yoder (Alg.~1 of \cite{Gauging_logical}):
take $V=\supp(L^\star)$, choose an auxiliary constant-degree graph $G$ on $V$ satisfying the
design constraints in Remark~2 of~\cite{Gauging_logical} (connectivity and expanding property, Cheeger constant $h(G)\geq1$), introduce one edge qubit per edge of the graph $G$, perform the gauging/ungauging measurements of
Algorithm~1 to extract the eigenvalue of $L^\star$ (hence of $L$), with the standard byproduct inferred from the edge outcomes. 

The reported space overhead is the number of auxiliary edge qubits used.
\end{remark}

\begin{example}[Largest boost]
    The largest reduction in space overhead occurs for parallel $Z$-basis PPMs on the four joint logical-operator pairs
    $(1,7)$, $(2,8)$, $(3,5)$, and $(4,6)$ for the $\code{136,8,14}$ CC code.
    Using the gauging-based logical measurement gadget~\cite{Gauging_logical}, the total space overhead for measuring these four pairs is $758$.
    Since this configuration is compatible with maximally parallel surgery, the same set of PPMs can instead be implemented with a fixed space overhead of $272$, yielding a space overhead saving of $486$.
\end{example}

\begin{figure}[]
  \centering
  \includegraphics[width=\linewidth]{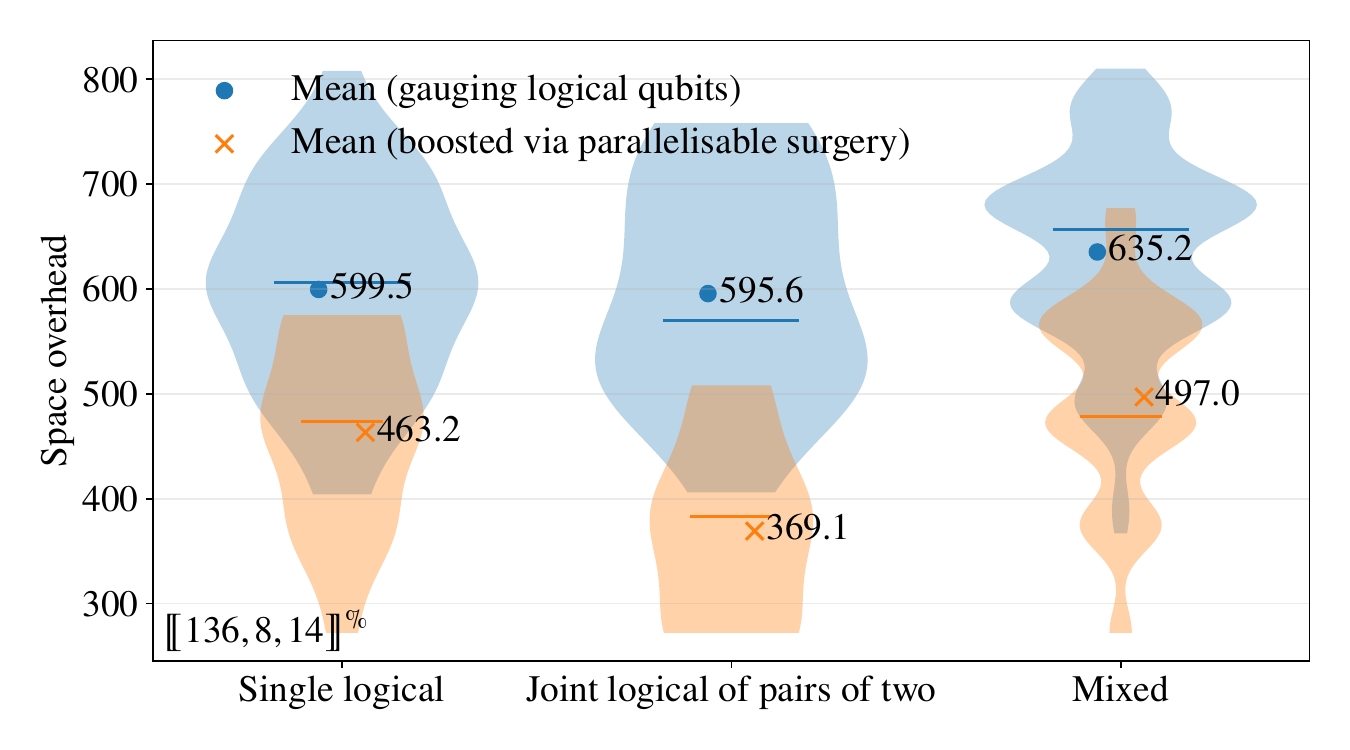}
  \caption{Comparison of space overhead distributions between gauging-based logical measurements~\cite{Gauging_logical} (wide blue violins) and parallel surgery-boosted gauging (narrow orange violins) for the $\code{136,8,14}$ CC code in the $Z$ basis. Each violin represents the distribution over all allowed combinations of logical measurements that incur a given space overhead. The bullet (cross) indicates the mean value for each distribution and the horizontal lines are for the medians. \textit{Single logical}: measurement of different combinations of single logical operators within one surgery round. \textit{Joint logical of pairs of two}: measurement of different combinations of non-overlapping joint logical operators pairs within one surgery round. \textit{Mixed}: combinations consisting of both single logical measurements and joint measurements on pairs of logical qubits as long as no measurement overlaps on any logical qubit. $^\%$Only configurations compatible with parallel product surgery are included in the comparison.}
  \label{Figure: violin_sub}
\end{figure}

\section{Fault tolerance of parallel product surgery}
\label{Section: FT}

A surgery procedure for CSS codes, as defined in Definition~\ref{Definition: code surgery}, is fault-tolerant if
(i) all homomorphisms appearing in
Eq.~\eqref{eq: CSS surgery} are LDPC, and
(ii) the procedure preserves the phenomenological distance at $\Omega(d)$.
The first condition ensures bounded-degree circuit connectivity, thereby limiting error propagation under circuit-level noise.
The second guarantees that error protection during surgery remains proportional to the distance $d$ of the underlying code~\cite{Qian_high_rate,Gottesman}.

Parallel product surgery satisfies the first condition provided that both the data code $\QQQ$ and the product connection code $\PPP$ are LDPC. In particular, for CC codes with the parallel surgery construction, both $\QQQ$ and $\PPP$ are LDPC by design.

In the following, we explicitly verify that parallel surgery for CC codes preserves the LDPC structure, and we provide numerical evidence that for all CC codes with $8$ logical qubits listed in Table~\ref{Table: example code para}, the merged code distance does not decrease.
\begin{remark}
  In Appendix~\ref{Appendix: HGP distance preserving}, we show rigorously that the parallel product surgery is distance preserving for HGP codes, i.e, product surgery is fault-tolerant for HGP codes as long as both data HGP code $\QQQ$ and connection HGP code $\PPP$ are LDPC.
\end{remark}

\subsection{Fault tolerance of maximally parallel surgery for CC codes}

CC codes are LDPC, as shown in Eq.~\eqref{eq: check weight}, since the
stabiliser check weight is a constant independent of the size of the parity-check matrices. Consequently, all homomorphisms defining both the data patch $\QQQ$ and the auxiliary patch
$\UUU = \QQQ^\aux = \QQQ$ are LDPC.

The product connection code $\PPP$ are defined such that its check weight
\begin{equation}
  W_\PPP = \# \text{ of logical operators involved in a single merge}.
\end{equation}
is a constant. In the most common
scenarios, we measure either a single logical operator or the product
of two logical operators, corresponding to $W_\PPP = 1$ or $W_\PPP = 2$,
respectively.

The parallel surgery of CC codes produces the merged code as a CSS code with parity check matrices
\begin{equation*}
  \widetilde{H_X}=
  \begin{pmatrix}
    H_X & H_X'\\
    & H_X
  \end{pmatrix},
  \qquad
  \widetilde{H_Z}=
  \begin{pmatrix}
    H_Z & \\
    H_Z' & H_Z
  \end{pmatrix}.
\end{equation*}

For CC codes with $k=8$ in Table~\ref{Table: example code para}, there are $2^{4+4} = 256$ distinct choices of $(H_X',H_Z')$.
This follows from the fact that $\PPP=\lpAB$, and for $k=8$, the seed matrices $H_a'$ and $H_b'$ are $2\times 2$ matrices whose entries are either $0$ or $1$.

For each CC code with parity check matrices $(H_X,H_Z)$, we exhaustively estimate the distance of all $256$ corresponding merged code $(\widetilde{H_X},\widetilde{H_Z})$ using \texttt{QDistRnd}~\cite{QDistRand}. In all cases tested, the merged code distance is no smaller than that of the original data code. Refer to Appendix~\ref{Appendix: numerical code dist} for accuracies of numerically estimated code distance.

\section{Fault-tolerant Clifford operations: Explicit constructions and case studies}
\label{Section: clifford}

Fault-tolerant computation on stabiliser codes is naturally phrased in terms of how an operation acts by conjugation on Pauli operators. The $m$-qubit Clifford group $\CCC_m$ is the normaliser of the Pauli group
$\mathcal{P}_m=\langle X_i,Z_i\rangle_{i=1}^m$ in the unitary group, i.e.,
$U\in\CCC_m \Leftrightarrow U\mathcal{P}_mU^\dagger=\mathcal{P}_m$.
For a stabiliser code, a logical Clifford operator is defined analogously as an unitary that preserves the logical Pauli group under conjugation, and is therefore fully specified by its action on a generating set of logical $\bX_i$ and $\bZ_i$. Up to global phase, the Clifford group is generated by elementary single-qubit and entangling gates~\cite{NielsenChuang},
\begin{equation}
\CCC_m/\mathrm{U}(1)\cong \langle H_i,S_i,\cnot{i}{j}\rangle,
\ i,j\in\{1,\dots,m\},
\end{equation}
and we use an equivalent generating set tailored to our available primitives (paired phase gates $\bsgate_i\bsgate_j^\dagger$, a global Hadamard and logical $\mathit{CNOT}$ gates) later in this section (Theo.~\ref{Theorem: full clifford group}).

Clifford processing can be driven entirely by layers of commuting PPMs together with classical feed-forward and Pauli-frame tracking, with auxiliary logical qubits repeatedly prepared in Pauli eigenstates and measured out~\cite{Gottesman1997,Game_of_surface_code}.
In the CC code setting, the required Pauli measurements are supplied by parallel product surgery, providing highly parallel fault-tolerant logical PPMs at fixed overhead, while code symmetries provide additional logical operations via fold-transversal and automorphism-induced operations.

In the remainder of this section, we make this explicit: we give concrete Clifford gadgets for the $\code{24,8,3}$ CC code by combining parallel surgery with logical $\mathit{SWAP}$ gates from automorphisms and fold-transversal $\mathit{CZ\!-\!S}$ / $\mathit{H\!-\!SWAP}$ gadgets, and show how four data logical qubits (with four reusable logical auxiliaries) realise a complete Clifford toolbox.

\subsection{Clifford operations from parallel surgery: the $\code{24,8,3}$ toy model}
\label{Section: clifford toy model 24 8 3}

The $\code{24,8,3}$ CC code encodes eight logical qubits.
We partition them into four \emph{data} logical qubits and four \emph{auxiliary} logical qubits:
we label the logical qubits $2,4,6,8$ as data, and the logical qubits  $1,3,5,7$ as auxiliary. 

The auxiliary logical qubits are repeatedly initialised in 
Pauli eigenstates, particularly the eigenvectors $\ket{\bar{0}}$ or $\ket{\bar{+}}$. They are used to mediate entangling operations, like logical $\mathit{CNOT}$ in Figure~\ref{Figure: CNOT_surgery_framework}, and then measured and discarded, with all byproduct operators tracked in the Pauli frame.

We use five key ingredients, where the first two are primitives of the latter ones:

\begin{enumerate}[label=(\roman*),leftmargin=2.0em]
  \item \emph{Parallel product surgery}, which implements highly parallel logical PPMs with constant space overhead together with classical feed-forward. It serves as the basic primitive for parallel logical measurements, parallel logical state initialisation, and the parallel logical $\mathit{CNOT}$ constructions used below.
  \item \emph{Automorphism-induced logical $\mathit{SWAP}$ gates}, which allow us to fault-tolerantly permute logical qubits, as illustrated in Appendix~\ref{Appendix: aut_SWAP}.
  \item \emph{Maximally parallel logical $\mathit{CNOT}$} gates between arbitrary pairs of data logical qubits, implemented within the parallel surgery framework together with the logical $\mathit{SWAP}$ gates. As outlined in detail in Appendix~\ref{Appendix: CNOTs_24_8_3}, any two pairs of logical $\mathit{CNOT}$ gates can be implemented in parallel for these four logical qubits at fixed space overhead.
  \item \emph{Transversal $\bar S_i\bar S_j^\dagger$ operations on arbitrary data logical qubit pairs.} They are obtained by compiling the fold-transversal $\mathit{CZ\!-\!S}$ gadget with parallel state initialisation and logical $\mathit{SWAP}$ gates (App.~\ref{Appendix: CZ-S} and App.~\ref{Appendix: aut_SWAP}).
  \item \emph{Global Hadamard $\bar H^{\otimes 8}$ (and hence $\bar H^{\otimes 4}$ on the data subsystem).} It is obtained by compiling fold-transversal $\mathit{H\!-\!SWAP}$ gadget automorphism-induced logical $\mathit{SWAP}$ gates (App.~\ref{Appendix: h-swap} and App.~\ref{Appendix: aut_SWAP}).
\end{enumerate}
Altogether, we have the following set of logical operations
\begin{equation*}\LR{\bX_i,\bZ_i,\bcnot{i}{j},\bsgate_i\bsgate^\dagger_j,\bar{\hgate}^{\otimes 4}},\end{equation*}
for the data logical qubits $2,4,6,8$. This is enough to generate the complete Clifford group of four logical qubits fault-tolerantly up to global phase, following Theorem~\ref{Theorem: full clifford group}.
\begin{theorem}
  \label{Theorem: full clifford group}
  Let $m\ge 3$. Suppose the available operations on $m$ qubits include $X_i,Z_i$ for all $i$, arbitrary $\cnot{i}{j}$ for all $i\neq j$, the paired phase gates $\sgate_i\sgate_j^\dagger$ for all $i\neq j$, and the global Hadamard $\hgate^{\otimes m}$.
  Then these gates generate the full $m$-qubit Clifford group 
  up to global phase:
  \begin{equation*}
    \CCC_m/\mathrm{U}(1) \cong \big\langle X_i,Z_i,\cnot{i}{j},\sgate_i\sgate_j^\dagger,{\hgate}^{\otimes m}\big\rangle .
  \end{equation*}
\end{theorem}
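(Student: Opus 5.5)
Since the standard generating set $\langle \hgate_i,\sgate_i,\cnot{i}{j}\rangle$ already gives $\CCC_m/\mathrm{U}(1)$, it suffices to show that every single-qubit $\hgate_i$ and $\sgate_i$ lies in the group $\mathcal{G}=\langle X_i,Z_i,\cnot{i}{j},\sgate_i\sgate_j^\dagger,\hgate^{\otimes m}\rangle$, up to phase. I would work throughout in the symplectic picture $\CCC_m/\mathcal{P}_m\cong\Sp(2m,\FF_2)$. The Paulis $X_i,Z_i$ are explicitly in $\mathcal{G}$, so I only need to reach every symplectic matrix. Concretely, the task is to manufacture one single-qubit phase gate and one single-qubit Hadamard (modulo Paulis), and then conjugate these to every position with $\mathit{SWAP}$s.

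First, $\mathit{SWAP}_{i,j}=\cnot{i}{j}\cnot{j}{i}\cnot{i}{j}\in\mathcal{G}$, so $\mathcal{G}$ contains all qubit permutations. Next, conjugating $\sgate_1\sgate_2^\dagger$ by $\cnot{1}{2}$ gives, via the symplectic calculation (with $S:X\mapsto Y$ and $\cnot{1}{2}:X_1\mapsto X_1X_2$, $Z_2\mapsto Z_1Z_2$), a $\cz_{12}$-type term together with local phases. Multiplying back by $\sgate_1\sgate_2^\dagger$ should leave, up to Paulis, either $\cz_{12}$ or $\cz_{12}$ times $\sgate_1\sgate_2$-type factors. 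I would carry out this matrix computation explicitly. The goal is to isolate $\cz_{ij}$ and an even product of phases such as $\sgate_i\sgate_j$, noting that $\sgate_i\sgate_j^\dagger\cdot Z_j=\sgate_i\sgate_j$ up to phase.

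The key step, and the expected obstacle, is producing a single $\sgate_i$, because every generator preserves a parity-type invariant: phase gates come in pairs. I would try to break this using $m\ge3$ together with $\cz$, or with $\hgate^{\otimes m}$ conjugation. Conjugating by $\hgate^{\otimes m}$ turns $\sgate_i\sgate_j$ into $\sqrt{X}_i\sqrt{X}_j$ and $\cz_{ij}$ into an $X$-controlled $X$. The plan is to combine $\cz_{12}$ with $\cnot{}{}$s to implement $\sgate_1\sgate_2\sgate_3\cdots$ on a chosen set, and then use the identity that $\cnot{1}{2}(\sgate_2)\cnot{1}{2}$ relates $\sgate_2$ to $\sgate_1\sgate_2\cz_{12}$. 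Conjugating the pair $\sgate_2\sgate_3$ by $\cnot{1}{2}$ yields $\sgate_1\sgate_2\sgate_3\cz_{12}$. Multiplying by $\cz_{12}$ and $\sgate_2\sgate_3$ then gives the lone $\sgate_1$ up to a Pauli; the third qubit is used precisely here. Once the lone $\sgate_1$ is obtained, permutations give every $\sgate_i$.

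For the Hadamard, I would use $\hgate_1=\hgate^{\otimes m}\cdot(\hgate_2\cdots\hgate_m)$ and reduce the problem to building the Hadamard on the complementary set. A cleaner route is the decomposition $\hgate=\sgate\sqrt{X}\sgate$ up to phase, where $\sqrt{X}_1=\hgate^{\otimes m}\sgate_1\hgate^{\otimes m}$. This is in $\mathcal{G}$ once $\sgate_1$ is, and it yields $\hgate_1$. The generating set then contains $\hgate_i$, $\sgate_i$ and $\cnot{i}{j}$, which proves the isomorphism; the inclusion $\mathcal{G}\subseteq\CCC_m$ is immediate.
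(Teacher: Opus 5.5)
Your overall reduction (Paulis plus symplectic action, $\mathit{SWAP}$s from CNOTs) and your Hadamard step $\hgate_1\equiv\sgate_1(\hgate^{\otimes m}\sgate_1\hgate^{\otimes m})\sgate_1$ match the paper. The construction of a lone $\sgate_1$, however, does not work as written.

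\textbf{What your step 2 actually produces.} The computation you defer in step 2 does not give what you expect. $\cnot{1}{2}$ sends $x_2\mapsto x_1\oplus x_2=x_1+x_2-2x_1x_2$. So $\cnot{1}{2}\,\sgate_1\sgate_2^\dagger\,\cnot{1}{2}$ is diagonal with phase $i^{x_1-(x_1\oplus x_2)}=(-i)^{x_2}(-1)^{x_1x_2}$, i.e.\ it equals $\sgate_2^\dagger\cz_{12}$. Multiplying back by $(\sgate_1\sgate_2^\dagger)^{\pm1}$ then gives $\sgate_1^{\pm1}\cz_{12}$ up to Paulis: a lone phase glued to a $\cz$, not $\cz_{12}$ alone and not $\cz_{12}\sgate_1\sgate_2$. (So the obstruction is not phase-parity; such odd elements already appear on two qubits.)

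\textbf{Your key step does not use qubit 3.} $\sgate_3$ commutes with $\cnot{1}{2}$ and everything else is diagonal, so
$\cnot{1}{2}(\sgate_2\sgate_3)\cnot{1}{2}\,(\sgate_2\sgate_3)^{-1}=\cnot{1}{2}\sgate_2\cnot{1}{2}\,\sgate_2^{-1}=\sgate_1\cz_{12}$.
This is the same two-qubit element, with qubit $3$ a spectator. Your final ``multiply by $\cz_{12}$'' therefore presupposes a lone $\cz_{12}$, which you never construct. Given the $\sgate_1\cz_{12}$ you do have, a lone $\cz_{12}$ is equivalent to the lone $\sgate_1$ you are trying to build, so the argument is circular.

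\textbf{Two-qubit manipulations cannot fix this.} CNOT conjugation acts on $\Lambda(C)$ by congruence. The conjugates of $\pi(\sgate_1\sgate_2^\dagger)$ by CNOT circuits on qubits $1,2$ are $\Lambda(C)$ with $C\in\{I_2,\ E_{12}+E_{21}+E_{22},\ E_{11}+E_{12}+E_{21}\}$. Every product of these satisfies $C_{11}+C_{12}+C_{22}=0$, which both $\cz_{12}$ ($C_{12}=1$) and $\sgate_1$ ($C_{11}=1$) violate. You need a genuinely three-qubit relation.

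\textbf{How the paper does it, and a repair of your route.} The paper uses three $A_k\in GL(3,2)$ with $C_1+C_2+C_3=E_{1,1}$. Since the conjugates of $D_{12}$ are exactly $uu^\intercal+vv^\intercal$ with $u\neq v$ nonzero, this is the identity $\sum_{0\neq x\in\FF_2^3}xx^\intercal=0$. A minimal repair of your approach:
\begin{enumerate}
\item Take $G=\sgate_1^\dagger\cz_{13}$, from the same two-qubit computation on qubits $1,3$.
\item Conjugate $G$ by $\cnot{2}{3}$ to get $\sgate_1^\dagger\cz_{12}\cz_{13}$, because $(-1)^{x_1(x_2\oplus x_3)}=(-1)^{x_1x_2+x_1x_3}$.
\item Multiply by $G^{-1}$ to isolate $\cz_{12}$.
\item Then $\sgate_1^\dagger\cz_{12}\cdot\cz_{12}=\sgate_1^\dagger$, and $\sgate_1=\sgate_1^\dagger Z_1$.
\end{enumerate}
With that step in place, the rest of your proposal goes through.
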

\begin{proof}
  See Appendix~\ref{Appendix: full clifford}.
\end{proof}

The $\code{24,8,3}$ CC code therefore serves as a compact toy model demonstrating that parallel surgery, combined with a small number of fold-transversal and automorphism-induced operations, suffices to realise the 
full Clifford group on four of the eight logical qubits. Moreover, the maximally parallel logical $\mathit{CNOT}$ layer discussed in Appendix~\ref{Appendix: CNOTs_24_8_3} is enabled by the clustered logical basis of CC codes and the parallel surgery, which together allow multiple $\mathit{CNOT}$ gates on disjoint data logical qubit pairs to be executed in parallel at fixed space overhead. For arbitrary Clifford logical operations, only an additional copy of the data code patch is required as an auxiliary patch $\QQQ'$, incurring a $2N$ space overhead: $24$ auxiliary data qubits and $24$ auxiliary check qubits.

\section{Physical implementation}
\label{Section: physical implementation}

CC codes integrates desiderata from code construction itself with the practical requirements of implementing quantum logic. In this sense, it already follows a co-design principle. To make such codes fully practical, it is essential to align the demands of quantum logic implementation with those of the underlying quantum hardware. Notably, the code we propose is particularly well suited to several leading quantum hardware architectures.

Indeed, our code--logic co-design is particularly well matched to platforms that support \emph{dynamic, sparse reconfiguration} of the interaction graph. Particularly prominent such architecture 
of this type is provided by optical-tweezer neutral-atom arrays \cite{LukinFaultTolerant,lukin_magic,5d8p-3hm1} and shuttling-based trapped-ion architectures \cite{IonsQEC,b9s1-6r44,x1rk-yg29,QCCD,IonTrapBreakEven}, 
but  also  superconducting processors with tunable long-range couplers \cite{BB_IBM, IBMTuneable}. The common theme is that these platforms can (i) operate with a low-degree `native' connectivity graph during steady-state error correction, and (ii) temporarily instantiate additional sparse couplings when an algorithmic layer demands it. This exactly mirrors the operational pattern of our constructions: most of the time, each patch executes standard LDPC syndrome extraction on a bounded-degree graph, while parallel surgery requires turning on a structured set of inter-patch edges only during a 
designated interaction window.

There are further elements that render our code particularly attractive from the hardware-perspective, however. First, CC codes fall within the LP code family as qLDPC codes, so both data and check qubits have bounded degree and admit sparse syndrome-extraction graphs. This yields constant-weight stabilisers and supports highly parallel measurement schedules. Moreover, Xu \emph{et al.}~\cite{Qian_reconfigurable} show that reconfigurable neutral-atom arrays can implement the effectively non-local coupling patterns required by product qLDPC code syndrome extraction by rearranging atoms between circuit layers, so that the instantaneous interaction graph is adapted to the current stabiliser-measurement schedule while maintaining coherence~\cite{LukinFaultTolerant}. 
In ion-trap 
\emph{quantum charge-coupled device}
(QCCD) architectures \cite{QCCD}, shuttling provides an explicit mechanism for dynamically bringing the qubits required by a given parity check into a processing zone, again implementing a time-dependent sparse graph. In superconducting devices, tunable couplers and bus-mediated interactions offer a complementary route to activating and deactivating selected edges on demand
\cite{BB_IBM,IBMTuneable}.

Second, the auxiliary subsystem required by parallel product surgery is deliberately non-specialised: it is simply another instance of the same data patch. This symmetry makes the hardware role of each patch interchangeable. From a systems perspective, this is attractive because it avoids dedicated `factory' resources at the physical layer: any available patch can temporarily act as an auxiliary system, and the calibration targets (native syndrome extraction and native patch preparation/measurement) remain unchanged. On platforms with mid-circuit measurement and reset, this enables flexible scheduling strategies in which auxiliary patches are recycled round-by-round without altering the underlying code family.

Third, the product connection makes the inter-patch interface particularly natural in reconfigurable layouts. The same sparse algebraic structure that defines an LP code also defines sparse cross-connections between two copies, so that the joint-parity measurements required for surgery correspond to activating a prescribed, low-degree bipartite graph between a data patch and an auxiliary patch. Operationally, a surgery round proceeds by (i) preparing the auxiliary patch and enabling the inter-patch couplings specified by the chosen product connection code while executing the required joint checks, and (ii) disabling those couplings and measuring out the auxiliary degrees of freedom. In this way, reconfigurability is used where it is most valuable---to transiently instantiate the inter-patch edges demanded by the product connection code---while each patch otherwise operates using its native intra-patch connectivity. This `mostly-static, briefly-rewired' pattern is precisely the regime in which dynamically reconfigurable arrays are expected to offer the largest architectural advantage, and it provides a concrete pathway for implementing highly parallel logical PPM layers with fixed auxiliary overhead. This feature may well also be attractive for photonic platforms of fault tolerant quantum computing \cite{FusionBased}.

\section{Conclusion and outlook}
\label{Section: conclusion}

This work sets out to explore a code-–logic co-design perspective for fault-tolerant computation with qLDPC codes, motivated by the practical observation that, even when good code parameters are available, using a code for computation can still be constrained by (i) how transparently the logical operators can be represented and acted on, and (ii) how much parallelism can be achieved in logical measurement layers at a fixed, small overhead. Within this framing, CC codes are introduced as a LP code subfamily whose logical operator basis is engineered to have a `clustered' organisation, with the aim of making selective and parallel logical measurements and scheduling more systematic.

A main operational primitive enabled by this structure is the parallel product surgery, which realises PPMs using an auxiliary copy of the data patch together with a tailored product connection code $\PPP$. In this framework, the rank of the connection-code check matrix $H'_Z$ directly controls how many independent same-type merges can be executed within a single surgery round. For CC codes encoding $k=2n_an_b$ logical qubits, the natural upper bound of number of merges $M_{\max}=k/2=n_an_b$ is reached when $\rank H'_Z=n_an_b$, providing a simple algebraic criterion for when the protocol achieves maximal parallelism.
This work also makes the resulting resource trade-offs relatively concrete by comparing space-time overheads to alternative logical-measurement approaches. At the level of scaling (Tab.~\ref{Table: space_time_overhead}), maximal parallelisation reduces the effective time overhead per merge by $k/2$ when many disjoint merges are executed concurrently. At the finite-size level (Tab.~\ref{Table: space_time_overhead_gross_136}), one representative comparison reports a space-time overhead of $952$ for a single merge using maximally-parallel surgery on the $\code{136,8,14}$ CC code (with auxiliary space \(272\) and time \(3.5\) ), versus $1236$ for the extractor protocol applied to the $\code{144,12,12}$ Gross BB code (with auxiliary space \(103\) but time \(12\)), under the paper's accounting conventions. While such comparisons depend on modelling and counting choices, they serve as a useful indication that the additional parallelism can translate into a non-trivial reduction in the combined space-time metric.

A second pragmatic ingredient is the hybrid gadget idea: when a target measurement configuration 
is not directly compatible with the patterns admitted by product surgery, the protocol 
can still be used as a `booster' for other measurement routines whenever 
the configuration contains a compatible sub-configuration. In the 
example for the $\code{136,8,14}$ CC code (Fig.~\ref{Figure: violin_sub}), among all possible  \(7192\) combinations of single logical measurements and joint logical measurements of two-logical pairs considered, \(867\) (\(\approx 12\%\)) are reported as boostable, and for those the boosted procedure reduces auxiliary space overhead by about \(150\) physical qubits on average in the reported distributions. This is inherently a partial-coverage strategy, but it suggests a modular way to reuse highly-parallel layers within broader measurement toolchains.

On fault tolerance, we show that the parallel product surgery procedure is fault-tolerant for HGP codes in full generality (App.~\ref{Appendix: HGP distance preserving}) as long as both data code and product connection codes are LDPC. For the CC code instances considered here, we further verify fault tolerance numerically for all listed \(k=8\) CC codes by exhaustively checking all merged-code choices induced by \((H'_X,H'_Z)\).

The \(\code{24,8,3}\) CC case study illustrates how these ingredients combine into a compact fault-tolerant logical toolbox built from PPMs. In this measurement-based picture, Clifford processing is organised into layers of commuting PPMs with classical feed-forward and Pauli-frame tracking, using auxiliary logical qubits that are prepared and measured out repeatedly. Parallel surgery provides the required PPM primitives, while code symmetries supply additional logical operations, including automorphism-induced logical SWAPs and fold-transversal gadgets. A key feature is that surgery allows any two logical \(\mathit{CNOT}\) gates on disjoint qubit pairs to be executed in parallel. Together with arbitrary logical \(\mathit{CNOT}\) gates, paired \(S_iS_j^\dagger\) operations, and the global \(H^{\otimes 4}\), this yields fault-tolerant full Clifford operation up to a global phase over four logical qubits, since these elements generate the full Clifford group (Theo.~\ref{Theorem: full clifford group}). These features are expected to contribute to bringing notions of fault tolerant quantum computing to a new level \cite{MindTheGaps} and to render such schemes more practical.

At the same time, the manuscript is clear that product surgery comes with structural constraints that matter for compilation. Most explicitly, merges within the same sector require an alignment condition (e.g., sharing a row/column in the ring-level Tanner graph), whereas arbitrary inter-sector merges are allowed. This suggests that practical use will likely involve either (i) scheduling strategies that exploit the natural compatibility structure, or (ii) `routing' steps (for instance, via logical permutations arising from code automorphisms) that map desired PPM layers into compatible patterns.

As a final comment, our work also raises some natural pathways for further investigation.
\begin{itemize}
    \item Other structures of logical operator basis of high rate qLDPC codes and co-designed computational gadget.
\end{itemize}
Beyond the clustered logical operator basis studied here, other high-rate qLDPC families come with useful logical-basis structure. For instance, HGP codes admit a well-known canonical logical-operator basis~\cite{Armanda_partitioning}, QRCs exhibit a structured basis that can be viewed as interpolating between clustered and canonical forms~\cite{QRC}, and recent work has also analysed logical basis structure for BB codes~\cite{bb_logical_basis}. These structural insights suggest further opportunities for more efficient logical operations. At the same time, different basis choices naturally emphasize different desiderata: strong finite-size parameters, clear logical addressability, and large (computationally useful) automorphism groups do not always align within a single construction. A useful direction is therefore to treat the logical basis structure as a design input and to develop specialised computational gadgets that explicitly exploit the strengths of the chosen basis.
\begin{itemize}
    \item Fault tolerance of generic parallel product surgery.
\end{itemize}
We proved that parallel product surgery is distance preserving for arbitrary HGP codes. Motivated by this general result and by our numerical evidence for the CC instances studied here, it is natural to conjecture that parallel product surgery remains distance preserving more broadly---for CC codes, and potentially for a wider class of quantum product codes. If true, this would point to a surgery gadget whose fault tolerance is not certified by the expansion properties of the auxiliary system, but instead emerges from the symmetry of the merged code as a whole. In this respect, the mechanism would be closer in spirit to surface code lattice surgery, where distance preservation is tied to the global geometry of the merged patch.
\begin{itemize}
    \item Fast and parallel surgery at small space overhead for CC codes.
\end{itemize}
As illustrated in Section~\ref{Subsec: surgery procedure}, each surgery round requires $d$ rounds of stabiliser measurements to ensure fault tolerance.
Very recently, several works have proposed techniques to reduce the time overhead of code surgery to a constant~\cite{BaspinBerentCohen2025,CowtanHeWilliamsonYoder2025,ChangHeYoderZhuJochymOConnor2026}, although typically at the cost of increased space overhead. In particular, Ref.~\cite{ChangHeYoderZhuJochymOConnor2026} demonstrates that surgery for HGP codes can achieve constant time overhead and almost constant space overhead, enabled by their canonical logical operator basis.
This naturally raises the question of whether similar ideas can be applied to parallel product surgery for CC codes, potentially reducing the time overhead while preserving the favourable space overhead.

\begin{itemize}
    \item Other generating sets of the Clifford group.
\end{itemize}
We note that the generating set used in Theorem~\ref{Theorem: full clifford group} may be particularly well suited to codes that admit many fold-transversal logical operations, since several of its elements can naturally arise from code symmetries. It would therefore be useful to identify alternative generating sets for the Clifford group whose generators are especially compatible with symmetry-based implementations~\cite{TansuwannontChanTakagi2026}. For example, sets in which the required entangling and phase operations can be realised directly from automorphisms, fold-transversal gadgets, or other structured symmetry actions of the code. It is our hope that this work substantially advances the full experimental implementation of scalable quantum error correction.\\
\section*{Acknowledgements}

We acknowledge helpful discussions with Robert I. Booth, Wang Fang, Shouzhen (Bailey) Gu, Zhiyang (Sunny) He, Shilin Huang, Benjamin Ide, Yingjia Lin, Pavel Panteleev, Clément Poirson, Kaavya Sahay, Hasan Sayginel, Susan J. Sierra, Adithya Sireesh, and Ryan Tiew. We especially thank Stergios Koutsioumpas and Tamas Noszko for discussions and insightful comments. 
BG was supported by the Higgs Scholarship funded by the School of Physics and Astronomy, University of Edinburgh.
BG and JR were supported by the project `Algorithmic fault-tolerance: quantum error correction beyond quantum memories' funded by the UKRI Quantum Computing and Simulation (QCS) Hub (grant code: EP/T001062/1). BG and JR were also funded by the Innovate UK project \enquote{QEC Readout Testbed} [reference number 10151107].
JR is currently funded by an EPSRC Quantum Career Acceleration Fellowship (grant code: UKRI1224). QX is funded in part by the Walter Burke Institute for Theoretical Physics at Caltech.
The Berlin team has been supported by the 
the BMFTR (QSolid, MUNIQC-Atoms, PasQuops),
the DFG (CRC 183 and SPP 2541), the Quantum Flagship (Millenion, PasQuanS2), the Munich Quantum Valley, Berlin Quantum and the European Research Council (DebuQC).

\bibliographystyle{apsrev4-1}
\bibliography{compiled_references}

@article{5d8p-3hm1,
  author = {Giudici, Giuliano and Veroni, Stefano and Giudice, Giacomo and Pichler, Hannes and Zeiher, Johannes},
  title = {Fast Entangling Gates for Rydberg Atoms via Resonant Dipole-Dipole Interaction},
  journal = {PRX Quantum},
  volume = {6},
  pages = {030308},
  publisher = {American Physical Society},
  year = {2025},
  doi = {10.1103/PRXQuantum.6.030308},
  issue = {3},
  numpages = {14}
}

@article{IonTrapBreakEven,
title={Computing with many encoded logical qubits beyond break-even},
author={Shival Dasu and Matthew DeCross and Andrew Y. Guo and Ali Lavasani and Jan Behrends and Asmae Benhemou and Yi-Hsiang Chen and Karl Mayer and Chris N. Self and Selwyn Simsek and Basudha Srivastava and M.S. Allman and Jake Arkinstall and Justin G. Bohnet and Nathaniel Q. Burdick and J.P. Campora III and Alex Chernoguzov and Samuel F. Cooper and Robert D. Delaney and Joan M. Dreiling and Brian Estey and Caroline Figgatt and Cameron Foltz and John P. Gaebler and Alex Hall and Craig A. Holliman and Ali A. Husain and Akhil Isanaka and Colin J. Kennedy and Yuga Kodama and Nikhil Kotibhaskar and Nathan K. Lysne and Ivaylo S. Madjarov and Michael Mills and Alistair R. Milne and Brian Neyenhuis and Annie J. Park and Anthony Ransford and Adam P. Reed and Steven J. Sanders and Charles H. Baldwin and David Hayes and Ben Criger and Andrew C. Potter and David Amaro},
      eprint = {2602.22211},  
      year=2026,
      archiveprefix = {arXiv}}

@book{Algebraic_topo,
  author = {Hatcher, Allen},
  title = {Algebraic Topology},
  publisher = {Cambridge University Press},
  year = {2019}
}

@article{Armanda_partitioning,
  author = {Quintavalle, Alessandro O. and Webster, Paul and Vasmer, Michael},
  title = {Partitioning qubits in hypergraph product codes to implement logical gates},
  journal = {Quantum},
  volume = {7},
  pages = {1153},
  year = {2023},
  doi = {10.22331/q-2023-10-24-1153}
}

@article{AutQEC,
  author = {Sayginel, Hasan and Koutsioumpas, Stergios and Webster, Mark and Rajput, Abhishek and Browne, Dan E.},
  title = {Fault-Tolerant Logical Clifford Gates from Code Automorphisms},
  journal = {PRX Quantum},
  volume = {6},
  pages = {030343},
  publisher = {American Physical Society},
  year = {2025},
  month = {Sep},
  doi = {10.1103/vf7v-cpq9},
  url = {https://link.aps.org/doi/10.1103/vf7v-cpq9},
  issue = {3},
  numpages = {24}
}

@article{BB_IBM,
  author = {Bravyi, Sergey and others},
  title = {High-threshold and low-overhead fault-tolerant quantum memory},
  journal = {Nature},
  volume = {627},
  pages = {778--782},
  year = {2024},
  doi = {10.1038/s41586-024-07107-7}
}

@article{Balanced_product,
  author = {Breuckmann, Nikolas P. and Eberhardt, Jens Niklas},
  title = {Balanced Product Quantum Codes},
  journal = {IEEE Trans. Inf. Th.},
  volume = {67},
  number = {10},
  pages = {6653--6674},
  year = {2021},
  doi = {10.1109/TIT.2021.3097347}
}

@article{BaspinBerentCohen2025,
  author        = {Baspin, Noureddine and Berent, Lucas and Cohen, Lawrence Z.},
  title         = {Fast Surgery for Quantum LDPC Codes},
  year          = {2025},
  eprint        = {2510.04521},
  archivePrefix = {arXiv},
  primaryClass  = {quant-ph},
  url           = {https://arxiv.org/abs/2510.04521}
}

@article{CKBB,
  author = {Cohen, L. Z. and Kim, Isaac H. and Bartlett, Stephen D. and Brown, Benjamin J.},
  title = {Low-overhead fault-tolerant quantum computing using long-range connectivity},
  journal = {Science Adv.},
  volume = {8},
  pages = {eabn1717},
  year = {2022},
  doi = {10.1126/sciadv.abn1717}
}

@article{ConstantOverheadQLDPC,
  author = {J. Pablo Bonilla Ataides and others},
  title = {{Constant-overhead fault-tolerant Bell-pair distillation using high-rate codes}},
  year = {2025},
  eprint = {2502.09542},
  archiveprefix = {arXiv},
  url = {https://arxiv.org/abs/2502.09542}
}

@article{Cowtan2024,
  author = {Cowtan, Alexander},
  title = {SSIP: Automated Surgery with Quantum LDPC Codes},
  year = {2024},
  eprint = {2407.09423},
  archiveprefix = {arXiv},
  url = {https://arxiv.org/abs/2407.09423}
}

@article{CowtanHeWilliamsonYoder2025,
  author        = {Cowtan, Alexander and He, Zhiyang and Williamson, Dominic J. and Yoder, Theodore J.},
  title         = {Fast and Fault-Tolerant Logical Measurements: Auxiliary Hypergraphs and Transversal Surgery},
  year          = {2025},
  eprint        = {2510.14895},
  archivePrefix = {arXiv},
  primaryClass  = {quant-ph},
  url           = {https://arxiv.org/abs/2510.14895}
}

@article{Directional_code,
  author = {Geh{\'e}r, Gy{\"o}rgy P. and Byfield, D. and Ruban, A.},
  title = {Directional Codes: a new family of quantum {LDPC} codes on hexagonal- and square-grid connectivity hardware},
  year = {2025},
  eprint = {2507.19430},
  archiveprefix = {arXiv},
  url = {https://arxiv.org/abs/2507.19430},
  optprimaryclass = {quant-ph}
}

@article{Eastin_Knill_restrictions_transversal,
  author = {Eastin, Bryan and Knill, Emanuel},
  title = {Restrictions on Transversal Encoded Quantum Gate Sets},
  journal = {Phys. Rev. Lett.},
  volume = {102},
  pages = {110502},
  publisher = {American Physical Society},
  year = {2009},
  month = {Mar},
  doi = {10.1103/PhysRevLett.102.110502},
  url = {https://link.aps.org/doi/10.1103/PhysRevLett.102.110502},
  issue = {11},
  numpages = {4}
}

@article{Extractor,
  author = {He, Zhen and Cowtan, Alexander and Williamson, Dominic J. and Yoder, Theodore J.},
  title = {{Extractors: QLDPC architectures for efficient Pauli-based computation}},
  year = {2025},
  eprint = {2503.10390},
  archiveprefix = {arXiv},
  url = {https://arxiv.org/abs/2503.10390}
}

@book{Finite_d_algebras,
  author = {Drozd, Yuri A. and Kirichenko, Viktor V.},
  title = {Finite Dimensional Algebras},
  publisher = {Springer},
  address = {Berlin, Heidelberg},
  year = {1994},
  doi = {10.1007/978-3-642-76244-4}
}

@article{Fold_transversal,
  author = {Breuckmann, Nikolas P. and Burton, Simon},
  title = {Fold-{T}ransversal {C}lifford {G}ates for {Q}uantum {C}odes},
  journal = {{Quantum}},
  volume = {8},
  pages = {1372},
  publisher = {{Verein zur F{\"{o}}rderung des Open Access Publizierens in den Quantenwissenschaften}},
  year = {2024},
  month = {jun},
  doi = {10.22331/q-2024-06-13-1372},
  url = {https://doi.org/10.22331/q-2024-06-13-1372},
  issn = {2521-327X}
}

@article{FusionBased,
  author = {Sara Bartolucci and Patrick Birchall and Hector Bomb{\'i}n and Hugo Cable and Chris Dawson and Mercedes Gimeno-Segovia and Eric Johnston and Konrad Kieling and Naomi Nickerson and Mihir Pant and Fernando Pastawski and Terry Rudolph and Chris Sparrow},
  title = {Fusion-based quantum computation},
  journal = {Nat. Commun.},
  volume = {14},
  pages = {912},
  year = {2023},
  doi = {10.1038/s41467-023-36493-1}
}

@article{Game_of_surface_code,
  author = {Litinski, Daniel},
  title = {A {G}ame of {S}urface {C}odes: {L}arge-{S}cale {Q}uantum {C}omputing with {L}attice {S}urgery},
  journal = {{Quantum}},
  volume = {3},
  pages = {128},
  publisher = {{Verein zur F{\"{o}}rderung des Open Access Publizierens in den Quantenwissenschaften}},
  year = {2019},
  month = {mar},
  doi = {10.22331/q-2019-03-05-128},
  url = {https://doi.org/10.22331/q-2019-03-05-128},
  issn = {2521-327X}
}

@article{Gauging_logical,
  author = {Williamson, Dominic J. and Yoder, Theodore J.},
  title = {Low-overhead Fault-tolerant Quantum Computation by Gauging Logical Operators},
  year = {2024},
  eprint = {2410.02213},
  archiveprefix = {arXiv},
  url = {https://arxiv.org/abs/2410.02213}
}

@article{Gidney_2021,
  author = {Gidney, Craig},
  title = {Stim: a fast stabilizer circuit simulator},
  journal = {Quantum},
  volume = {5},
  pages = {497},
  publisher = {Verein zur Forderung des Open Access Publizierens in den Quantenwissenschaften},
  year = {2021},
  month = {jul},
  doi = {10.22331/q-2021-07-06-497},
  url = {http://dx.doi.org/10.22331/q-2021-07-06-497},
  issn = {2521-327X}
}

@article{GolowichChangZhu2025,
  author = {Golowich, Louis and Chang, Kathleen and Zhu, Guanyu},
  title = {Constant-Overhead Addressable Gates via Single-Shot Code Switching},
  year = {2025},
  eprint = {2510.06760},
  archiveprefix = {arXiv},
  url = {https://arxiv.org/abs/2510.06760}
}

@article{Goto2024,
  author = {Goto, Hayato},
  title = {High-Performance Fault-Tolerant Quantum Computing with Many-Hypercube Codes},
  journal = {Science Advances},
  volume = {10},
  pages = {eadp6388},
  year = {2024},
  doi = {10.1126/sciadv.adp6388}
}

@article{Gottesman,
  author = {Gottesman, Daniel},
  title = {Fault-Tolerant Quantum Computation with Constant Overhead},
  year = {2013},
  eprint = {1310.2984},
  archiveprefix = {arXiv},
  url = {https://arxiv.org/abs/1310.2984}
}

@phdthesis{Gottesman1997,
  author = {Gottesman, Daniel},
  title = {{Stabilizer codes and quantum error correction}},
  volume = {2007},
  year = {1997},
  eprint = {quant-ph/9705052},
  archiveprefix = {arXiv},
  optprimaryclass = {arXiv:quant-ph},
  url = {https://arxiv.org/abs/quant-ph/9705052},
  arxivid = {arXiv:quant-ph/9705052v1},
  optmonth = {may},
  opturl = {http://arxiv.org/abs/quant-ph/9705052},
  school = {California Institute of Technology},
  type = {{PhD thesis}}
}

@misc{GottesmanNewReview,
  author = {D. Gottesman},
  title = {Opportunities and challenges in fault-tolerant quantum computation},
  year = {2022},
  eprint = {2210.15844},
  archiveprefix = {arXiv},
  url = {https://arxiv.org/abs/2210.15844}
}

@article{GregGidneyRSA,
  author = {Craig Gidney},
  title = {{How to factor 2048 bit RSA integers with less than a million noisy qubits}},
  year = {2025},
  eprint = {2505.15917},
  archiveprefix = {arXiv},
  url = {https://arxiv.org/abs/2505.15917}
}

@article{GuoqLDPC,
  author = {Guo Zhang and Yuanye Zhu and Ying Li},
  title = {{Accelerating fault-tolerant quantum computation with good qLDPC codes}},
  year = {2025},
  eprint = {2510.19442},
  archiveprefix = {arXiv},
  url = {https://arxiv.org/abs/2510.19442}
}

@article{HorsmanSurgery,
  author = {C. Horsman and A. G. Fowler and S. Devitt and R. Van Meter},
  title = {Surface code quantum computing by lattice surgery},
  journal = {New J. Phys.},
  volume = {14},
  pages = {123011},
  year = {2012},
  doi = {10.1088/1367-2630/14/12/123011}
}

@online{IBMTuneable,
  author = {{IBM Research}},
  title = {{IBM delivers new quantum processors, software, and algorithm breakthroughs on path to advantage and fault tolerance}},
  year = {2025},
  url = {https://newsroom.ibm.com/2025-11-12-ibm-delivers-new-quantum-processors,-software,-and-algorithm-breakthroughs-on-path-to-advantage-and-fault-tolerance},
  date = {2025-11-12},
  urldate = {2026-02-25}
}

@article{IcebergRSA,
  author = {Paul Webster and Lucas Berent and Omprakash Chandra and Evan T. Hockings and Nouédyn Baspin and Felix Thomsen and Samuel C. Smith and Lawrence Z. Cohen},
  title = {{The Pinnacle Architecture: Reducing the cost of breaking RSA-2048 to 100 000 physical qubits using quantum LDPC codes}},
  year = {2026},
  eprint = {2602.11457},
  archiveprefix = {arXiv},
  url = {https://arxiv.org/abs/2602.11457}
}

@article{Improved_surgery,
  author = {Cross, Andrew and He, Zhen and Rall, Patrick and Yoder, Theodore J.},
  title = {{Improved QLDPC surgery: Logical measurements and bridging codes}},
  year = {2024},
  eprint = {2407.18393},
  archiveprefix = {arXiv},
  url = {https://arxiv.org/abs/2407.18393}
}

@article{IonsQEC,
  author = {Min Ye and Nicolas Delfosse},
  title = {Quantum error correction for long chains of trapped ions},
  journal = {Quantum},
  volume = {9},
  pages = {1920},
  year = {2025},
  doi = {10.22331/q-2025-11-27-1920}
}

@article{Kang_2025,
  author = {Kang, Mingyu and Lin, Yingjia and Yao, Hanwen and Gökduman, Mert and Meinking, Arianna and Brown, Kenneth R.},
  title = {QUITS: A modular Qldpc code circUIT Simulator},
  journal = {Quantum},
  volume = {9},
  pages = {1931},
  publisher = {Verein zur Forderung des Open Access Publizierens in den Quantenwissenschaften},
  year = {2025},
  month = {dec},
  doi = {10.22331/q-2025-12-05-1931},
  url = {http://dx.doi.org/10.22331/q-2025-12-05-1931},
  issn = {2521-327X}
}

@article{LDPCReview,
  author = {N. P. Breuckmann and J. N. Eberhardt},
  title = {{Quantum LDPC codes}},
  journal = {PRX Quantum},
  volume = {2},
  pages = {040101},
  year = {2021},
  doi = {10.1103/PRXQuantum.2.040101}
}

@inproceedings{LeverrierLDPC,
  author = {Leverrier, Anthony and Z{\'e}mor, Gilles},
  title = {{Quantum Tanner codes}},
  booktitle = {2022 IEEE 63rd Annual Symposium on Foundations of Computer Science (FOCS)},
  pages = {872--883},
  year = {2022},
  doi = {10.1109/FOCS54457.2022.00117},
  organization = {IEEE}
}

@article{LukinFaultTolerant,
  author = {Dolev Bluvstein and Alexandra A. Geim and Sophie H. Li and Simon J. Evered and J. Pablo Bonilla Ataides and Gefen Baranes and Andi Gu and Tom Manovitz and Muqing Xu and Marcin Kalinowski and Shayan Majidy and Christian Kokail and Nishad Maskara and Elias C. Trapp and Luke M. Stewart and Simon Hollerith and Hengyun Zhou and Michael J. Gullans and Susanne F. Yelin and Markus Greiner and Vladan Vuletić and Madelyn Cain and Mikhail D. Lukin},
  title = {A fault-tolerant neutral-atom architecture for universal quantum computation},
  journal = {Nature},
  volume = {649},
  pages = {39-46},
  year = {2026},
  doi = {10.1038/s41586-025-09848-5}
}

@article{MindTheGaps,
  author = {J. Eisert and J. Preskill},
  title = {Mind the gaps: The fraught road to quantum advantage},
  year = {2025},
  eprint = {2510.19928},
  archiveprefix = {arXiv},
  url = {https://arxiv.org/abs/2510.19928}
}

@Book{NielsenChuang,
  author = {Nielsen, M. A. and Chuang, I. L.},
  title = {Quantum computation and quantum information},
  publisher = {Cambridge University Press},
  year = {2000},
  doi = {10.1017/CBO9780511976667.001},
  isbn = {9780521635035},
  series = {Cambridge Series on Information and the Natural Sciences}
}

@article{Panteleev,
  author = {P. Panteleev and G. Kalachev},
  title = {{Degenerate quantum LDPC codes with good finite length 
     performance}},
  journal = {Quantum},
  volume = {5},
  pages = {585},
  year = {2021},
  doi = {10.22331/q-2021-11-22-585}
}

@article{Panteleev_almost_linear,
  author = {Panteleev, Pavel and Kalachev, Gleb},
  title = {Quantum LDPC Codes With Almost Linear Minimum Distance},
  journal = {IEEE Trans. Inf. Theor.},
  volume = {68},
  number = {1},
  pages = {213–229},
  publisher = {IEEE Press},
  year = {2022},
  month = {jan},
  doi = {10.1109/TIT.2021.3119384},
  url = {https://doi.org/10.1109/TIT.2021.3119384},
  issn = {0018-9448},
  issue_date = {Jan. 2022},
  numpages = {17}
}

@inproceedings{Panteleev_good_and_local_testable,
  author = {Panteleev, Pavel and Kalachev, Gleb},
  title = {Asymptotically good Quantum and locally testable classical LDPC codes},
  booktitle = {Proceedings of the 54th Annual ACM SIGACT Symposium on Theory of Computing},
  pages = {375–388},
  publisher = {Association for Computing Machinery},
  address = {New York, NY, USA},
  year = {2022},
  doi = {10.1145/3519935.3520017},
  url = {https://doi.org/10.1145/3519935.3520017},
  isbn = {9781450392648},
  location = {Rome, Italy},
  numpages = {14},
  series = {STOC 2022}
}

@article{PecorariGuerciPerrinPupillo2025,
  author = {Pecorari, Laura and Guerci, Francesco Paolo and Perrin, Hugo and Pupillo, Guido},
  title = {Addressable Gate-Based Logical Computation with Quantum LDPC Codes},
  year = {2025},
  eprint = {2511.06124},
  archiveprefix = {arXiv},
  url = {https://arxiv.org/abs/2511.06124}
}

@article{PhysRevA.71.022316,
  author = {Bravyi, Sergey and Kitaev, Alexei},
  title = {Universal quantum computation with ideal Clifford gates and noisy ancillas},
  journal = {Phys. Rev. A},
  volume = {71},
  pages = {022316},
  publisher = {American Physical Society},
  year = {2005},
  month = {Feb},
  doi = {10.1103/PhysRevA.71.022316},
  url = {https://link.aps.org/doi/10.1103/PhysRevA.71.022316},
  issue = {2},
  numpages = {14}
}

@article{swaroop2025universaladaptersquantumldpc,
       title = {Universal Adapters between Quantum Low-Density Parity Check Codes},
      author = {Swaroop, Esha and Jochym-O'Connor, Tomas and Yoder, Theodore J.},
      journal = {PRX Quantum},
      volume = {7},
      issue = {1},
      pages = {010324},
      numpages = {47},
      year = {2026},
      month = {Feb},
      publisher = {American Physical Society},
      doi = {10.1103/1g44-jp62},
      url = {https://link.aps.org/doi/10.1103/1g44-jp62}
}

@article{PlanarQLDPC,
  author = {Yang, Yingli and Zhang, Guo and Li, Ying},
  title = {Planar Fault-Tolerant Quantum Computation with Low Overhead},
  year = {2025},
  eprint = {2506.18061},
  archiveprefix = {arXiv},
  url = {https://arxiv.org/abs/2506.18061}
}

@article{QCCD,
  author = {J. M. Pino and J. M. Dreiling and C. Figgatt and J. P. Gaebler and S. A. Moses and M. S. Allman and C. H. Baldwin and M. Foss-Feig and D. Hayes and K. Mayer and C. Ryan-Anderson and B. Neyenhuis},
  title = {{Demonstration of the trapped-ion quantum-CCD computer architecture}},
  journal = {Nature},
  volume = {592},
  pages = {209-213},
  year = {2021},
  doi = {10.1038/s41586-021-03318-4}
}

@article{QDistRand,
  author = {Pryadko, Leonid P. and Shabashov, Vadim A. and Kozin, Valerii K.},
  title = {QDistRnd: A GAP package for computing the distance of quantum error-correcting codes},
  journal = {J. Open Source Soft.},
  volume = {7},
  number = {71},
  pages = {4120},
  publisher = {The Open Journal},
  year = {2022},
  doi = {10.21105/joss.04120},
  url = {https://doi.org/10.21105/joss.04120}
}

@article{QRC,
  author = {Scruby, Thomas R. and Hillmann, Tobias and Roffe, Joschka},
  title = {High-threshold, Low-overhead and Single-shot Decodable Fault-tolerant Quantum Memory},
  year = {2024},
  eprint = {2406.14445},
  archiveprefix = {arXiv},
  url = {https://arxiv.org/abs/2406.14445}
}

@article{Qian_fast_and_para,
  author = {Xu, Qian and Zhou, Hengyun and Zheng, Guo and Bluvstein, Dolev and Ataides, J. Pablo Bonilla and Lukin, Mikhail D. and Jiang, Liang},
  title = {Fast and Parallelizable Logical Computation with Homological Product Codes},
  journal = {Phys. Rev. X},
  volume = {15},
  pages = {021065},
  publisher = {American Physical Society},
  year = {2025},
  month = {May},
  doi = {10.1103/PhysRevX.15.021065},
  url = {https://link.aps.org/doi/10.1103/PhysRevX.15.021065},
  issue = {2},
  numpages = {34}
}

@article{Qian_high_rate,
  author = {Zheng, Guanyu and Jiang, Liang and Xu, Qian},
  title = {High-Rate Surgery: Towards Constant-Overhead Logical Operations},
  year = {2025},
  eprint = {2510.08523},
  archiveprefix = {arXiv},
  url = {https://arxiv.org/abs/2510.08523}
}

@article{Qian_reconfigurable,
  author = {Xu, Qian and Bonilla Ataides, J. Pablo and Pattison, Christopher A. and Raveendran, Nithin and Bluvstein, Dolev and Wurtz, Jonathan and Vasi\'c, Bane and Lukin, Mikhail D. and Jiang, Liang and Zhou, Hengyun},
  title = {Constant‐overhead fault‐tolerant quantum computation with reconfigurable atom arrays},
  journal = {Nature Phys.},
  volume = {20},
  number = {7},
  pages = {1084--1090},
  year = {2024},
  doi = {10.1038/s41567-024-02479-z}
}

@ARTICLE{Reshape,
  author = {Quintavalle, Armanda O. and Campbell, Earl T.},
  title = {ReShape: A Decoder for Hypergraph Product Codes},
  journal = {IEEE Trans. Inf. Th.},
  volume = {68},
  number = {10},
  pages = {6569-6584},
  year = {2022},
  doi = {10.1109/TIT.2022.3184108}
}

@article{RevModPhys.87.307,
  author = {Terhal, B. M.},
  title = {Quantum error correction for quantum memories},
  journal = {Rev. Mod.  Phys.},
  volume = {87},
  pages = {307--346},
  publisher = {American Physical Society},
  year = {2015},
  doi = {10.1103/RevModPhys.87.307},
  numpages = {40}
}

@article{Roads,
  author = {E. T. Campbell and B. M. Terhal and C. Vuillot},
  title = {Roads towards fault-tolerant universal quantum computation},
  journal = {Nature},
  volume = {549},
  pages = {172-179},
  year = {2017},
  doi = {10.1038/nature23460}
}

@article{Roffe_2019,
  author = {Roffe, Joschka},
  title = {Quantum error correction: an introductory guide},
  journal = {Contemporary Physics},
  volume = {60},
  number = {3},
  pages = {226–245},
  publisher = {Informa UK Limited},
  year = {2019},
  month = {jul},
  doi = {10.1080/00107514.2019.1667078},
  url = {http://dx.doi.org/10.1080/00107514.2019.1667078},
  issn = {1366-5812}
}

@article{Roffe_2020,
  author = {Roffe, Joschka and White, David R. and Burton, Simon and Campbell, Earl},
  title = {Decoding across the quantum low-density parity-check code landscape},
  journal = {Phys. Rev. Res.},
  volume = {2},
  pages = {043423},
  publisher = {American Physical Society (APS)},
  year = {2020},
  doi = {10.1103/physrevresearch.2.043423}
}

@article{Shilin_Huang_Homomorphic,
  author = {Huang, Shilin and Jochym-O'Connor, Tomas and Yoder, Theodore J.},
  title = {Homomorphic Logical Measurements},
  journal = {PRX Quantum},
  volume = {4},
  pages = {030301},
  publisher = {American Physical Society},
  year = {2023},
  month = {Jul},
  doi = {10.1103/PRXQuantum.4.030301},
  url = {https://link.aps.org/doi/10.1103/PRXQuantum.4.030301},
  issue = {3},
  numpages = {14}
}

@article{TansuwannontChanTakagi2026,
  author        = {Tansuwannont, Theerapat and Chan, Tim and Takagi, Ryuji},
  title         = {Construction of the Full Logical Clifford Group for High-Rate Quantum Reed--Muller Codes Using Only Transversal and Fold-Transversal Gates},
  year          = {2026},
  eprint        = {2602.09788},
  archivePrefix = {arXiv},
  url           = {https://arxiv.org/abs/2602.09788}
}

@article{Kitaev2003,
  author  = {Kitaev, Alexei Yu.},
  title   = {Fault-Tolerant Quantum Computation by Anyons},
  journal = {Annals of Physics},
  volume  = {303},
  number  = {2},
  pages   = {2--30},
  year    = {2003},
  doi     = {10.1016/S0003-4916(02)00018-0},
  url     = {https://www.sciencedirect.com/science/article/pii/S0003491602000180}
}

@article{ChangHeYoderZhuJochymOConnor2026,
  author        = {Chang, Kathleen and He, Zhiyang and Yoder, Theodore J. and Zhu, Guanyu and Jochym-O'Connor, Tomas},
  title         = {Constant-Time Surgery on 2D Hypergraph Product Codes with Near-Constant Space Overhead},
  year          = {2026},
  eprint        = {2603.02157},
  archivePrefix = {arXiv},
  url           = {https://arxiv.org/abs/2603.02157}
}

@article{TillichZemor,
  author = {Jean-Pierre Tillich and Gilles Zemor},
  journal = {IEEE Trans. Inf. Th.},
  volume = {60},
  pages = {1193},
  year = {2014},
  doi = {10.1109/TIT.2013.2292061}
}

@article{TopologicalQuantumMemory,
  author = {E. Dennis and A. Kitaev and A. Landahl and J. Preskill},
  title = {Topological quantum memory},
  journal = {J. Math. Phys.},
  volume = {43},
  pages = {4452},
  year = {2002},
  doi = {10.1063/1.1499754}
}

@article{Xanadu_Homological_measurements,
  author = {Ide, Benjamin and Gowda, Manoj G. and Nadkarni, Priya J. and Dauphinais, Guillaume},
  title = {Fault-Tolerant Logical Measurements via Homological Measurement},
  journal = {Phys. Rev. X},
  volume = {15},
  pages = {021088},
  publisher = {American Physical Society},
  year = {2025},
  month = {Jun},
  doi = {10.1103/PhysRevX.15.021088},
  url = {https://link.aps.org/doi/10.1103/PhysRevX.15.021088},
  issue = {2},
  numpages = {23}
}

@article{YamasakiKoashi2024,
  author = {Yamasaki, Hayata and Koashi, Masato},
  title = {Time-Efficient Constant-Space-Overhead Fault-Tolerant Quantum Computation},
  journal = {Nature Phys.},
  volume = {20},
  pages = {247--253},
  year = {2024},
  doi = {10.1038/s41567-023-02325-8}
}

@article{YoshidaTamiyaYamasaki2025,
  author = {Yoshida, Satoshi and Tamiya, Shiro and Yamasaki, Hayata},
  title = {Concatenate Codes, Save Qubits},
  journal = {npj Quant. Inf.},
  volume = {11},
  pages = {88},
  year = {2025},
  doi = {10.1038/s41534-025-01035-8}
}

@article{Yu-an_generalised_toric_code,
  author = {Liang, Zhe and Liu, Ke and Song, Haoran and Chen, Yu-Ao},
  title = {Generalized Toric Codes on Twisted Tori for Quantum Error Correction},
  journal = {PRX Quantum},
  volume = {6},
  pages = {020357},
  year = {2025},
  doi = {10.1103/rmy6-9n89},
  url = {https://link.aps.org/doi/10.1103/rmy6-9n89},
  issue = {2}
}

@article{ZhuSikanderPortnoyCrossBrown2024,
  author = {Zhu, Guanyu and Sikander, Saad and Portnoy, Eli and Cross, Andrew W. and Brown, Benjamin J.},
  title = {Non-Clifford and Parallelizable Fault-Tolerant Logical Gates on Constant and Almost-Constant Rate Homological Quantum LDPC Codes via Higher Symmetries},
  journal = {PRX Quantum},
  volume = {5},
  pages = {040338},
  year = {2024},
  doi = {10.1103/PRXQuantum.5.040338}
}

@article{b9s1-6r44,
  author = {Valentini, M. and van Mourik, M. W. and Butt, F. and Wahl, J. and Dietl, M. and Pfeifer, M. and Anmasser, F. and Colombe, Y. and R\"ossler, C. and Holz, P. C. and Blatt, R. and Bermudez, A. and M\"uller, M. and Monz, T. and Schindler, P.},
  title = {Demonstration of Two-Dimensional Connectivity for a Scalable Error-Corrected Ion-Trap Quantum Processor Architecture},
  journal = {Phys. Rev. X},
  volume = {15},
  pages = {041023},
  publisher = {American Physical Society},
  year = {2025},
  month = {Nov},
  doi = {10.1103/b9s1-6r44},
  numpages = {32}
}

@article{bb_logical_basis,
  author = {Eberhardt, Jens Niklas and Steffan, Valentin},
  title = {Logical Operators and Fold-Transversal Gates of Bivariate Bicycle Codes},
  year = {2024},
  eprint = {2407.03973},
  archiveprefix = {arXiv},
  url = {https://arxiv.org/abs/2407.03973}
}

@article{clement_css_surgery,
  author = {Poirson, Clément and Roffe, Joschka and Booth, Robert I.},
  title = {Engineering CSS Surgery: Compiling Any CNOT in Any Code},
  year = {2025},
  eprint = {2505.01370},
  archiveprefix = {arXiv},
  url = {https://arxiv.org/abs/2505.01370}
}

@article{color_code_magic_switch,
  author = {Lee, Seok-Hyung and Thomsen, Felix and Fazio, Nicholas and Brown, Benjamin J. and Bartlett, Stephen D.},
  title = {Low-Overhead Magic State Distillation with Color Codes},
  journal = {PRX Quantum},
  volume = {6},
  pages = {030317},
  publisher = {American Physical Society},
  year = {2025},
  month = {Jul},
  doi = {10.1103/ch5r-cnfq},
  url = {https://link.aps.org/doi/10.1103/ch5r-cnfq},
  issue = {3},
  numpages = {50}
}

@article{dimensionjump,
  author = {Li, Chao and Preskill, John and Xu, Qian},
  title = {Transversal Dimension Jump for Product qLDPC Codes},
  year = {2025},
  eprint = {2510.07269},
  archiveprefix = {arXiv},
  url = {https://arxiv.org/abs/2510.07269}
}

@article{distance_preserving_syndrome_HGP,
  author = {Manes, Argyris Giannisis and Claes, Jahan},
  title = {Distance-preserving stabilizer measurements in hypergraph product codes},
  journal = {{Quantum}},
  volume = {9},
  pages = {1618},
  publisher = {{Verein zur F{\"{o}}rderung des Open Access Publizierens in den Quantenwissenschaften}},
  year = {2025},
  month = {jan},
  doi = {10.22331/q-2025-01-30-1618},
  url = {https://doi.org/10.22331/q-2025-01-30-1618},
  issn = {2521-327X}
}

@article{lukin_magic,
  author = {Rodriguez, P. S. and others},
  title = {Experimental Demonstration of Logical Magic State Distillation},
  journal = {Nature},
  volume = {645},
  pages = {620--625},
  year = {2025},
  doi = {10.1038/s41586-025-09367-3}
}

@article{magic_cultivation,
  author = {Gidney, Craig and Shutty, Noah and Jones, Cody},
  title = {Magic State Cultivation: Growing T States as Cheap as CNOT Gates},
  year = {2024},
  eprint = {2409.17595},
  archiveprefix = {arXiv},
  url = {https://arxiv.org/abs/2409.17595}
}

@article{malcolm2025computingefficientlyqldpccodes,
  author = {Alexander J. Malcolm and Andrew N. Glaudell and Patricio Fuentes and Daryus Chandra and Alexis Schotte and Colby DeLisle and Rafael Haenel and Amir Ebrahimi and Joschka Roffe and Armanda O. Quintavalle and Stefanie J. Beale and Nicholas R. Lee-Hone and Stephanie Simmons},
  title = {{Computing efficiently in QLDPC codes}},
  year = {2025},
  eprint = {2502.07150},
  archiveprefix = {arXiv},
  url = {https://arxiv.org/abs/2502.07150},
  optprimaryclass = {quant-ph}
}

@article{qian_batched,
  author = {Xu, Qian and Zhou, Hengyun and Bluvstein, Dolev and Cain, Madelyn and Kalinowski, Marcin and Preskill, John and Lukin, Mikhail D. and Maskara, Nishad},
  title = {Batched High-Rate Logical Operations for Quantum LDPC Codes},
  year = {2025},
  eprint = {2510.06159},
  archiveprefix = {arXiv},
  url = {https://arxiv.org/abs/2510.06159}
}

@article{x1rk-yg29,
  author = {Chandra, Omprakash and Muraleedharan, Gopikrishnan and Brennen, Gavin K.},
  title = {Nonlocal resources for error correction in quantum low-density parity-check codes},
  journal = {Phys. Rev. Res.},
  volume = {7},
  pages = {033247},
  publisher = {American Physical Society},
  year = {2025},
  month = {Sep},
  doi = {10.1103/x1rk-yg29},
  issue = {3},
  numpages = {34}
}

@article{ying_li_surgery,
  author = {Zhang, Guo and Li, Ying},
  title = {Time-Efficient Logical Operations on Quantum Low-Density Parity Check Codes},
  journal = {Phys. Rev. Lett.},
  volume = {134},
  pages = {070602},
  publisher = {American Physical Society},
  year = {2025},
  month = {Feb},
  doi = {10.1103/PhysRevLett.134.070602},
  url = {https://link.aps.org/doi/10.1103/PhysRevLett.134.070602},
  issue = {7},
  numpages = {6}
}

@article{webster2026pinnacle,
  title={The Pinnacle Architecture: Reducing the cost of breaking RSA-2048 to 100 000 physical qubits using quantum LDPC codes},
  author={Webster, Paul and Berent, Lucas and Chandra, Omprakash and Hockings, Evan T and Baspin, Nou{\'e}dyn and Thomsen, Felix and Smith, Samuel C and Cohen, Lawrence Z},
  journal={arXiv preprint arXiv:2602.11457},
  year={2026}
}


\appendix

\section{Kernel and image of the seed codes of the CC code}
\label{Appendix: ker and im of CC seeds}
Here we characterise the kernel and image of seed code matrices of CC codes defined in Definition~\ref{Definition: cc code}.

Let $R=\FF_2[x]/(x^p+1)$ where $p$ is prime, and $\chi = 1+x+\cdots +x^{p-1}$.

\begin{lemma}
  \label{Lemma: diag x+1 ker and im}
  For any integers $0\leq \alpha <p$ and $0< \beta <p$, the kernel of $\diag_n[x^\alpha(1+x^\beta)]$ is the row span of $\diag_n(\chi)$. Its image is the row span of $\diag_n(1+x)$.
\end{lemma}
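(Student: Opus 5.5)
Because the matrix $\diag_n[x^\alpha(1+x^\beta)]$ is diagonal, both its kernel and its image split coordinatewise. It therefore suffices to treat the single ring element $h = x^\alpha(1+x^\beta)\in R=\FF_2[x]/(x^p+1)$ acting on $R$ by multiplication. I will show that $\ker(h\cdot) = \{0,\chi\} = R\chi$ and $\im(h\cdot)=R(1+x)$. Taking $n$-fold direct sums then gives the row spans of $\diag_n(\chi)$ and $\diag_n(1+x)$. Note that $R\chi=\{0,\chi\}$, since $x\chi=\chi$ means every multiple of $\chi$ is $0$ or $\chi$.

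\textbf{Step 1 (remove the unit).} The factor $x^\alpha$ is a unit in $R$, because $x^\alpha x^{p-\alpha}=1$. Multiplication by it is therefore a bijection, so the kernel and image of $h$ coincide with those of $1+x^\beta$.

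\textbf{Step 2 (reduce to $1+x$).} Since $p$ is prime and $0<\beta<p$, we have $\gcd(\beta,p)=1$. Hence $y:=x^\beta$ generates the cyclic group $\langle x\rangle$ of order $p$. The substitution $x\mapsto x^\beta$ is then a ring automorphism $\sigma$ of $R$: it permutes the basis $\{x^i\}$ and respects the relation $x^p=1$. This automorphism fixes $\chi$, because it permutes the monomials of $\chi$. It also maps the ideal $R(1+x)$ to $R(1+x^\beta)$. That ideal is exactly $R(1+x)$: on one side, $1+x^\beta=(1+x)(1+x+\dots+x^{\beta-1})$; conversely, $1+x = 1+y^{\gamma}$ with $\gamma\beta\equiv1 \pmod p$, which is likewise a multiple of $1+y$. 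So everything reduces to the case $\beta=1$.

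\textbf{Step 3 (the case $1+x$).} For the kernel, $(1+x)q=0$ means $xq=q$. Writing $q=\sum q_ix^i$, this forces $q_{i-1}=q_i$ for all $i$ cyclically, so $q\in\{0,\chi\}$. Both elements lie in the kernel, since $(1+x)\chi = \chi + x\chi = \chi+\chi = 0$. For the image, it equals $R(1+x)$ by definition. As a sanity check, rank–nullity over $\FF_2$ gives $\dim\im = p-1$, and $R(1+x)$ is precisely the set of even-weight elements (the kernel of the augmentation map $q\mapsto q(1)$), which also has dimension $p-1$.

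The only point needing care is Step 2, namely that $x^\beta$ generates the cyclic group. This is exactly where primality of $p$ is used: for composite $p$ the kernel of $1+x^\beta$ could be larger than $\{0,\chi\}$. Everything else is routine bookkeeping.
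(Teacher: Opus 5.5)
Your proof is correct, and it takes a different route from the paper's. The paper proves two easy containments: $\chi$ annihilates every binomial $x^\alpha(1+x^\beta)$, and every such binomial is a multiple of $1+x$. It then appeals to rank--nullity over $\FF_2$, writing $np = n(p-1)+n$ and concluding that ``the dimensions match.'' You instead strip off the unit $x^\alpha$ and use $\gcd(\beta,p)=1$ to show $R(1+x^\beta)=R(1+x)$. You then compute $\ker(1+x)=\{0,\chi\}$ directly by the cyclic-shift argument.

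What your route buys is the reverse inclusion. The paper's containments only give $\dim\ker\ge n$ and $\dim\im\le n(p-1)$. These are compatible with rank--nullity for any common excess $t$, namely $\dim\ker=n+t$ and $\dim\im=n(p-1)-t$. So the counting step tacitly assumes the very dimensions it is meant to confirm, and primality of $p$ is never used there. It must be used somewhere. For $p=4$ and $\beta=2$ one has $(1+x^2)^2=1+x^4=0$, so $1+x^2$ lies in the kernel but not in $\{0,\chi\}$.

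Your Step 2 supplies exactly the missing input. For example, $1+x\in R(1+x^\beta)$ gives $\dim\im\ge n(p-1)$, which would close the paper's sandwich argument.

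A small simplification is available. Once you have the ideal equality $R(1+x^\beta)=R(1+x)$, the automorphism $x\mapsto x^\beta$ is unnecessary for the kernel as well. The kernel of multiplication by $a$ is the annihilator of the ideal $aR$, so it depends only on that ideal.
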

\begin{proof}
  Over $R=\FF_2[x]/(x^p+1)$, multiplication between $\chi =1+x+\cdots +x^{p-1}$ with any non-zero monomial remains $\chi$. Any binomial in the form $x^\alpha(1+x^\beta)$ is annihilated by $\chi$ as
  \eqa{
    x^\alpha(1+x^\beta)\chi = x^\alpha\chi +x^{\alpha+\beta}\chi = 2\chi = 0.
  }
  This gives \(\diag_n[x^\alpha(1+x^\beta)] \diag_n(\chi)^* = 0,\) indicating that the row span of $\diag_n(\chi)$ lives in the kernel of $\diag_n[x^\alpha(1+x^\beta)] $.
  For the image, note each \begin{equation}x^\alpha(1+x^\beta) = x^\alpha (1+x)(1+x+\cdots+x^{\beta-1})\end{equation} is a multiple of $1 + x$. Hence, the image of $\diag_n[x^\alpha(1+x^\beta)]$ is contained in the row span of $\diag_n(1+x)$.

  We check the row spans obtained above are exactly the kernel and image we want via the rank-nullity theorem over $\FF_2$. As $\BB(\diag_n[x^\alpha(1+x^\beta)]): \FF_2^{np} \to \FF_2^{np}$, we have
  \eqa{
    np =&\ \dim \im{\BB(\diag_n[x^\alpha(1+x^\beta)])} \\
    &+\dim \ker{\BB(\diag_n[x^\alpha(1+x^\beta)])} \\
    =&\ n(p-1)+n\\
    =&\ \dim \im{\diag_n(1+x)} + \dim \im{\diag_n(\chi)}.
  }
  The dimensions match and the row spans are the kernel and image we are looking for.
\end{proof}
\begin{corollary}
  \label{Corollary: ker and im of Ha}
  The kernel of a seed code of a CC code, $H_a$, defined in Definition~\ref{Definition: cc code} is the row span of $\diag_{n_a}(\chi)$. Its image is the row span of $\diag_{n_a}(1+x)$. The same holds for $H_a^*$.
\end{corollary}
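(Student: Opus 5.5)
The approach is to mimic the proof of Lemma~\ref{Lemma: diag x+1 ker and im}: first prove the two easy containments entrywise, then upgrade them to equalities by a dimension count over $\FF_2$ after passing to the binary representation $\BB$. The same argument then transfers to $H_a^*$.

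\textbf{Containments.} For the kernel, every non-zero entry of $H_a$ has the form $x^\alpha(1+x^\beta)$ with $0<\beta<p$. By Lemma~\ref{Lemma: diag x+1 ker and im}, each such entry is annihilated by $\chi$. Applying $H_a$ to $\chi e_j$ yields column $j$ of $H_a$ multiplied by $\chi$, which vanishes. Hence $\rowsp\diag_{n_a}(\chi)\subseteq\ker H_a$. For the image, every entry of $H_a$ is a multiple of $1+x$, so $\im H_a\subseteq (1+x)R^{n_a}=\rowsp\diag_{n_a}(1+x)$. The two subspaces have binary dimensions $n_a$ and $n_a(p-1)$, and these sum to $n_ap$. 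By rank--nullity for $\BB(H_a)$ it therefore suffices to show $\rank\BB(H_a)\ge n_a(p-1)$; both containments then become equalities simultaneously.

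\textbf{Rank bound (main obstacle).} For odd prime $p$, decompose $R\cong \FF_2[x]/(1+x)\oplus K$ with $K=\FF_2[x]/(\chi)$ by the Chinese remainder theorem, since $x^p+1=(1+x)\chi$ is coprime-factored. This decomposition is compatible with $\BB$, so $\BB(H_a)$ is equivalent to a block sum of two matrices:
\begin{itemize}
\item the reduction of $H_a$ mod $(1+x)$, which is the zero matrix because every binomial evaluates to $0$ at $x=1$;
\item the image $H_a^K$ of $H_a$ over $K$, a $p-1$ dimensional algebra, which is a product of fields $\FF_{2^r}$.
\end{itemize}
Thus $\rank\BB(H_a)=n_a(p-1)$ exactly when $H_a^K$ is invertible, i.e. when $\det H_a$ is non-zero in every field factor of $K$.

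This is where the hypothesis ``full rank'' in Definition~\ref{Definition: cc code} must be used. The weak ring-theoretic reading (some non-zero $n_a\times n_a$ minor) only gives $\det H_a\neq 0$ in $R$, which could still vanish in one of the field components. I would therefore make explicit that ``full rank'' is to be read as $\det H_a$ being a unit in $K$, equivalently $\rank\BB(H_a)=n_a(p-1)$, the maximum permitted by the containment of the image. I would note that this is exactly what the numerical search in Appendix~\ref{Appendix: numerical code dist} verifies. The case $p=2$ is degenerate, because $\chi=1+x$ and $R$ is local, and the same dimension count applies directly.

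\textbf{Transfer to $H_a^*$.} The entries of $H_a^*$ are $x^{-\alpha}(1+x^{-\beta})=x^{p-\alpha-\beta}(1+x^{\beta})$, which are again binomials of the required form, so both containments hold verbatim for $H_a^*$. Moreover $\BB(H_a^*)=\BB(H_a)^\intercal$ has the same rank, so the dimension count goes through unchanged. This proves the statement for $H_a^*$ as well.
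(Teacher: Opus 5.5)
Your outline is the paper's: the same entrywise containments $\rowsp\diag_{n_a}(\chi)\subseteq\ker H_a$ and $\im H_a\subseteq\rowsp\diag_{n_a}(1+x)$, then rank--nullity over $\FF_2$, then the same transfer to $H_a^*$. The difference lies in the only step with real content. The paper simply asserts that the dimension check of Lemma~\ref{Lemma: diag x+1 ker and im} ``still holds''. But the containments give only $\dim\ker\BB(H_a)\ge n_a$ and $\rank\BB(H_a)\le n_a(p-1)$, and the paper never supplies the reverse inequality.

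Your splitting $R\cong\FF_2[x]/(1+x)\oplus K$ with $K=\FF_2[x]/(\chi)$ identifies exactly what is required, namely invertibility of $H_a$ over $K$. Your suspicion that the paper's ring-theoretic notion of full rank does not provide this is correct, so the statement as literally posed is false. For $p=7$, where $\chi=(1+x+x^3)(1+x^2+x^3)$, take
\[
H_a=\begin{pmatrix}1+x & 1+x^2\\ x^4+x^5 & 1+x\end{pmatrix},\qquad \det H_a=(1+x^2)(1+x^4+x^5).
\]
This determinant is non-zero modulo $1+x^2+x^3$, hence non-zero in $R$. So $H_a$ meets every condition of Definition~\ref{Definition: cc code} in the paper's sense. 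However, the determinant vanishes modulo $1+x+x^3$, so $H_a$ has rank one over that $\FF_8$ factor and $\dim\ker\BB(H_a)=2+3=5>n_a$. Taking $H_b=H_a$ even gives an LP code with $k=14$ instead of $2n_an_b=8$. So the hypothesis you propose ($\det H_a$ a unit in $K$, equivalently $\rank\BB(H_a)=n_a(p-1)$) is a necessary strengthening, not just a convenient reading. With it, your argument is a complete proof where the paper's is not.

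Two minor points remain. First, Appendix~\ref{Appendix: numerical code dist} only reports \texttt{QDistRnd} distance statistics, so it does not certify the rank condition; you should check $\rank\BB(H_a)=n_a(p-1)$ directly for the listed seeds. Second, the exponent in $x^{p-\alpha-\beta}(1+x^\beta)$ must be read modulo $p$. Also, for $p=2$ with $n_a\ge 2$ every ring determinant vanishes because $(1+x)^2=0$, so there only your $\BB$-rank formulation of the hypothesis is meaningful.
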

\begin{proof}
  Any binomial entry of $H_a$ can be represented as $x^\alpha(1+x^\beta)$ for some $0\leq \alpha <p$ and $0< \beta <p$. As $x^\alpha(1+x^\beta) \chi = 0$, we know the row span of $\diag_{n_a}(\chi)$ lies in the kernel of $H_a$. Also, as $x^\alpha(1+x^\beta) = x^\alpha (1+x)(1+x+\cdots+x^{\beta-1})$ is
  a multiple of $1 + x$. Hence, the image of $H_a$ is contained in the row span of $\diag_{n_a}(1+x)$.

  We then end this proof by simply following the proof of Lemma~\ref{Lemma: diag x+1 ker and im}, as the dimension check with rank-nullity theorem over $\FF_2$ still holds.
  As for $H_a^*$, the conjugate transpose does not affect the form $H_a$ takes. Hence, the statement still holds.
\end{proof}

\section{Merge two $Z$-type logical operators}
\label{Appendix: merge two Z logical}
\begin{corollary}[Merge two $Z$-type logical operators]
  Let $\bZ_\alpha$ and $\bZ_\beta$ be two logical $Z$ operators of a CC code. If they are supported in different sectors, or are aligned in the same row or column within a sector, then one can construct a product connection code $\PPP$ with check matrix $H_Z'$ such that there exists $v\in \rowsp H_Z'$ satisfying
  \begin{equation}
    \supp(\bZ_\alpha \otimes \bZ_\beta)= \chi v .
  \end{equation}
\end{corollary}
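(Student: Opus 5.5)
We build $H_a'$ and $H_b'$ explicitly with entries in $\{0,1\}$, so that a single row of $H_Z'=\left(I_{n_a}\ox {H_b'}^* \,\middle|\, {H_a'}^*\ox I_{n_b}\right)$ has ones exactly at the two ring positions carrying $\bZ_\alpha$ and $\bZ_\beta$. Multiplying that row by $\chi$ then gives $\supp(\bZ_\alpha\otimes\bZ_\beta)$, by the clustered basis of Theorem~\ref{Theorem: logical operator basis}. Index qubit positions in the left sector $A_1\ox B_0$ by pairs $(i,j)$ with $i\le n_a$, $j\le n_b$, and likewise in the right sector $A_0\ox B_1$. The row of $H_Z'$ labelled by a $Z$-check $(i,j)\in A_1\ox B_1$ has left-sector entries $({H_b'}^*)_{j',j}$ at positions $(i,j')$, and right-sector entries $({H_a'}^*)_{i',i}$ at positions $(i',j)$. (I will double-check this index convention against the definition of $\ox$ before committing, since transposes swap roles.) Because every entry is $0$ or $1$, the involution $*$ acts trivially on entries, and $\chi$ times a $0/1$ row simply replaces each $1$ by $\chi$. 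This is the same observation used in the proof of Theorem~\ref{Theorem: Merge Z logical}.

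The argument then splits into three cases. \emph{Case 1, different sectors:} $\bZ_\alpha$ sits at left position $(i_1,j_1)$ and $\bZ_\beta$ at right position $(i_2,j_2)$. Choose the check row $(i_1,j_2)$. Set $H_b'$ to have a single $1$ at entry $(j_2,j_1)$, so that ${H_b'}^*$ has its $1$ at $(j_1,j_2)$, and set $H_a'$ to have a single $1$ at entry $(i_1,i_2)$, so that ${H_a'}^*$ has its $1$ at $(i_2,i_1)$. Row $(i_1,j_2)$ then has exactly one $1$ on the left, at $(i_1,j_1)$, and exactly one $1$ on the right, at $(i_2,j_2)$. \emph{Case 2, same sector and same row:} say left positions $(i,j_1)$ and $(i,j_2)$. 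Take $H_a'=0$ and let $H_b'$ have ones at $(j_1,j)$ and $(j_2,j)$ for some fixed $j$. Row $(i,j)$ then has ones at $(i,j_1)$ and $(i,j_2)$ and nothing on the right. \emph{Case 3, same sector and same column:} use the symmetric construction with $H_b'=0$, so that the right sector picks up two entries sharing a column. For the remaining alignments (left-sector column, right-sector row) we swap the roles of $H_a'$ and $H_b'$.

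In each case we take $v$ to be that row and verify $\chi v=\supp(\bZ_\alpha\otimes\bZ_\beta)$ using the basis matrix $L$ of Eq.~\eqref{eq: logical basis matrix}. We also note that $\PPP=\lpAB$ is automatically a valid LP code of the same size, so Lemma~\ref{Lemma: surgery for product codes} applies.

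The main obstacle is the bookkeeping: pinning down which tensor factor indexes rows and which indexes columns in $I\ox H_b'^*$ versus $H_a'^*\ox I$. A related subtlety is that in the same-sector case one sector's block does vary in the other index, and only one sector can be hit per row when the other seed is zero. This is exactly why within-sector merges need the row or column alignment. I would settle the convention by working through the $\code{24,8,3}$ example from Example~\ref{Example: 12,4,3} first.
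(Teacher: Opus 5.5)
Your cross-sector construction and your handling of one alignment per sector follow the paper's route: $0/1$ seeds chosen so that a single row of $H_Z'$ carries exactly the two ones. The cases you cover this way are left-sector operators sharing the $A$-index and right-sector operators sharing the $B$-index.

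\textbf{The gap.} The step that fails is ``for the remaining alignments we swap the roles of $H_a'$ and $H_b'$''. Swapping the seeds only produces another connection code of the same block form $H_Z'=\bigl(I_{n_a}\ox{H_b'}^*\mid{H_a'}^*\ox I_{n_b}\bigr)$, which is the only form covered by Lemma~\ref{Lemma: surgery for product codes}. In that form, row $(i,j)$ has left-sector support inside $\{(i,\cdot)\}$ and right-sector support inside $\{(\cdot,j)\}$, for every choice of seeds. Consider two left-sector operators at $(i_1,j)$ and $(i_2,j)$ with $i_1\neq i_2$, or two right-sector operators at $(i,j_1)$ and $(i,j_2)$ with $j_1\neq j_2$. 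These never lie in a common row, so no single-row $v$ exists. Your ``take $v$ to be that row'' cannot handle these two alignments.

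\textbf{The missing idea.} The paper takes $v$ as a sum of two rows whose contributions in the opposite sector coincide and cancel.
\begin{itemize}
\item Left-sector case: let ${H_a'}^*$ have ones at $(i_1,1)$ and $(i_2,1)$, and let ${H_b'}^*$ have a single one at $(1,j)$. Rows $(i_1,1)$ and $(i_2,1)$ then have left entries at $(i_1,j)$ and $(i_2,j)$ and the same right entry $(1,1)$. Their sum is therefore the required $v$.
\item Right-sector case: take ${H_a'}^*$ with a single one at $(1,i)$ and ${H_b'}^*$ with ones at $(j_1,1)$ and $(j_2,1)$. Then sum rows $(1,j_1)$ and $(1,j_2)$.
\end{itemize}
You could instead sum two single-logical rows, e.g.\ with $H_a'=0$. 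That meets the literal statement, but it measures $\bZ_\alpha$ and $\bZ_\beta$ individually, so it is not a joint merge.

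\textbf{Index convention.} The convention you flagged is indeed transposed. Row $(i,j)$ of $I_{n_a}\ox{H_b'}^*$ has entry $({H_b'}^*)_{j,j'}$ at $(i,j')$. Row $(i,j)$ of ${H_a'}^*\ox I_{n_b}$ has entry $({H_a'}^*)_{i,i'}$ at $(i',j)$. Your Case~2 choice already fits the correct convention. In Case~1 your check row $(i_1,j_2)$ is right, but both seeds must be transposed: $H_b'$ needs its one at $(j_1,j_2)$ and $H_a'$ its one at $(i_2,i_1)$.
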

\begin{proof}
  For simplicity, we introduce the notation for logical operators as in Appendix~\ref{Appendix: examples}: given a data code patch with $k = 2n_an_b$ logical qubits, we separate its logical qubits into two sectors. Each $Z$-type logical operator is then denoted as
  \begin{equation}
    \bZ_\alpha^\lef \tor \bZ_\alpha^\rig \text{ for }i\in \LR{1,\cdots,n_an_b}.
  \end{equation}

  If two logical operators are in different sectors, $\bZ_\alpha^\lef$ and $\bZ_\beta^\rig$, we choose
  \begin{equation}
    (H_a^*)_{i_a,j_a} = 1\tand (H_b^*)_{i_b,j_b} = 1,
  \end{equation}
  to be the only non-zero entries where
  \eqa{
    i_a &= \Lr{\frac{\alpha}{n_b}}+1,\ j_a = \Lr{\frac{\beta}{n_b}}+1,\\
    i_b &\equiv \alpha \mod n_b \tand j_b \equiv \beta \mod n_b.
  }
  The $((i_a-1)n_b +i_b)$\textsuperscript{th} row of $H_Z'$ of this $\CCC'=\lpAB$ gives a row vector with the $\alpha$\textsuperscript{th} and $\beta$\textsuperscript{th} entries to being $1$ and the only non-zero entries.

  If two logical operators are both in the left sector, $\bZ_\alpha^\lef$ and $\bZ_\beta^\lef$ with $\Lr{\frac{\alpha}{n_b}} = \Lr{\frac{\beta}{n_b}}$ (i.e., they are aligned in the same column), we simply choose
  \begin{equation}
    H_a^* = 0 \tand (H_b^*)_{1,\alpha'} = (H_b^*)_{1,\beta'} = 1,
  \end{equation}
  to be the only non-zero entries with $\alpha \equiv \alpha' \mod n_b$ and $\beta \equiv \beta' \mod n_b$, where $\alpha',\beta'\in \LR{1,2,\dots,n_b}$. Then, the $j_a$\textsuperscript{th} row is what we want. However, if $\Lr{\frac{\alpha}{n_b}} \neq \Lr{\frac{\beta}{n_b}}$ but $\alpha \equiv \beta \mod n_b$ (i.e., they are aligned in the same row), we let
  \begin{equation}
    (H_a^*)_{i_a,1} = (H_a^*)_{j_a,1} = 1 \tand (H_b^*)_{1,\alpha'} = 1,
  \end{equation}
  the sum of the $(i_an_b + 1)$\textsuperscript{th} and $(j_an_b + 1)$\textsuperscript{th} row from $H_Z'$ gives the row vector we want.

  If two logical operators are both in the right sector, $\bZ_\alpha^\rig$ and $\bZ_\beta^\rig$ with $\Lr{\frac{\alpha}{n_b}} = \Lr{\frac{\beta}{n_b}}$ (i.e., they are aligned in the same column), we choose
  \begin{equation}
    (H_a^*)_{1,i_a} = 1 \tand (H_b^*)_{\alpha',1} = (H_b^*)_{\beta',1} = 1,
  \end{equation}
  and the sum of the $(i_an_b + \alpha')$\textsuperscript{th} and $(a_an_b + \beta')$\textsuperscript{th} row from $H_Z'$ gives the row vector we want. If the logical operators are aligned in the same row on the right sector, i.e., if $\Lr{\frac{\alpha}{n_b}} \neq \Lr{\frac{\beta}{n_b}}$ but $\alpha \equiv \beta \mod n_b$, we simply choose
  \begin{equation}
    (H_a^*)_{1,i_a} = (H_a^*)_{1,j_a} = 1 \tand H_b^* =0,
  \end{equation}
  and the $\alpha'$\textsuperscript{th} row is what we want.
\end{proof}

\section{Lemma for merged 
code dimension}
\begin{lemma}
  \label{Lemma: kernel of merged code} Given matrices $A$ and $B$ over a field, we have
  \begin{equation}
    \dim \ker
    \begin{pmatrix}
      A & B\\
      & A
    \end{pmatrix} = 2\dim \ker A + \rank A - \rank
    \begin{pmatrix}
      A & BK
    \end{pmatrix},
  \end{equation}
  where $K$ is the basis matrix of $\ker A$.
\end{lemma}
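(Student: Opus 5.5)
We use rank--nullity for the block matrix and compute its rank by reducing it to a rank statement involving only $A$ and $BK$. Let $A$ be $m\times n$, so $M=\begin{pmatrix} A & B\\ & A\end{pmatrix}$ is $2m\times 2n$. Then $\dim\ker M = 2n-\rank M$. The target identity, using $n=\dim\ker A+\rank A$, is equivalent to
\begin{equation*}
\rank M = \rank A + \rank\begin{pmatrix} A & BK\end{pmatrix}.
\end{equation*}
So the whole proof reduces to establishing this rank formula.

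\textbf{Step 1: choose a column basis.} Pick a basis of the column domain $\FF^n$ adapted to $\ker A$: columns $K$ spanning $\ker A$, completed by columns $C$ so that $(C\ K)$ is invertible. Right-multiplying $M$ by the invertible block-diagonal matrix $\diag\big((C\ K),(C\ K)\big)$ does not change its rank. The result is
\begin{equation*}
\begin{pmatrix} AC & 0 & BC & BK\\ 0&0& AC & 0\end{pmatrix},
\end{equation*}
where $AC$ has full column rank $\rank A$.

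\textbf{Step 2: eliminate $BC$.} Because $AC$ has full column rank, there is a left inverse, but it is cleaner to argue with column operations. The columns of $\begin{pmatrix}BC\\ AC\end{pmatrix}$ are linearly independent, and no nonzero combination of them lies in the span of the remaining columns, since such a combination has nonzero lower block while all other columns have zero lower block. Hence
\begin{equation*}
\rank M=\rank(AC)+\rank\begin{pmatrix}AC & BK\end{pmatrix}.
\end{equation*}
Finally, $\operatorname{colspan}(AC)=\operatorname{colspan}(A)$, so $\rank\begin{pmatrix}AC & BK\end{pmatrix}=\rank\begin{pmatrix}A & BK\end{pmatrix}$. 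Substituting back gives the claim.

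\textbf{Main obstacle.} The only delicate point is the justification in Step 2 that the columns $\begin{pmatrix}BC\\ AC\end{pmatrix}$ contribute exactly $\rank A$ to the rank, independently of how $BC$ interacts with $\begin{pmatrix}A & BK\end{pmatrix}$. I would state this as a direct-sum decomposition of the column space of $M$: an upper part with zero lower block, and the span of those columns, which meets the upper part trivially because of the full-column-rank lower block $AC$.

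An alternative route is to compute $\ker M$ directly. A vector $(u,v)$ lies in $\ker M$ iff $v\in\ker A$ and $Au=Bv$. Writing $v=Kw$, the pairs $(u,w)$ satisfying $Au + BKw=0$ form $\ker\begin{pmatrix}A & BK\end{pmatrix}$, and this space maps bijectively onto $\ker M$ via $(u,w)\mapsto(u,Kw)$ since $K$ is injective. Rank--nullity then gives
\begin{equation*}
\dim\ker M = n+\dim\ker A-\rank\begin{pmatrix}A & BK\end{pmatrix},
\end{equation*}
and substituting $n=\dim\ker A+\rank A$ yields the statement. I would present this shorter kernel argument as the actual proof.
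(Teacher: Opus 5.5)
Your proposal is correct, and the kernel argument you designate as the actual proof is the paper's own proof: it writes $v=Kw$, shows that $(u,w)\mapsto(u,Kw)$ is an isomorphism $\ker\begin{pmatrix}A & BK\end{pmatrix}\cong\ker M$ (injective because the columns of $K$ are independent), and applies rank--nullity to the $n+\dim\ker A$ columns of $\begin{pmatrix}A & BK\end{pmatrix}$. Your adapted-basis computation in Steps 1--2 is also valid and gives the equivalent identity $\rank M=\rank A+\rank\begin{pmatrix}A & BK\end{pmatrix}$ directly.

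One slip: you write $Au=Bv$, which over a general field should be $Au=-Bv$. This does not affect the argument: it is exact over $\FF_2$, the paper's setting, and your next line $Au+BKw=0$ is already correct.
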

\begin{proof}
  Let $A\in \FF^{m\x n}$, $B\in \FF^{m\x n}$, and let
  $K\in \FF^{n\x k}$ be the basis matrix of $\ker A$, so that
  $\ker A = \operatorname{im} K$ and $AK = 0$.
  Consider the block matrix
  \(
    M :=
    \begin{pmatrix}
      A & B\\
      0 & A
    \end{pmatrix}
  \).
  For vectors $
  \begin{pmatrix}
    u\\
    v
  \end{pmatrix}\in \FF^{2n}$ we have
  \(
    M
    \begin{pmatrix}
      u\\
      v
    \end{pmatrix}
    =
    \begin{pmatrix}
      Au+Bv\\
      Av
    \end{pmatrix}
  \).
  Thus, for vectors $
  \begin{pmatrix}
    u\\
    v
  \end{pmatrix}\in \ker M$, we have
  \begin{equation*}
    \left\{
      \begin{aligned}
        &A u + M v = 0,\\
        &Av = 0.
      \end{aligned}
      \right.
    \end{equation*}
    Since $Av=0$ implies $v\in\ker A = \operatorname{im} K$, there exists $w\in \FF^k$ such that
    $v = Kw$.
    Substituting this into the first equation above, we find
    \begin{equation}
      \label{eq: lemma, kernel of merged code}
      Au + BKw = 0,\text{ i.e., }
      \begin{pmatrix} A & BK
      \end{pmatrix}
      \begin{pmatrix}
        u\\
        w
      \end{pmatrix} = 0.
    \end{equation}
    At this point,
    define a linear map
    \(
      S :  \FF^n \x \FF^k \to \FF^n \x \FF^n;\
      S
      \begin{pmatrix}
        u\\
        w
      \end{pmatrix} =
      \begin{pmatrix}
        u\\
        Kw
      \end{pmatrix}
    \).
    We now turn to showing that $S$ restricts to a linear isomorphism
    \begin{equation}
      \ker\!
      \begin{pmatrix} A & BK
      \end{pmatrix}
      \;\cong\;
      \ker M.
    \end{equation}
    We begin by
    proving that $S$ maps $\ker
    \begin{pmatrix} A & BK
    \end{pmatrix}$ into $\ker M$:
    if $
    \begin{pmatrix}
      u\\
      w
    \end{pmatrix}$ satisfies $Au+BKw=0$, then
    \begin{equation}
      M S
      \begin{pmatrix}
        u\\
        w
      \end{pmatrix}
      =
      \begin{pmatrix}
        Au+BKw\\
        AKw
      \end{pmatrix}
      =
      0,
    \end{equation}
    since $AK=0$.  Hence, we find $S
    \begin{pmatrix}
      u\\
      w
    \end{pmatrix}\in\ker M$ for all $
    \begin{pmatrix}
      u\\
      w
    \end{pmatrix} \in \ker
    \begin{pmatrix} A & BK
    \end{pmatrix}$.

    Then $S$ is surjective onto $\ker M$:
    given $
    \begin{pmatrix}
      u\\
      v
    \end{pmatrix}\in\ker M$, Eq.~\eqref{eq: lemma, kernel of merged code} indicates that there exists some $
    \begin{pmatrix}
      u\\
      w
    \end{pmatrix}\in\ker
    \begin{pmatrix} A & BK
    \end{pmatrix}$ such that
    $S
    \begin{pmatrix}
      u\\
      w
    \end{pmatrix}=
    \begin{pmatrix}
      u\\
      v
    \end{pmatrix}$.
    Lastly, we show that $S$
    is injective: if $S
    \begin{pmatrix}
      u\\
      w
    \end{pmatrix}=S
    \begin{pmatrix}
      u'\\
      w'
    \end{pmatrix}$ then $u=u'$ and $Kw = Kw'$, and since the columns of $K$ are linearly independent (which is chosen to be a basis matrix) we get $w=w'$.

    Therefore, we conclude
    \(
      \ker M \cong \ker
      \begin{pmatrix} A & BK
    \end{pmatrix}\)
    and hence,
    \(\dim\ker M = \dim\ker
      \begin{pmatrix} A & BK
      \end{pmatrix}
    \).

    Now compute the dimension of this kernel.
    The matrix $
    \begin{pmatrix} A & BK
    \end{pmatrix}$ has $n+k$ columns, with
    $k=\dim\ker A$.  According to rank-nullity theorem,
    \eqa{
      \dim\ker M &= \dim\ker
      \begin{pmatrix} A & BK
      \end{pmatrix} \\
      &= (n+k) - \rank
      \begin{pmatrix} A & BK
      \end{pmatrix} \\
      &= 2\dim \ker A + \rank A
      - \rank
      \begin{pmatrix} A & BK
      \end{pmatrix}.
    }
  \end{proof}

\section{Product surgery is distance preserving for HGP codes}
\label{Appendix: HGP distance preserving}

HGP codes are LP codes defined over $R=\FF_2$. It is equipped with a canonical logical operator basis~\cite{Armanda_partitioning} which can be derived using Künneth formula as well. In this appendix, we show owing this logical operator basis, the parallel product surgery introduced in Lemma~\ref{Lemma: surgery for product codes} is always distance preserving for HGP codes.

Consider a HGP code with the following parity check matrices,
\eqa{
  H_X &= \left(
    \begin{array}{c|c}
      H_a \otimes I_{n_b}& I_{n_a} \otimes H_b
    \end{array}
  \right),\\
  H_Z &= \left(
    \begin{array}{c|c}
      I_{n_a} \otimes H_b^\intercal & H_a^\intercal \otimes I_{n_b}
    \end{array}
  \right).
} 
As given in Eq.~\eqref{eq: logical operator partitioning}, an arbitrary logical $Z$ operator consisting
of a product of logical $Z$ operators and $Z$ stabilisers takes the form 
of
\begin{equation*}
    \bZ = \begin{pmatrix}
        \sum_{i,j} \lambda_{i,j} x_i \otimes y_j + \sum_{s} w_s \otimes H_b \tilde{w}_s\\
        \sum_{lm} \kappa_{lm} a_l\otimes b_m + \sum_{s} H_a w_s \otimes  \tilde{w}_s
    \end{pmatrix}\coloneqq \begin{pmatrix}
        Z_1\\Z_2
    \end{pmatrix},
\end{equation*}
where $\LR{x_i}$ and $\LR{b_m}$ are basis of classical codewords or $H_a$ and $H_b$, respectively and $\LR{y_j}$ and $\LR{a_l}$ are sets of unit vectors whose linear combinations are outside the image of $H_b$ and $H_a$, respectively. $\lambda_{i,j}$ and $\kappa_{lm}$ are arbitrary binary values~\cite{distance_preserving_syndrome_HGP}. $w=\sum_{s} w_s\otimes \tilde{w}_s$ is an arbitrary vector $H_Z^\intercal$ acts on. Without loss of generality, assume some $\sum_{j}\lambda_{i,j}x_j\neq 0$.
Since $\sum_{j}\lambda_{i,j}x_j\neq 0$, $\bZ$ must have hamming weight $\geq d_a$ (which is the classical code distance of $H_a$) and there are therefore at least $d_a$ distinct unit vectors $e_t$ such that $e_t\cdot (\sum_{j}\lambda_{i,j}x_j) = 1$~\cite{distance_preserving_syndrome_HGP}.

We now consider apply the parallel product surgery to this code while choosing another HGP code $(H_X',H_Z')$ as product connection code $\PPP$:
\begin{equation*}
\begin{tikzcd}[]
  Q_2^{\mathrm{aux}} \arrow[r, "H_Z^\intercal"]\arrow[rd,"{H_Z'}^\intercal"] & Q_{1}^{\mathrm{aux}} \arrow[r, "H_X"]\arrow[rd,"H_X'"] & Q_{0}^{\mathrm{aux}}\\
  Q_2 \arrow[swap,r, "H_Z^\intercal"] & Q_{1} \arrow[swap,r, "H_X"] & Q_0
\end{tikzcd}.
\end{equation*}
The merged code is equipped with parity check matrices
\begin{equation*}
\widetilde{H_X}=
\begin{pmatrix}
  H_X & H_X'\\
  & H_X
\end{pmatrix}\tand \widetilde{H_Z}=
\begin{pmatrix}
  H_Z & \\
  H_Z' & H_Z
\end{pmatrix}.
\end{equation*}
We want to show that the minimum weight of logical operators of this merged code is lower bounded by the distance of the data code. For a $Z$-type logical operator, the only problematic scenario is when a $Z$-type logical operator is constructed as
\begin{equation}
\bZ^\MMM = \begin{pmatrix}
    \bZ\\0
\end{pmatrix} + \begin{pmatrix}
    H_Z'^\intercal \\H_Z^\intercal 
\end{pmatrix}u \coloneqq \begin{pmatrix}
    Z^{\QQQ}\\
    Z^\aux
\end{pmatrix},
\end{equation}
where the second term in the sum does not vanish in both patches~\cite{Xanadu_Homological_measurements,Qian_high_rate}. Expanding this arbitrary $u = \sum_{s'}u_{s'}\otimes \tilde{u}_{s'}$, we zoom in to the first row of $Z^\QQQ$, corresponding to the first row of $\bZ$ above,
\begin{equation}
Z^\QQQ_1 = \sum_{i,j} \lambda_{i,j} x_i \otimes y_j + \sum_{s} w_s \otimes H_b \tilde{w}_s + \sum_{s'} u_{s'} \otimes H_b' \tilde{u}_{s'}.
\end{equation}

We apply $(e_t^\intercal\otimes I_{m_b})$ to $Z^\QQQ_1$. In any case $(e_t^\intercal\otimes I_{m_b})Z^\QQQ_1 = 0$, we know
\begin{equation}
\sum_{s'} (e_t^\intercal u_{s'})H_b' \tilde{u}_{s'} \neq 0,
\end{equation}
as $(e_t^\intercal\otimes I_{m_b})Z_1 \neq 0$~\cite{distance_preserving_syndrome_HGP}. This means $(e_t^\intercal u_{s'})\neq 0$ and leading to a non-vanishing part on the auxiliary patch
\begin{equation}
Z_1^\aux = \sum_{s'} (e_t^\intercal u_{s'})H_b \tilde{u}_{s'} \neq 0,
\end{equation}
as $H_Z^\intercal u $ is assumed to be non-zero. Hence, for each weight reduction of the logical operators supported on the data patch, there is always at least weight one compensation on the auxiliary patch. Hence, the $Z$-type distance is preserved throughout the parallel product surgery for HGP codes.

The $X$-type distance is always preserved for arbitrary CSS surgery if we are performing $Z$-type merges~\cite{Xanadu_Homological_measurements}.
\section{Proof of Theorem~\ref{Theorem: full clifford group}}
\label{Appendix: full clifford}
\begin{customthm}{\ref{Theorem: full clifford group}}
  Let $m\ge 3$. Suppose the available logical operations on $m$ qubits include the single-qubit Paulis
  $X_i,Z_i$ for all $i$, arbitrary $\cnot{i}{j}$ for all $i\neq j$, the paired phase gates
  $\sgate_i\sgate_j^\dagger$ for all $i\neq j$, and the global Hadamard $H^{\otimes m}$.
  Then these gates generate the full $m$-qubit Clifford group up to global phase:
  \begin{equation}
    \CCC_m/\mathrm{U}(1) \cong \big\langle X_i,Z_i,\cnot{i}{j},\sgate_i\sgate_j^\dagger,H^{\otimes m}\big\rangle .
  \end{equation}
\end{customthm}

\begin{proof}
  We use the standard description of Clifford operations via their conjugation action on the Pauli group.
  Modulo global phases $\mathrm{U}(1) = \LR{e^{i\theta}:\theta \in \mathbb R}$, any $m$-qubit Pauli can be written uniquely as
  $X^x Z^z:=\bigotimes_{k=1}^m X_k^{x_k} Z_k^{z_k}$ with $(x,z)\in\FF_2^m\times \FF_2^m$.
  We represent it by the binary vector $(x\mid z)\in \FF_2^{2m}$. Two Paulis commute if and only if the symplectic form product
  $x\cdot z'+z\cdot x'$ over $\FF_2$ vanishes.
  Every Clifford $U$ conjugates Paulis to Paulis, hence induces a linear map
  $(x\mid z)\mapsto(x'\mid z')$ preserving this symplectic form; equivalently, $U$ induces a matrix
  $\pi(U)\in \Sp(2m,2)$, where $\Sp(2m,2)$ is the symplectic group of $\FF_2^{2m}$. The homomorphism $\pi:\CCC_m\to \Sp(2m,2)$ is surjective and its kernel is the Pauli
  subgroup up to global phase, so to generate $\CCC_m/\mathrm{U}(1)$ it suffices to generate the full symplectic action and
  also have access to Paulis. The latter is trivial as we are given access to $X_i,Z_i$.

  In this representation, $\cnot{i}{j}$ maps $X_i\mapsto X_iX_j$ and
  $Z_j\mapsto Z_iZ_j$. Consequently arbitrary CNOTs generate the embedded subgroup of $\Sp(2m,2)$,
  \begin{equation}
    L(GL(m,2))=\left\{\,L(A):=
      \begin{pmatrix}A&0\\0&(A^{-1})^T
    \end{pmatrix}:A\in GL(m,2)\right\},
  \end{equation}
  since elementary row-additions generate all $A\in GL(m,2)$.

  The global Hadamard swaps $X_k\leftrightarrow Z_k$ for each $k$, so it swaps the two blocks, acting as
  \begin{equation}
    J:=\pi(H^{\otimes m})=
    \begin{pmatrix}0&I\\ I&0
    \end{pmatrix}.
  \end{equation}

  For the phase gate $\sgate$, it maps $X$ operators into $Y$ operators and fixes $Z$ operators. $\sgate^\dagger$ have the same binary action.
  Thus $\sgate_i\sgate_j^\dagger$ fixes $x$ and updates $z$ by $z_i\mapsto z_i+x_i$ and
  $z_j\mapsto z_j+x_j$, i.e.,
  we find
  \begin{equation}
    \pi(\sgate_i\sgate_j^\dagger)=\Lambda({D}_{i,j})
    :=
    \begin{pmatrix}I&0\\ D_{i,j}&I
    \end{pmatrix},\ D_{i,j}=\mathrm{diag}(e_i+e_j).
  \end{equation}
  A direct block multiplication gives, for all $A\in GL(m,2)$ and all $m\times m$ matrices $C$,
  \eqa{
    L(A)\Lambda(C)L(A)^{-1}&=\Lambda\Big[(A^{-1})^\intercal C A^{-1}\Big],\\
  \Lambda(C_1)\Lambda({C}_2)&=\Lambda(C_1+C_2),}
  with arithmetic over $\FF_2$.

  We claim these relations allow us to synthesize a single-qubit phase gate $S_i$ for any $i$ when $m\ge 3$.
  Choose three distinct qubits and relabel them as $1,2,3$, and set $C_0:=D_{12}$ so that
  $\Lambda(C_0)=\pi(\sgate_1\sgate_2^\dagger)$ is available. Consider the invertible $3\times3$ matrices over $\FF_2$ given by
  \begin{equation}
    A_1=
    \begin{pmatrix}0&0&1\\0&1&0\\1&0&0
    \end{pmatrix},\
    A_2=
    \begin{pmatrix}1&0&1\\0&0&1\\1&1&0
    \end{pmatrix},\
    A_3=
    \begin{pmatrix}0&1&1\\1&0&1\\0&0&1
    \end{pmatrix}.
  \end{equation}
  Extend each to an $m\times m$ matrix by $\widetilde A_k:=\mathrm{diag}(A_k,I_{m-3})\in GL(m,2)$ and define
  $C_k:=(\widetilde A_k^{-1})^\intercal C_0 \widetilde A_k^{-1}$. One checks by explicit multiplication over $\FF_2$ that
  \begin{equation}
    C_1+C_2+C_3 = E_{1,1},
  \end{equation}
  where $E_{1,1}$ has a single $1$ in the $(1,1)$ entry and zeros elsewhere. Therefore,
  \eqa{
    \Lambda(E_{1,1}) &= \Lambda(C_1)\Lambda(C_2)\Lambda(C_3)\\
    &\in \big\langle \pi(\cnot{i}{j}),\pi(\sgate_i\sgate_j^\dagger)\big\rangle .
  }
  But $\Lambda(E_{1,1})$ is exactly the symplectic action of a single-qubit phase gate on qubit $1$. Hence $\sgate_1$ (up to global
  phase, and up to a trackable Pauli frame) is generated by the given gates.
  Because $\mathit{SWAP}$ gates can be synthesized from $\mathit{CNOT}$ gates, conjugating by $\mathit{SWAP}$ gates yields $\sgate_i$ for arbitrary $i$.

  With $\sgate_i$ available, we now synthesize a single-qubit Hadamard $\hgate_i$ using the global $\hgate^{\otimes m}$.
  At the symplectic level on one qubit, the Hadamard matrix satisfies
  \begin{equation}
    \begin{pmatrix}0&1\\1&0
    \end{pmatrix}
    =
    \begin{pmatrix}1&0\\1&1
    \end{pmatrix}
    \begin{pmatrix}1&1\\0&1
    \end{pmatrix}
    \begin{pmatrix}1&0\\1&1
    \end{pmatrix},
  \end{equation}
  i.e., $\pi(\hgate)=\Lambda(1)V(1)\Lambda(1)$, where $V(1)=J\Lambda(1)J$.
  Embedding this identity on qubit $i$ gives $\pi(\hgate_i)=\Lambda(E_{ii})V(E_{ii})\Lambda(E_{ii})$ and
  $V(E_{ii})=J\Lambda(E_{ii})J$. Since $J=\pi(\hgate^{\otimes m})$ and $\Lambda(E_{ii})=\pi(\sgate_i)$, we can implement
  $V(E_{ii})$ as $\hgate^{\otimes m} \sgate_i \hgate^{\otimes m}$, and thus (up to global phase)
  \begin{equation}
    \hgate_i \equiv \sgate_i\ \big(\hgate^{\otimes m}\sgate_i\hgate^{\otimes m}\big)\sgate_i .
  \end{equation}
  Therefore the generated group contains arbitrary $\cnot{i}{j}$ (by assumption), all single-qubit $\sgate_i$
  and $\hgate_i$ (as shown), and all Paulis $X_i,Z_i$ (by assumption). Since it is standard that the Clifford
  group $\CCC_m/\mathrm{U}(1)$ is generated by $\{\cnot{i}{j},H_i,S_i\}$ together with Paulis, we conclude that
  \begin{equation}
    \CCC_m/\mathrm{U}(1) \cong \big\langle X_i,Z_i,\cnot{i}{j},\sgate_i\sgate_j^\dagger,\hgate^{\otimes m}\big\rangle .
  \end{equation}
\end{proof}

\begin{remark}
  The condition $m\ge 3$ is necessary. For $m=1$ there is no paired phase gate. For $m=2$, the only paired phase has binary action $\Lambda(\mathrm{diag}(1,1))$, and conjugation by $GL(2,2)$ cannot produce $\Lambda(E_{1,1})$ (the single-qubit phase action), so one cannot generate the full two-qubit Clifford group from these resources alone.
\end{remark}

\begin{example}[Generating $\sgate_1$ when $m=3$]
Let
\[
P_{12}:=\sgate_1\sgate_2^\dagger .
\]
Choose the three invertible matrices $A_1$, $A_2$ and $A_3$ as in the proof and pick $\mathit{CNOT}$-only circuits $U_i$ such that $\pi(U_k)=L(A_i)$:
\eqa{
U_1\coloneqq&\cnot{1}{3}\cnot{3}{1}\cnot{1}{3},\\
U_2\coloneqq&\cnot{2}{3}\cnot{3}{2}\cnot{2}{1}\cnot{1}{3},\\
U_3\coloneqq&\cnot{1}{2}\cnot{2}{1}\cnot{3}{1}\cnot{1}{2}.
}
Define the conjugates
\[
G_k:=U_iP_{12}U_i^{-1}\qquad (i=1,2,3).
\]
Then the product
\[
\sgate_1\equiv G_1G_2G_3
=
\big(U_1 P_{12} U_1^{-1}\big)
\big(U_2 P_{12} U_2^{-1}\big)
\big(U_3 P_{12} U_3^{-1}\big)
\]
implements $\sgate_1$ up to global phase and a trackable Pauli frame, using only
$\{\cnot{i}{j},S_i S_j^\dagger\}$ on three qubits.
\end{example}

\section{Confidence of numerically estimated code distances}
  \label{Appendix: numerical code dist}

  In this appendix, we report confidence statistics for the numerical distance estimation of CC codes and of the merged codes arising from parallel surgery for CC codes. As described in~\cite{QDistRand}, the probability $P_{\text{fail}}$ that the distance $D_{\texttt{QDistRnd}}$ returned by \texttt{QDistRnd} exceeds the true code distance is expected to be upper bounded by
  \begin{equation}
    P_{\text{fail}} < e^{-\bar{n}},
  \end{equation}
  where $\bar{n}$ denotes the average number of times a codeword of weight $D_{\texttt{QDistRnd}}$ is found during the randomized search
  procedure~\cite{QRC}.

  Below in Table~\ref{Table: qdistrnd_confidence_memory} we report the confidence statistics for distance estimation for the CC codes listed in Table~\ref{Table: example code para}, where the failure probability is negligible for all cases.

  \begin{table}[H]
    \centering
    \begin{tabular}{c c c c c}
      \toprule
      CC codes & $\bar{n}_X$ & $e^{-\bar{n}_X}$
      & $\bar{n}_Z$ & $ e^{-\bar{n}_Z}$ \\
      \midrule
      $\code{24,8,3}$   & 26350& $<10^{-10000}$ & 26390 & $<10^{-10000}$ \\
      $\code{40,8,5}$  & 6988   & $<10^{-3000}$    & 6999   & $<10^{-3000}$ \\
      $\code{56,8,7}$  & 1883   & $<10^{-800}$     & 1886   & $<10^{-800}$ \\
      $\code{88,8,10}$ & 283.8    & $5.64\times10^{-124}$ & 284.9 & $1.90\times10^{-124}$ \\
      $\code{104,8,11}$ & 152.0  & $9.34\times10^{-67}$  & 156.6 & $9.61\times10^{-69}$ \\
      $\code{136,8,14}$ & 20.84  & $8.91\times10^{-10}$  & 20.15 & $1.78\times10^{-9}$ \\
      $\code{54,18,3}$ & $2506$  & $<10^{-1088}$ & $2491$ & $<10^{-1081}$ \\
      $\code{90,18,5}$ & $739.4$  &  $8.00\x 10^{-322}$ &$735.4$ &$4.36\x10^{-320}$  \\
      $\code{126,18,7}$ & $210.8$  & $2.92\x10^{-92}$ & $211.7$ & $1.10\x10^{-92}$ \\
      $\code{198,18,10}$ & $36.22$ & $1.86\x10^{-16}$  & $36.56$ & $1.325\x10^{-16}$ \\
      \bottomrule
    \end{tabular}
    \caption{Confidence statistics for distance estimation of CC codes
      obtained using \texttt{QDistRnd}. For each CC code in Table~\ref{Table: example code para}, we report the average multiplicity
      $\bar{n}$ of the lowest-weight codeword found and the corresponding
      failure bound $P_{\text{fail}} < e^{-\bar{n}}$ for both $X$- and
      $Z$-type logical operators. All estimates are based on $10^5$
    randomized trials in \texttt{QDistRnd}.}
    \label{Table: qdistrnd_confidence_memory}
  \end{table}

  For each CC code with $8$ logical qubits, there are $2^8=256$ possible choices of product connection code $\PPP$ in the parallel surgery construction. We estimate the code distance for all $256$ corresponding merged codes.
  
  In Table~\ref{Table: qdistrnd_confidence_merged}, we report, for 
  each CC data code, the minimum observed value of $\bar{n}$ among the $256$ merged codes and the corresponding worst-case failure bound $P_{\text{fail}} < e^{-\bar{n}}$, for both $X$- and $Z$-type logical operators.

  \begin{table}[H]
    \centering
    \begin{tabular}{c c c c c}
      \toprule
      CC codes & $\LR{\bar{n}_X}_{\min}$ & $e^{-\LR{\bar{n}_X}_{\min}}$
      & $\LR{\bar{n}_Z}_{\min}$ & $e^{-\LR{\bar{n}_Z}_{\min}}$ \\
      \midrule
      $\code{24,8,3}$ & $24150$ & $<10^{-10000}$ & $24080$ &$<10^{-10000}$ \\
      $\code{40,8,5}$  & $6113$ & $<10^{-2000}$& $6113$ & $<10^{-2000}$ \\
      $\code{56,8,7}$  & $1616$  & $<10^{-700}$  & $1623$  & $<10^{-700}$\\
      $\code{88,8,10}$ & $225.7$ & $<10^{-98}$ & $229$ & $<10^{-99}$ \\
      $\code{104,8,11}$ & $130.1$ & $3.22\x 10^{-56}$ &$123.5$ & $2.23\x 10^{-54}$ \\
      $\code{136,8,14}$ & $8.630$ & $1.79\x 10^{-4}$ & $8.700$ & $1.67\x 10^{-4}$  \\
      \bottomrule
    \end{tabular}
    \caption{Confidence statistics 
    for distance estimation of merged codes
      obtained using \texttt{QDistRnd}. For each CC code with $k=8$ in Table~\ref{Table: example code para}, we report the minimum observed
      multiplicity $\LR{\bar{n}}_{\min}$ of the lowest-weight codeword across all $256$ merged codes arising from the parallel surgery construction, together with the corresponding worst-case failure bound $P_{\text{fail}} < e^{-\LR{\bar{n}}_{\min}}$ for both $X$- and $Z$-type logical operators. All estimates are based on $10^5$ randomized
    trials.}
    \label{Table: qdistrnd_confidence_merged}
  \end{table}

\section{Circuit-level memory simulations for CC codes}
\vspace{-1.5em}

\begin{figure}[H]
  \centering
  \begin{subfigure}[b]{0.9\linewidth}
    \centering
    \includegraphics[width=\textwidth]{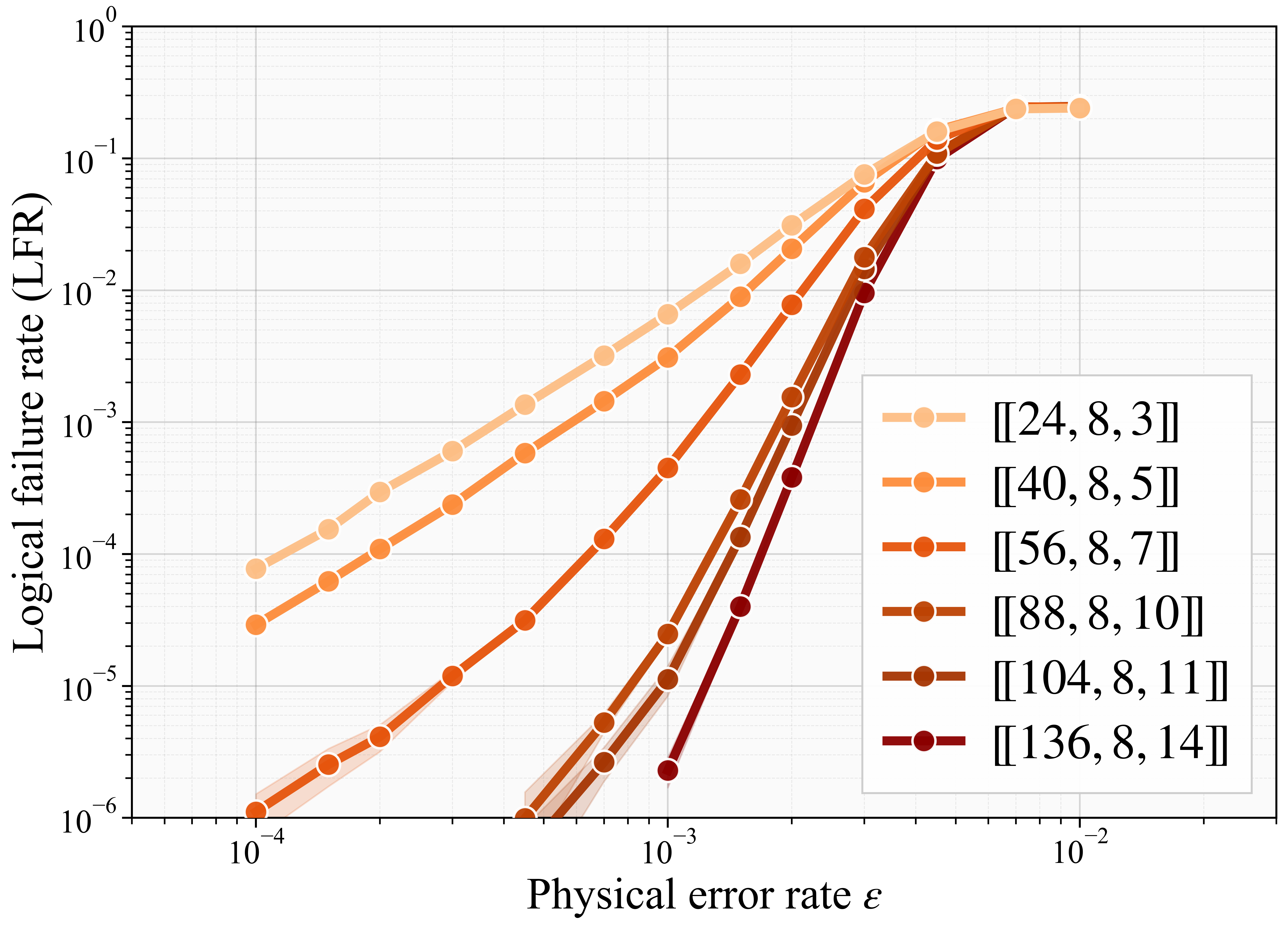}
    \caption{The $8$-logical CC code family.}
  \end{subfigure}
  \hfill
  \begin{subfigure}[b]{0.9\linewidth}
    \centering
    \includegraphics[width=\textwidth]{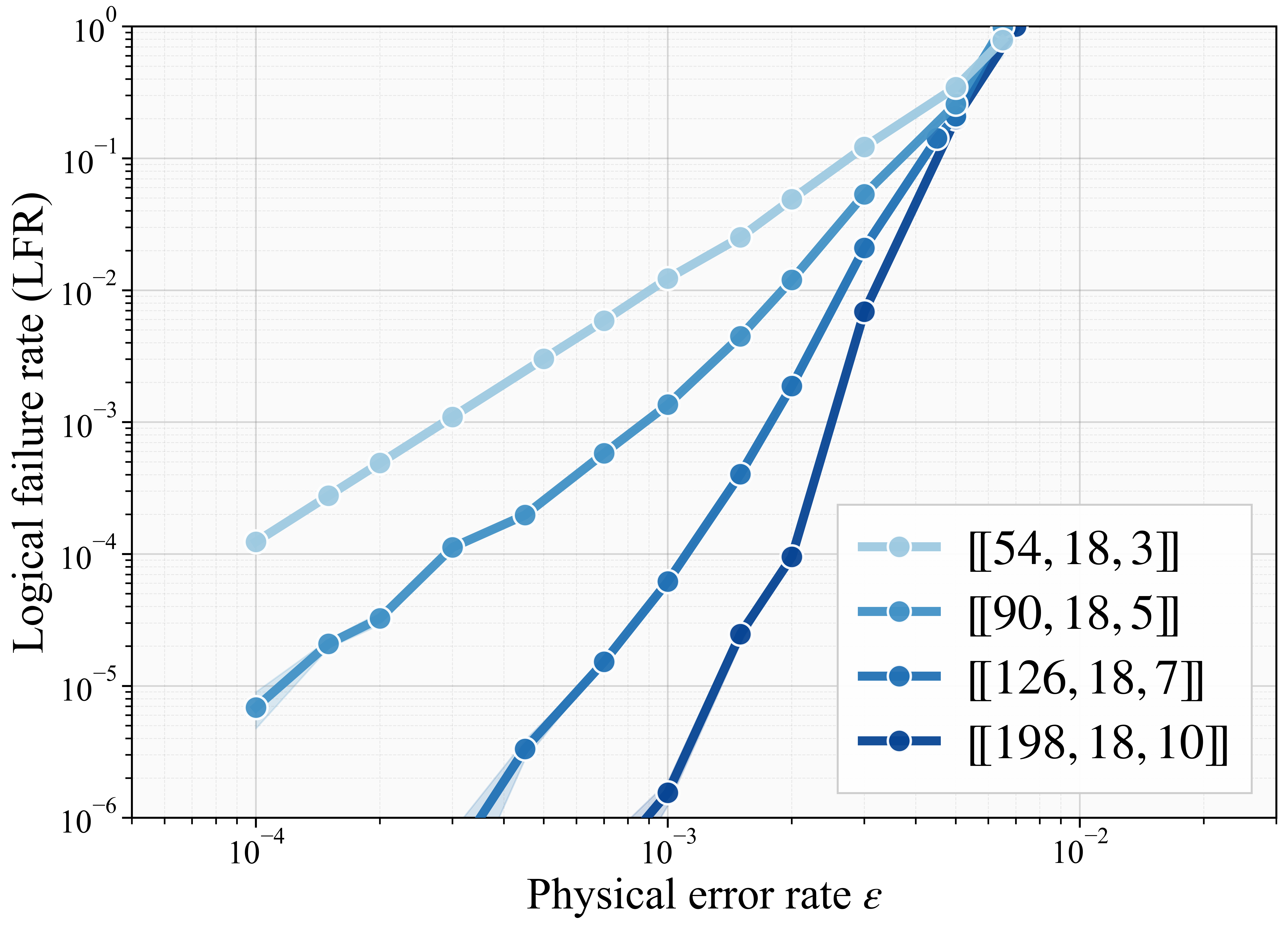}
    \caption{The $18$-logical CC code family.}
  \end{subfigure}
  \hfill
  \begin{subfigure}[b]{0.9\linewidth}
    \centering
    \includegraphics[width=\textwidth]{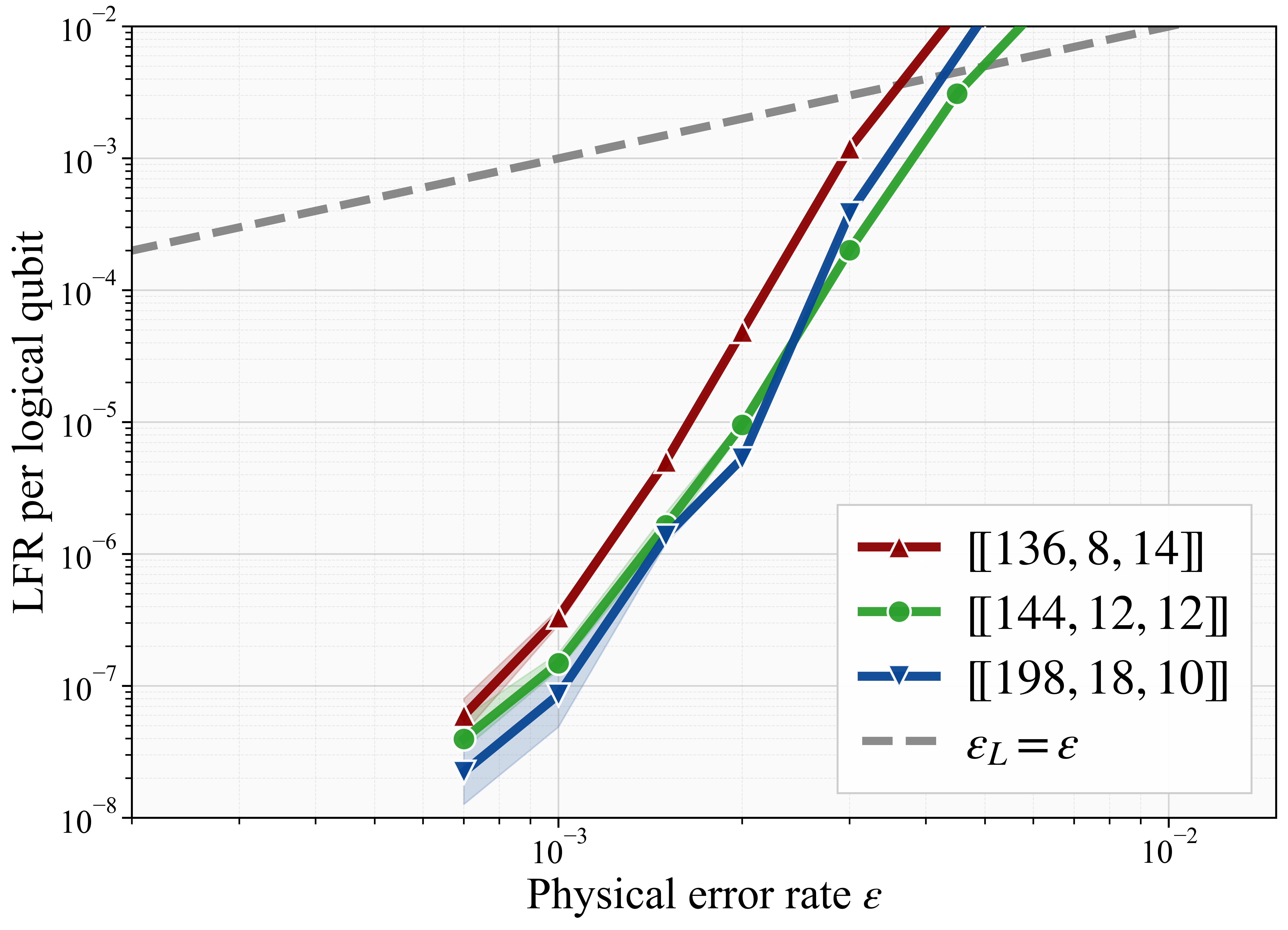}
    \caption{LFR per logical qubit comparison.}
  \end{subfigure}
  \caption{
  \textbf{Circuit-level memory performance of CC codes and comparison.} \emph{Logical failure rates per syndrome extraction cycle} (LFRs) as a function of physical error rate $\epsilon$ for the (a) $8$-logical and (b) $18$-logical CC code families under circuit-level depolarizing noise. 
  (c) LFR per logical qubit $\epsilon_{1,1}$ as a function 
  of $\epsilon$. We compare the performance of the $\code{136,8,14}$ and $\code{198,18,10}$ CC codes with the $\code{144,12,12}$ Gross BB code~\cite{BB_IBM}. 
  The dashed grey line denotes the break-even line $\epsilon_{1,1}=\epsilon$.
  }
  \label{Figure: threshold and comparison}
\end{figure}

In this appendix we present the details for circuit-level memory logical error rate simulations, used for Figure~\ref{Figure: threshold and comparison}.

\subsection{Circuit scheduling}

To evaluate the circuit-level memory performance of CC codes, we perform repeated-syndrome simulation using a framework built upon and extending \texttt{QUITS}~\cite{Kang_2025}. In each syndrome extraction round, stabiliser eigenvalues are measured by coupling data qubits to auxiliary qubits via $\mathit{CNOT}$ gates dictated by the Tanner graph of the code. To schedule these entangling operations fault-tolerantly, the $\mathit{CNOT}$ gates are partitioned into layers such that no qubit participates in more than one $\mathit{CNOT}$ within a layer. Therefore, minimising the total number of layers, i.e., circuit depth per round, is essential because every additional layer introduces an idling time-step during which inactive qubits accumulate errors. We implement a greedy edge-colouration algorithm on the Tanner graph that achieves equal or shorter circuit depth than the generic scheduling strategies provided in \texttt{QUITS}. The procedure is 
as follows.

\begin{enumerate}
    \item \textbf{Directional partitioning:} Partition Tanner-graph edges into $4$ disjoint groups $v\in\{N,S,E,W\}$ that represent distinct interaction directions in the construction, with direction-indexed edge lists. This partitioning arises naturally in the HGP and LP constructions, in which two-qubit gates can be classified as horizontal or vertical Tanner graph edges.
    \item \textbf{Conflict graph construction:} For each direction $v$, build a conflict graph $\mathcal{G}_v$ whose vertices are $\mathit{CNOT}$ gates in direction $v$, and connect two vertices if and only if the corresponding $\mathit{CNOT}$ gates share a qubit, indicating a scheduling conflict.
    \item \textbf{DSATUR graph colouring:} A vertex colouring of $\mathcal{G}_v$ is computed using the Degree of Saturation (DSATUR) greedy heuristic, implemented via \texttt{NetworkX}. Each colour class defines a set of mutually non-conflicting $\mathit{CNOT}$ gates that can be executed simultaneously in a single time-step.
    \item \textbf{Circuit assembly:} The $\mathit{CNOT}$ depth is $\gamma_{\mathit CNOT}=\sum_v \phi(\mathcal{G}_v)$, where $\phi(\mathcal{G}_v)$ is the chromatic number returned by the DSATUR colouring for direction $v$. Within each round, the Hadamard gates conjugating the $X$-type auxiliaries and the mid-round auxiliary measurements and resets each contribute one additional time-step. 
\end{enumerate}

\subsection{Decoding and post-processing}

We use Stim~\cite{Gidney_2021} to compile the syndrome-extraction circuit and sample from it. And all results reported in this work are decoded by BP+OSD~\cite{Roffe_2020} on the full space-time parity-check matrices derived from the detector error model (DEM). We employ the following hyperparameters across all simulations

\begin{table}[h]
\centering
\begin{tabular}{l c}
\hline
\textbf{Parameter} & \textbf{Value} \\
\hline
BP update rule & Min-sum \\
Min-sum scaling factor & 0.625 \\
Maximum BP iterations & 200 \\
OSD method & OSD-CS \\
OSD order & 5 \\
\hline
\end{tabular}
\label{tab:bposd_hyperparams}
\end{table}

To ensure accurate resolution of the threshold and sub-threshold logical failure rates, we choose the number of syndrome measurement rounds $\eta=42$ if $\epsilon<4\times 10^{-3}$ and $\eta=20$ if $\epsilon>4\times 10^{-3}$. From the empirical estimate of the total logical failure rate over $\eta$ rounds, $\epsilon_\gamma$, we first define the \emph{logical failure rate per round} (LFR). Assuming independent failure events across rounds in the small-error regime, the probability of zero logical failures occurring in a single cycle is $1 - {\rm LFR}$. Over $\eta$ rounds, this yields a total success probability of $(1 - {\rm LFR})^\eta = 1 - \epsilon_\gamma$, giving
\begin{equation*}
    {\rm LFR} = 1 - (1 - \epsilon_\gamma)^{1/\eta}.
\end{equation*}

To further facilitate a fair comparison across codes with varying numbers of logical qubits, $k$, it is useful to convert this into an effective per-logical-qubit error rate per syndrome cycle, $\epsilon_{1,1}$. Following the approach outlined in Ref.~\cite{malcolm2025computingefficientlyqldpccodes}, we extend the assumption of uniform, independent failures to the $k$ individual logical qubits. The probability of no logical failures occurring on any specific qubit in a single round is $1 - \epsilon_{1,1}$. Equating the total block success probability over 
$\eta$ rounds to $(1-\epsilon_{1,1})^{k\eta}$ yields
\begin{equation*}
    1-\epsilon_\gamma = (1-\epsilon_{1,1})^{k\eta},
\end{equation*}
which is directly inverted to give the single-qubit, single-round logical failure rate:
\begin{equation*}
    \epsilon_{1,1} = 1-(1-\epsilon_\gamma)^{\frac{1}{k\eta}}.
\end{equation*}
\onecolumngrid
\section{Examples of CC codes}
\label{Appendix: code examples}
Tables~\ref{Table: cc_seed_matrices_8} and
\ref{Table: cc_seed_matrices_18} list the seed matrices $H_a$ and $H_b$ over $R=\FF_2[x]/(x^p+1)$ for the CC codes
presented in Table~\ref{Table: example code para}.
For convenience, we denote the code parameters together with the
corresponding lift parameter in the form
$\code{N,k,d}\,(p)$.

\begin{table}[H]
  \centering
  \caption{Seed matrices $H_a$ and $H_b$ defined over $R=\FF_2[x]/(x^p+1)$ for example CC codes with $k=8$ listed in Table~\ref{Table: example code para}.}
  \label{Table: cc_seed_matrices_8}

  \setlength{\tabcolsep}{4pt}
  \renewcommand{\arraystretch}{1.2}

\begin{tabular}{c c c c c c c}
  \toprule
  CC code
  & $\code{24,8,3}\ (3)$
  & $\code{40,8,5}\ (5)$
  & $\code{56,8,7}\ (7)$
  & $\code{88,8,10}\ (11)$
  & $\code{104,8,11}\ (13)$
  & $\code{136,8,14}\ (17)$ \\
  \midrule
  $H_a$
  & $
  \begin{pmatrix}
    1+x^2 & x+x^2 \\
    1+x   & x+x^2
  \end{pmatrix}$
  & $
  \begin{pmatrix}
    x+x^3 & x+x^4 \\
    x+x^4 & x+x^4
  \end{pmatrix}$
  & $
  \begin{pmatrix}
    x+x^5 & x+x^3 \\
    1+x^2 & 1+x^3
  \end{pmatrix}$
  & $
  \begin{pmatrix}
    x+x^5   & x^5+x^6 \\
    x^2+x^3 & x^5+x^9
  \end{pmatrix}$
  & $
  \begin{pmatrix}
    x+x^5 & x^2+x^5 \\
    x+x^4 & x^9+x^{10}
  \end{pmatrix}$
  & $
  \begin{pmatrix}
    x^{13}+x^{16} & x^5+x^{15} \\
    x^6+x^{16}    & 1+x^3
  \end{pmatrix}$ \\
  $H_b$
  & $
  \begin{pmatrix}
    1+x^2 & x+x^2 \\
    1+x   & 1+x
  \end{pmatrix}$
  & $
  \begin{pmatrix}
    1+x   & x^3+x^4 \\
    1+x^4 & 1+x^4
  \end{pmatrix}$
  & $
  \begin{pmatrix}
    x^2+x^5 & 1+x \\
    x^5+x^6 & x^4+x^6
  \end{pmatrix}$
  & $
  \begin{pmatrix}
    1+x^3   & x^8+x^{10} \\
    x^5+x^7 & x^3+x^7
  \end{pmatrix}$
  & $
  \begin{pmatrix}
    x^2+x^8 & 1+x^8 \\
    x+x^9   & x+x^3
  \end{pmatrix}$
  & $
  \begin{pmatrix}
    x+x^7     & x^8+x^{10} \\
    x^8+x^{10}& x^5+x^{16}
  \end{pmatrix}$ \\
  \bottomrule
\end{tabular}

\end{table}

\begin{table}[H]
  \centering
  \caption{Seed matrices $H_a$ and $H_b$ defined
  over $R=\FF_2[x]/(x^p+1)$ for example CC codes with $k=18$ listed in Table~\ref{Table: example code para}.}
  \label{Table: cc_seed_matrices_18}

  \setlength{\tabcolsep}{4pt}
  \renewcommand{\arraystretch}{1.2}

\begin{tabular}{c c c c c}
  \toprule
  CC code
  & $\code{54,18,3}\ (3)$
  & $\code{90,18,5}\ (5)$
  & $\code{126,18,7}\ (7)$
  & $\code{198,18,10}\ (11)$
  \\
  \midrule
  $H_a$ 
  & $
  \begin{pmatrix}
    x+x^2 & 0 & 1+x \\
    x+x^2 & 1+x & 0 \\
    0 & 1+x^2 & 1+x^2
  \end{pmatrix}$
  & $
  \begin{pmatrix}
    1+x & 0 & 1+x^4 \\
    1+x^3 & 1+x^2 & 0 \\
    0 & 1+x^4 & 1+x^2
  \end{pmatrix}$
  & $
  \begin{pmatrix}
  x+x^4     & 0         & x^2+x^4 \\
  x+x^3     & x^2+x^5   & 0 \\
  0         & 1+x^5     & 1+x^3
\end{pmatrix}$
  & $
  \begin{pmatrix}
    x^4+x^9 & 0 & x^5+x^7 \\
    x^4+x^6 & x^2+x^8 & 0 \\
    0 & x^7+x^{10} & x^3+x^8
  \end{pmatrix}$ \\
  $H_b$ 
  & $
  \begin{pmatrix}
    1+x & 1+x^2 & 0 \\
    0 & x+x^2 & 1+x^2 \\
    1+x^2 & 0 & x+x^2
  \end{pmatrix}$
  & $
  \begin{pmatrix}
    1+x^4 & 1+x^2 & 0 \\
    0 & 1+x^3 & 1+x \\
    1+x & 0 & 1+x^3
  \end{pmatrix}$
  & $
  \begin{pmatrix}
  x^5+x^6   & x^5+x^6   & 0 \\
  0         & x+x^3     & 1+x \\
  x+x^2     & 0         & x^4+x^6
\end{pmatrix}$
  & $
  \begin{pmatrix}
    x^3+x^4 & x^6+x^{10} & 0 \\
    0 & x^2+x^3 & x^8+x^{10} \\
    1+x^7 & 0 & x^9+x^{10}
  \end{pmatrix}$ \\
  \bottomrule
\end{tabular}

\end{table}

\section{Examples of parallel surgery for CC codes}
\label{Appendix: examples}

In this appendix, we give some concrete examples of parallel surgery with CC code as data code patch and auxiliary code patch.
For simplicity, we separate the logical qubits of a CC code with code parameters $\code{N=2pn_an_b, k=2n_an_b, d\leq p}$ into two sectors, left and right, with each sector contacting $n_an_b$ logical qubits. Each $X$-type logical operator is denoted as
\begin{equation}
  \bX_i^\lef \tor \bX_i^\rig \text{ for }i\in \LR{1,\cdots,n_an_b}.
\end{equation}
Each $Z$-type logical operator is denoted as
\begin{equation}
  \bZ_i^\lef \tor \bZ_i^\rig \text{ for }i\in \LR{1,\cdots,n_an_b}.
\end{equation}
\begin{example}[$\code{24,8,3}$, maximally $4$ pairs of merges with overhead of $24$ qubits and $24$ checks]
  \label{Example: [24,8,3] Id merge}
  Consider the $\code{24,8,3}$ CC code given in Table~\ref{Table: cc_seed_matrices_8}. The basis for the logical operators are given by
  \begin{equation}
    \label{eq: 8 logical basis}
    L=\left(
      \begin{array}{cccc|cccc}
        \chi &  &  &  &  &  &  & \\
        & \chi &  &  &  &  &  & \\
        &  & \chi &  &  &  &  & \\
        &  &  & \chi &  &  &  & \\
        &  &  &  & \chi &  &  & \\
        &  &  &  &  & \chi &  & \\
        &  &  &  &  &  & \chi & \\
        &  &  &  &  &  &  & \chi
    \end{array}\right),
  \end{equation}
  where $\chi = 1+x+x^2$. We perform $4$ pairs of joint logical measurements on all the logical qubits, i.e., merging $\bZ_i^\lef$ with $\bZ_i^\rig$ for $i=1,2,3,4$ by choosing $\QQQ^\mathrm{aux} =\QQQ = \ccab$ and $\PPP = \lpAB$ where
  \begin{equation}H_a'=H_b' =I_2.\end{equation}
  The corresponding merged code with parity check matrices
  \begin{equation}\widetilde{H_X}=
    \begin{pmatrix}
      H_X & H_X'\\
      & H_X
    \end{pmatrix}\tand \widetilde{H_Z}=
    \begin{pmatrix}
      H_Z & \\
      H_Z' & H_Z
  \end{pmatrix},\end{equation}
  defines a $\code{\widetilde{N}=48, \widetilde{k}=4,\widetilde{r}=4,\widetilde{d}=3}$ merged code, where $\widetilde{N}, \widetilde{k},\widetilde{r}\tand \widetilde{d}$ are the number of physical qubits, number of logical qubits, number of gauges and code distance, respectively. The bottom half
  of $\widetilde{H_Z}$ is given by
  \eqa{
    \begin{pmatrix}
      H_Z' & H_Z
    \end{pmatrix} =&\ \left(
      \begin{array}{cccc|cccc|cccc|cccc}
        1 &  &  &  & 1 &  &  &  & 1+x & 1+x^2 &  &  & 1+x &  & 1+x^2 & \\
        & 1 &  &  &  & 1 &  &  & x+x^2 & 1+x^2 &  &  & & 1+x &  & 1+x^2\\
        &  & 1 &  &  &  & 1 &  & & & 1+x & 1+x^2 &x+x^2 & & x+x^2 & \\
        &  &  & 1 &  &  &  & 1 & & & x+x^2 & 1+x^2 & & x+x^2 & & x+x^2
    \end{array}\right).\\
    &\quad \underbrace{\hspace{2.8cm}}_{\text{Data patch}}
  \quad \underbrace{\hspace{8.4cm}}_{\text{Auxiliary patch}}}
  Therefore, $\supp\lr{\bZ_{i}^\lef \otimes \bZ_i^\rig} = \chi
  \begin{pmatrix}
    H_Z' & H_Z
  \end{pmatrix}_{i,\bullet}$. For example, we have
  \eqa{
    \supp\lr{\bZ_2^\lef \otimes \bZ_2^\rig} &= \chi  \left(
      \begin{array}{cccc|cccc|cccc|cccc}
        0& 1 & 0 & 0 & 0 & 1 & 0 & 0 & x+x^2 & 1+x^2 & 0 &0  & 0& 1+x & 0 & 1+x^2
    \end{array}\right)\\
    &= \left(
      \begin{array}{cccc|cccc|cccc|cccc}
        0& \chi & 0 & 0 & 0 & \chi & 0 & 0 & 0 & 0 & 0 &0  & 0& 0& 0 & 0
    \end{array}\right)\\
    &= \left(
      \begin{array}{cccc|cccc|cccc|cccc}
        0& \chi & 0 & 0 & 0 & 0 & 0 & 0 & 0 & 0 & 0 &0  & 0& 0& 0 & 0
    \end{array}\right) + \left(
      \begin{array}{cccc|cccc|cccc|cccc}
        0& 0 & 0 & 0 & 0 & \chi & 0 & 0 & 0 & 0 & 0 &0  & 0& 0& 0 & 0
    \end{array}\right).
  }
  The binary action of this `multiplying by $\chi$' is simply taking the sum of the rows in the binary representation of the $i$\textsuperscript{th} row of $
  \begin{pmatrix}
    H_Z' & H_Z
  \end{pmatrix}$. Namely, for each merge $\bZ_{i}^\lef \otimes \bZ_i^\rig$ we measure $p=3$ stabilisers of the merged code and the product of which is the Pauli product of this pair of $Z$-type logical operators.
  One thing worth mentioning is, if we initialise the data patch in ${\ket{\overline{+}}}^{\otimes 8}$, by measuring out the joint Pauli operators $\bZ_{i}^\lef \otimes \bZ_i^\rig$, we end up with $4$ logical Bell states~\cite{Game_of_surface_code}.

\end{example}

\begin{example}[$\code{24,8,3}$, $4$ single-qubit logical measurements with overhead of $24$ qubits and $24$ checks]
  \label{Example: [24,8,3] Id logical measurements}
  Consider the $\code{24,8,3}$ CC code given in Table~\ref{Table: cc_seed_matrices_8}. The basis for the logical operators are given by $L =\diag_8(\chi)$ as in Eq.~\eqref{eq: 8 logical basis}. 
  We perform parallel logical measurements on $4$ logical operators , i.e., measuring $\bZ_i^\lef$ for $i=1,2,3,4$ by choosing $\QQQ^\mathrm{aux} =\QQQ = \ccab$ and $\PPP = \lpAB$ where
  \begin{equation}H_a' = 0\tand H_b' = I_2 .\end{equation}
  The corresponding merged code with parity check matrices
  \begin{equation}\widetilde{H_X}=
    \begin{pmatrix}
      H_X & H_X'\\
      & H_X
    \end{pmatrix}\tand \widetilde{H_Z}=
    \begin{pmatrix}
      H_Z & \\
      H_Z' & H_Z
  \end{pmatrix},\end{equation}
  defines a $\code{\widetilde{N}=48, \widetilde{k}=4,\widetilde{r}=4,\widetilde{d}=3}$ merged code. The bottom half of $\widetilde{H_Z}$ is given by
  \eqa{
    \begin{pmatrix}
      H_Z' & H_Z
    \end{pmatrix} =&\ \left(
      \begin{array}{cccc|cccc|cccc|cccc}
        1 &  &  &  & 0 &  &  &  & 1+x & 1+x^2 &  &  & 1+x &  & 1+x^2 & \\
        & 1 &  &  &  & 0 &  &  & x+x^2 & 1+x^2 &  &  & & 1+x &  & 1+x^2\\
        &  & 1 &  &  &  & 0 &  & & & 1+x & 1+x^2 &x+x^2 & & x+x^2 & \\
        &  &  & 1 &  &  &  & 0 & & & x+x^2 & 1+x^2 & & x+x^2 & & x+x^2
    \end{array}\right).\\
    &\quad \underbrace{\hspace{2.8cm}}_{\text{Data patch}}
  \quad \underbrace{\hspace{8.4cm}}_{\text{Auxiliary patch}}}
  Therefore, $\supp\lr{\bZ_{i}^\lef} = \chi
  \begin{pmatrix}
    H_Z' & H_Z
  \end{pmatrix}_{i,\bullet}$. For example,
  \eqa{
    \supp\lr{\bZ_2^\lef} &= \chi  \left(
      \begin{array}{cccc|cccc|cccc|cccc}
        0& 1 & 0 & 0 & 0 & 0 & 0 & 0 & x+x^2 & 1+x^2 & 0 &0  & 0& 1+x & 0 & 1+x^2
    \end{array}\right)\\
    &= \left(
      \begin{array}{cccc|cccc|cccc|cccc}
        0& \chi & 0 & 0 & 0 & 0 & 0 & 0 & 0 & 0 & 0 &0  & 0& 0& 0 & 0
    \end{array}\right).
  }
  Same as before, the binary action of this `multiplying by $\chi$' is simply taking the sum of the rows in the binary representation of the $i$\textsuperscript{th} row of $
  \begin{pmatrix}
    H_Z' & H_Z
  \end{pmatrix}$. Namely, for each logical operator $\bZ_{i}^\lef$ we measure $p=3$ stabilisers of the merged code and the product of which is the Pauli product of this $Z$-type logical operator.
\end{example}

\begin{example}[$\code{24,8,3}$, $2$ pairs of merges with overhead of $24$ qubits and $24$ checks]
  Consider the $\code{24,8,3}$ CC code given in Table~\ref{Table: cc_seed_matrices_8}. The basis for the logical operators are given by $L =\diag_8(\chi)$. We perform two pairs of joint logical measurements on the logical operators on the right sector, i.e., measuring $\bZ_1^\rig \otimes \bZ_3^\rig$ and $\bZ_1^\rig \otimes \bZ_3^\rig$ in parallel by choosing $\QQQ^\mathrm{aux} =\QQQ = \ccab$ and $\PPP = \lpAB$ where
  \begin{equation}H_a' =
    \begin{pmatrix}
      1 & 0\\
      1 & 0
  \end{pmatrix}\tand H_b' = 0 .\end{equation}
  The corresponding merged 
  code with parity check matrices
  \begin{equation}\widetilde{H_X}=
    \begin{pmatrix}
      H_X & H_X'\\
      & H_X
    \end{pmatrix}\tand \widetilde{H_Z}=
    \begin{pmatrix}
      H_Z & \\
      H_Z' & H_Z
  \end{pmatrix},\end{equation}
  defines a $\code{\widetilde{N}=48, \widetilde{k}=6,\widetilde{r}=6,\widetilde{d}=3}$ merged code. The 
  bottom half of $\widetilde{H_Z}$ is given by
  \eqa{
    \begin{pmatrix}
      H_Z' & H_Z
    \end{pmatrix} =&\ \left(
      \begin{array}{cccc|cccc|cccc|cccc}
        0 &  &  &  & 1 &  &1  &  & 1+x & 1+x^2 &  &  & 1+x &  & 1+x^2 & \\
        & 0 &  &  &  & 1 &  & 1 & x+x^2 & 1+x^2 &  &  & & 1+x &  & 1+x^2\\
        &  & 0 &  &  &  & 0 &  & & & 1+x & 1+x^2 &x+x^2 & & x+x^2 & \\
        &  &  & 0 &  &  &  & 0 & & & x+x^2 & 1+x^2 & & x+x^2 & & x+x^2
    \end{array}\right),\\
    &\quad \underbrace{\hspace{2.8cm}}_{\text{Data patch}}
  \quad \underbrace{\hspace{8.4cm}}_{\text{Auxiliary patch}}}
  where $H_Z'$ only has rank $2$. According to Theorem~\ref{Theorem: Merge Z logical} and Theorem~\ref{Corollary: number of logicals of the merged code}, there is only $2$ merges being performed, which left $8-2 = 6$ logical qubits with the merged code. To measure the joint Pauli product,
  \begin{equation}\supp\lr{\bZ_{1}^\rig\otimes \bZ_{3}^\rig} = \chi
    \begin{pmatrix}
      H_Z' & H_Z
    \end{pmatrix}_{1,\bullet} \tand \supp\lr{\bZ_{2}^\rig\otimes \bZ_{4}^\rig} = \chi
    \begin{pmatrix}
      H_Z' & H_Z
  \end{pmatrix}_{2,\bullet}.\end{equation}
  In this example, we perform measurements of two pairs of joint logical operators with the same overhead as in previous cases ($24$ physical qubits and $24$ checks). This illustrates a key feature of the parallel surgery protocol: for a fixed data-code patch, the total resource cost of a single round of PPMs remains constant. The actual efficiency of the implementation then depends on how the algorithm is compiled so that more PPMs can be executed within that fixed overhead. An ideal scenario is to maximise the merge being performed each round, with $k/2$ to be the maximum number of merges being performed, as laid out in Theorem~\ref{Theorem: max merge}.
\end{example}

\section{Fold-transversal and automorphism Clifford operations for the $\code{24,8,3}$ CC code}
\label{Appendix: aut_24}

The clustered logical operator basis of CC codes makes it straightforward to interpret the logical action of physical operations on the physical qubits. In this appendix, we illustrate the fold-transversal~\cite{Armanda_partitioning,Fold_transversal} and automorphism gates~\cite{AutQEC} of the $\code{24,8,3}$ CC code, which together with parallel surgery of CC codes, compiles the full Clifford group of four logical qubits while treating other four logical qubits as auxiliary logical qubits in Section~\ref{Section: clifford}.

To be entirely concrete, in the following, we consider the $\code{24,8,3}$ CC code with seed codes given in~\ref{Table: cc_seed_matrices_8}, whose parity check matrices are
\eqa{
  H_X &= \left(
    \begin{array}{cccc|cccc}
      1+x^2 & & x+x^2   &  &1+x^2  &x+x^2  &  & \\
      & 1+x^2 &  & x+x^2 &1+x  &1+x  &  & \\
      1+x&  & x+x^2 &  &  &  &1+x^2  &x+x^2 \\
      & 1+x &  & x+x^2 &  &  & 1+x & 1+x
  \end{array}\right)\\
  H_Z &= \left(
    \begin{array}{cccc|cccc}
      1+x & 1+x^2 &  &  & 1+x &  & 1+x^2 & \\
      x+x^2 & 1+x^2 &  &  & & 1+x &  & 1+x^2\\
      & & 1+x & 1+x^2 &x+x^2 & & x+x^2 & \\
      & & x+x^2 & 1+x^2 & & x+x^2 & & x+x^2
  \end{array}\right).
}

\subsection{$\mathit{CZ\!-\!S}$ phase-type fold-transversal gate}
\label{Appendix: CZ-S}

This $\code{24,8,3}$ CC code is equipped with a $\mathit{CZ\!-\!S}$ logical gate, \begin{equation}\bar{\sgate}_1\bar{\sgate}_4\bar{\sgate}^\dagger_5\bar{\sgate}^\dagger_8\bar{\cz}_{2,3}\bar{\cz}_{6,7},\end{equation} via applying $\sgate$, $\sgate^\dagger$ and $\cz$ gates on physical qubits in a fold-transversal way~\cite{Armanda_partitioning,Fold_transversal}.
Recall the action of $\sgate$ and $\cz$ gates in terms of stabilisers and parity check matrices:
\begin{enumerate}
  \item The phase gate $\sgate$, that maps $X$ operators into $Y$ operators and fixes $Z$ operators. Consider a diagonal binary matrix $\BB(A_\sgate)$, whose non-zero entries $(i,i)$ corresponding to the application of $\sgate$ gates on the $i$\textsuperscript{th} physical qubits. The parity check matrices after application of $\sgate$ gates are updated to $\BB(H_X) \mapsto \BB(H_X)$ and $\BB(H_Z) \mapsto \BB(H_Z)+\BB(H_X)\BB(A_\sgate)\BB(H_X)^\intercal$.
  \item The $\cz$ gate, a two qubit gate that maps $X\otimes I \mapsto X\otimes Z$, $I\otimes X \mapsto Z\otimes X$ and acts trivially on $Z$ operators. In terms of parity check matrices, $\BB(H_X)$ does not change while $\BB(H_Z) \mapsto \BB(H_Z)+\BB(H_X)\BB(A_\cz)\BB(H_X)^\intercal$, where $\BB(A_\cz)$ is a symmetric binary matrix describing the action of $\cz$ gates.
\end{enumerate}

Specifically, for this $\code{24,8,3}$ CC code, we apply $\sgate$ gate on the diagonal clusters of physical qubits of the left sector, $\sgate^\dagger$ on the diagonal clusters of physical qubits of the right sector and $\cz$ gates between off-diagonal clusters of physical qubits in each sector. They correspond to the binary representation of matrices over $R=\FF_2[x]/(x^3+1)$, $A_\sgate$ and $A_\cz$, respectively. Here
\begin{equation}
  A_\sgate = \left(
    \begin{array}{cccc|cccc}
      1 &  &  &  &  &  &  & \\
      &0&  &  & &  &  & \\
      & &0&  & & &  & \\
      & & & 1& &  & & \\
      & &  & & 1 & &  &\\
      & &  & & & 0 & & \\
      & &  & & & & 0 & \\
      & &  & && &  & 1
  \end{array}\right)\tand
  A_\cz = \left(
    \begin{array}{cccc|cccc}
      0 &  &  &  &  &  &  & \\
      &0& 1 &  & &  &  & \\
      &1 &0&  & & &  & \\
      & & & 0& &  & & \\
      & &  & & 0 & &  &\\
      & &  & & & 0 &1 & \\
      & &  & & &1 & 0 & \\
      & &  & && &  & 0
  \end{array}\right).
\end{equation}

It is simple to verify that to total action $\BB(A_{\mathit{CZ\!-\!S}}) = \BB(A_{\cz}) + \BB(A_{\sgate})$ preserves the $Z$-type stabilisers as
\begin{equation}
  \BB(H_X)\BB(A_\mathit{CZ\!-\!S})\BB(H_X)^\intercal = \BB (H_XA_\mathit{CZ\!-\!S}H_X^*) = 0.
\end{equation}
$X$-type stabilisers are untouched. Therefore, above $\mathit{CZ\!-\!S}$ gate is a valid logical operation preserves the stabiliser group. And the overall phase cancels out due to same number of $\sgate$ and $\sgate^\dagger$ gates. For the clustered basis of this code, as in Eq.~\eqref{eq: 8 logical basis}, the logical action is interpreted to be $\bar{\sgate}_1\bar{\sgate}_4\bar{\sgate}^\dagger_5\bar{\sgate}^\dagger_8\bar{\cz}_{2,3}\bar{\cz}_{6,7}$.

\subsection{$\mathit{H\!-\!SWAP}$ Hadamard-type fold-transversal gates}
\label{Appendix: h-swap}

This $\code{24,8,3}$ CC code is equipped with a $\mathit{H\!-\!SWAP}$ logical gate, \begin{equation}\bswap{1}{6}
  \bswap{4}{7}
  \bswap{5}{8}
  \bswap{6}{7}
\prod_{i=1}^8\bar{\hgate}_i,\end{equation} via applying global $\hgate$ on all physical qubits and some physical $\mathit{SWAP}$ gates~\cite{Armanda_partitioning,Fold_transversal}.

Recall the action of $\hgate$ and $\mathit{SWAP}$ gates in terms of stabilisers and parity check matrices:
\begin{enumerate}
  \item The Hadamard gate $\hgate$, that maps $X$ operators into $Z$ operators and vice-versa. The parity check matrices are exchanged, $\BB(H_X)\mapsto\BB(H_Z)$ and $\BB(H_Z)\mapsto\BB(H_X)$.
  \item The $\swap{i}{j}$ gate, that swaps the $i$\textsuperscript{th} physical qubit with $j$\textsuperscript{th} physical qubit. In terms of parity check matrices, both types of parity check matrices have the $i$\textsuperscript{th} column swapped with the $j$\textsuperscript{th} column.
\end{enumerate}

Utilising \texttt{AutQEC}~\cite{AutQEC}, we find the following physical operations which gives the logical $\mathit{H\!-\!SWAP}$ introduced above for the logical basis in Eq.~\eqref{eq: 8 logical basis},
\eqa{
  &\swap{18}{21}
  \swap{17}{20}
  \swap{16}{19}
  \swap{15}{22}
  \swap{14}{24}
  \swap{13}{23}\\
  &\swap{12}{19}
  \swap{11}{21}
  \swap{10}{20}
  \swap{3}{16}
  \swap{2}{18}
  \swap{1}{17}\prod_{i=1}^{24}\hgate_i .
}

\subsection{Automorphism logical $\mathit{SWAP}$ gates}
\label{Appendix: aut_SWAP}

Utilising \texttt{AutQEC}~\cite{AutQEC}, for the $\code{24,8,3}$ code, we find the following logical $\mathit{SWAP}$ gates generated by only triggering physical $\mathit{SWAP}$ gates. In Appendix~\ref{Appendix: CNOTs_24_8_3}, we illustrate explicitly how these logical $\mathit{SWAP}$ gates together with the parallel surgery compile into parallel $\mathit{CNOT}$ gates.
\begin{equation}
  \begin{aligned}
    \mathit{Aut}(1) &=
    \bswap{1}{6}
    \bswap{3}{5}
    \bswap{2}{8}
    \bswap{4}{7}
    \bswap{5}{8}
    \bswap{6}{7};\\
    \mathit{Aut}(2) &=
    \bswap{2}{3}
    \bswap{6}{7};\\
    \aut(3) &=
    \bswap{5}{3}
    \bswap{8}{2},
  \end{aligned}
\end{equation}
where $\mathit{Aut}(1)$ actually swaps left and right sector of logical qubits up to certain reordering within the sectors. The physical operations of these logical $\mathit{SWAP}$s are respectively
\eqa{
  \mathit{Aut}(1)_{\mathrm{phys}}=\ &
  \swap{1}{17}
  \swap{1}{10}
  \swap{1}{20}
  \swap{2}{16}
  \swap{2}{11}
  \swap{2}{19}
  \swap{3}{18}
  \swap{3}{12}
  \swap{3}{21}\\
  &\swap{4}{22}
  \swap{4}{7}
  \swap{4}{15}
  \swap{5}{24}
  \swap{5}{8}
  \swap{5}{14}
  \swap{6}{23}
  \swap{6}{9}
  \swap{6}{13};\\
  \mathit{Aut}(2)_{\mathrm{phys}}=\ &
  \swap{2}{3}
  \swap{4}{7}
  \swap{5}{9}
  \swap{6}{8}
  \swap{11}{12}
  \swap{13}{14}
  \swap{16}{21}
  \swap{17}{20}
  \swap{18}{19}
  \swap{23}{24};\\
  \mathit{Aut}(3)_{\mathrm{phys}}=\ &
  \swap{4}{22}
  \swap{5}{23}
  \swap{6}{24}
  \swap{7}{15}
  \swap{8}{13}
  \swap{9}{14}.
}

\section{Parallel logical $\mathit{CNOT}$ gates for $\code{24,8,3}$ CC code}
\label{Appendix: CNOTs_24_8_3}

A logical CNOT can be performed via logical PPMs between an auxiliary logical qubit initialised in $\ket{\overline{+}}$ state with the control logical qubit and the target logical qubit, respectively, as illustrated in Figure~\ref{Figure: CNOT_surgery_framework}.
\begin{figure}[H]
  \centering
  \includegraphics[width=0.4\linewidth]{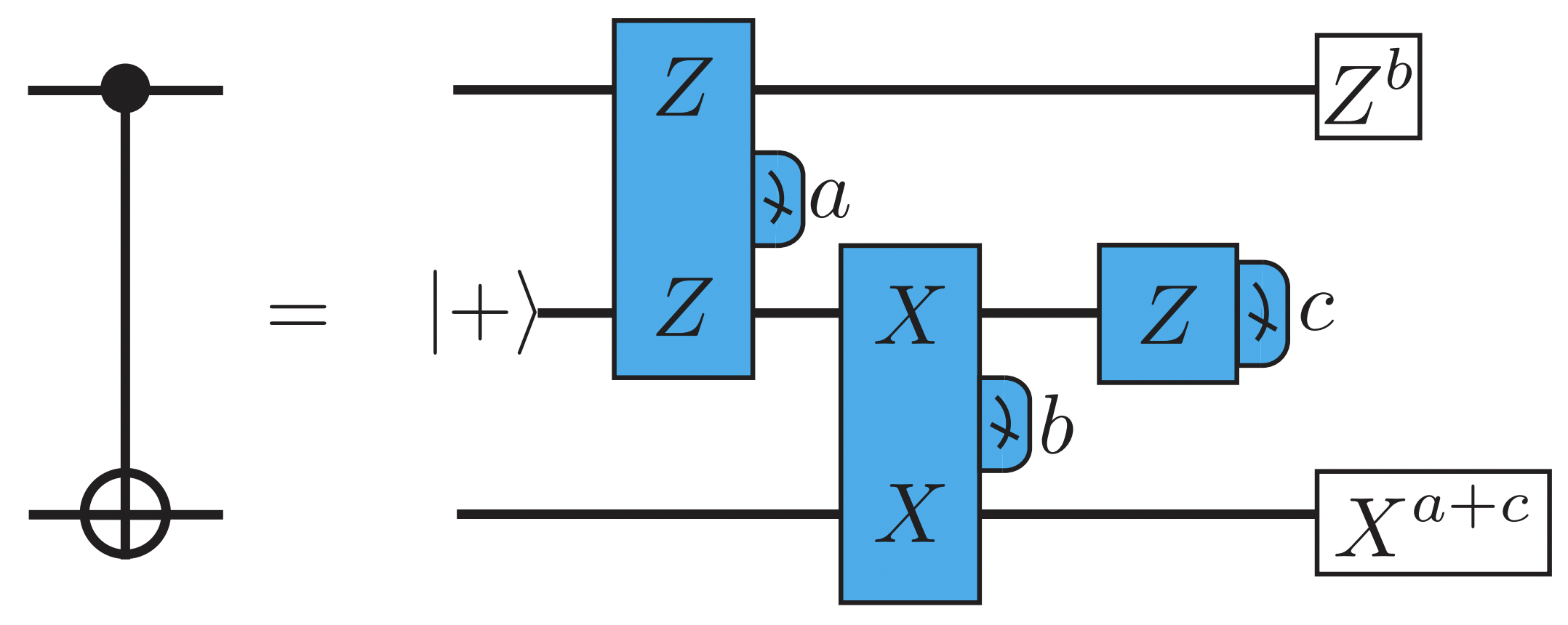}
  \caption{Circuits for implementing the PPM-induced $\mathit{CNOT}$~\cite{Qian_fast_and_para}.}
  \label{Figure: CNOT_surgery_framework}
\end{figure}
In this appendix, we investigate the $\code{24,8,3}$ CC code as a toy model. We present concrete procedures for performing logical $\mathit{CNOT}$ gates between any two logical qubits from the logical qubits $2,4,6,8$ while treating the logical qubits $1,3,5,7$ as auxiliary logical qubits. The PPMs are performed via parallel surgery and we make use of the fold-transversal gates and automorphism gates of the $\code{24,8,3}$ CC code discovered in Appendix~\ref{Appendix: aut_24}. As the whole procedure can be accomplished with single patch of auxiliary code, the space overhead is fixed to be $N=24$ physical qubits and $N=24$ additional checks. Furthermore, we show that we can maximally parallelise the logical $\mathit{CNOT}$ gates
such that any arrangement that partitions the four data logical qubits into two ordered pairs can be applied logical $\mathit{CNOT}$ gates in parallel.
Specifically, we present
\eqa{
  &\cnot{6}{2} \text{ in parallel with } \cnot{8}{4} \tor
  \cnot{6}{4} \text{ in parallel with } \cnot{8}{2},\\
  &\cnot{2}{6} \text{ in parallel with } \cnot{4}{8},\\
  &\cnot{2}{8} \text{ in parallel with } \cnot{4}{6},\\
  &\cnot{8}{6} \text{ in parallel with } \cnot{2}{4} \tor
  \cnot{6}{8} \text{ in parallel with } \cnot{4}{2},\\
  &\cnot{2}{4},\ \cnot{2}{6} \tor \cnot{2}{8},\\
  &\cnot{8}{4},\ \cnot{8}{2} \tor \cnot{8}{6},
}
one by one in the following examples. These logical $\mathit{CNOT}$ gates are enough to generate arbitrary logical $\mathit{CNOT}$ gates between two logical qubits from the logical qubits $2,4,6,8$. For all processes presented below, the logical qubits $1,3,5,7$ are treated as auxiliary logical qubits, which do not carry logical information. The logical qubits $2,4,6,8$ are data qubits which we use for computation.

\begin{example}[Logical $\cnot{6}{2}//\cnot{8}{4}$ or logical $\cnot{6}{4}//\cnot{8}{2}$]
  \label{Example: logical CNOT 62_84 or 64_82}
  Below we illustrate the procedure for performing logical
  \eqa{
    \cnot{6}{2} &\text{ in parallel with } \cnot{8}{4} \tor\\
    \cnot{6}{4} &\text{ in parallel with } \cnot{8}{2},
  }in the $\code{24,8,3}$ CC code via parallel surgery.
  \begin{figure}[H]
    \centering
    \includegraphics[width=\linewidth]{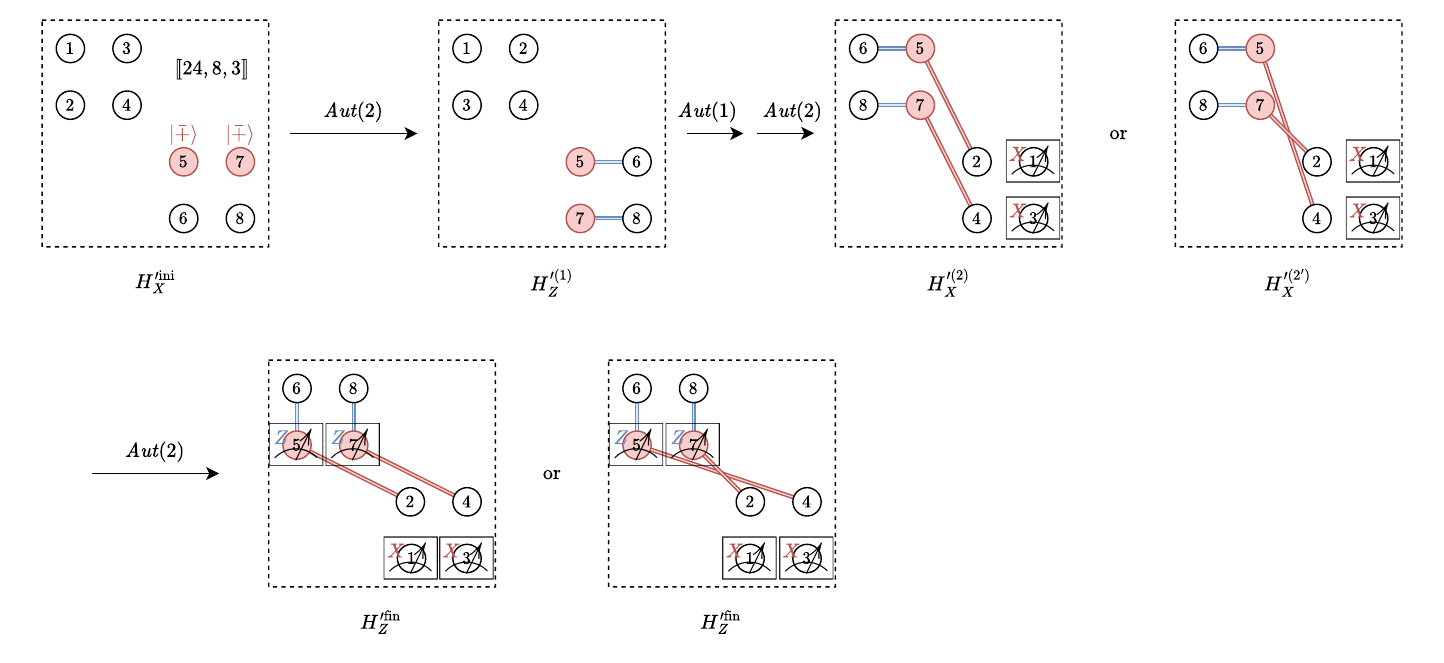}
    \caption{The procedure for performing logical $\cnot{6}{2}$ in parallel with $\cnot{8}{4}$ or $\cnot{6}{4}$ in parallel with $ \cnot{8}{2}$ via parallel surgery for the $\code{24,8,3}$ CC code. The uncoloured circles are logical qubits in unknown states, while red coloured circile are the logical qubits initialised in $\ket{\overline{+}}$. The blue double line connects two logical qubits being merged in $Z$ basis, while the red double line connects the logical qubits being merged in $X$ basis.}
  \end{figure}
  \begin{enumerate}[start=0]
    \item \textbf{initialisation of auxiliary logical qubits:} We initialise the $5$\textsuperscript{th} and $7$\textsuperscript{th} logical qubits in logical $\ket{+}$ states via measuring them in the $X$ basis and perform Pauli frame correction. For this step, the product connection code $\PPP^{\mathrm{ini}}$ is equipped with $X$-type parity check matrix\begin{equation}
        H_X'^{\mathrm{ini}}=\left(
          \begin{array}{cccc|cccc}
            0 &  &  &  &1  &0  &  & \\
            & 0 &  &  &0  &0  &  & \\
            &  & 0 &  &  &  &1  &0 \\
            &  &  & 0 &  &  & 0 & 0
        \end{array}\right).
      \end{equation}
    \item \textbf{Joint $Z$-Pauli measurements:}
      We first apply $\mathit{Aut}(2)$ to $\bswap{2}{3}$ and $\bswap{6}{7}$. The joint $Z$-Pauli measurements are then performed for $\bZ_5\otimes \bZ_6$ and $\bZ_7\otimes \bZ_8$ via parallel surgery whose product connection code $\PPP^{(1)}$ is equipped with $Z$-type parity check matrix\begin{equation}
        H_Z'^{(1)}=\left(
          \begin{array}{cccc|cccc}
            0 &  &  &  &1  &  &1  & \\
            & 0 &  &  &  &1  &  &1 \\
            &  & 0 &  &0  &  &0  & \\
            &  &  & 0 &  &0  &  & 0
        \end{array}\right).
      \end{equation}
    \item \textbf{Joint $X$-Pauli measurements:} We $\mathit{Aut}(2)$ again after we applied $\mathit{Aut}(1)$ to perform a sequence of logical swaps resulting swapping the logical qubits in the right sector with the logical qubits in the left sector. The joint $X$-Pauli measurements are then performed for $\bX_5\otimes \bX_2$ and $\bX_7\otimes \bX_4$ ($\bX_5\otimes \bX_4$ and $\bX_7\otimes \bX_2$) via parallel surgery whose product connection code $\PPP^{(2)}$ ($\PPP^{(2')}$) is equipped with $X$-type parity check matrix\begin{equation}
        H_X'^{(2)}=\left(
          \begin{array}{cccc|cccc}
            0 &  &1  &  &1  &0  &  & \\
            & 0 &  & 1 &0  &1  &  & \\
            &  & 0 &  &  &  &1  &0 \\
            &  &  & 0 &  &  &0  & 1
        \end{array}\right)\tor H_X'^{(2')}=\left(
          \begin{array}{cccc|cccc}
            0 &  &1  &  & 0 &1  &  & \\
            & 0 &  &1  &1  & 0 &  & \\
            &  & 0 &  &  &  &0  &1 \\
            &  &  & 0 &  &  &1 & 0
        \end{array}\right)\text{, respectively.}
      \end{equation}
    \item \textbf{Measure out the auxiliary logical qubits in $Z$ basis and complete the CNOT:} We apply $\mathit{Aut}(2)$ once again in order to utilise $\PPP^{\mathrm{fin}}$ to measure out the auxiliary logical qubits in the $Z$ basis, where the parity check matrix is given by\begin{equation}
        H_Z'^{\mathrm{fin}}=\left(
          \begin{array}{cccc|cccc}
            0 &1  &  &  &0  & &  & \\
            0 & 0 &  &  &  &0  &  & \\
            &  & 0 &1  &  &  &0  & \\
            &  & 0 & 0 &  &  &  & 0
        \end{array}\right).
      \end{equation}
      To complete the logical $\mathit{CNOT}$ gates, we apply Pauli frame correction based on the measurement outcomes obtained according to Figure~\ref{Figure: CNOT_surgery_framework}.
  \end{enumerate}
\end{example}
\begin{example}[Logical $\cnot{2}{6}//\cnot{4}{8}$]
  \label{Example: CNOT_26_48}
  Below we illustrate the procedure for performing logical
  \eqa{
    \cnot{2}{6} \text{ in parallel with } \cnot{4}{8},
  }in the $\code{24,8,3}$ CC code via parallel surgery.
  \begin{figure}[H]
    \centering
    \includegraphics[width=\linewidth]{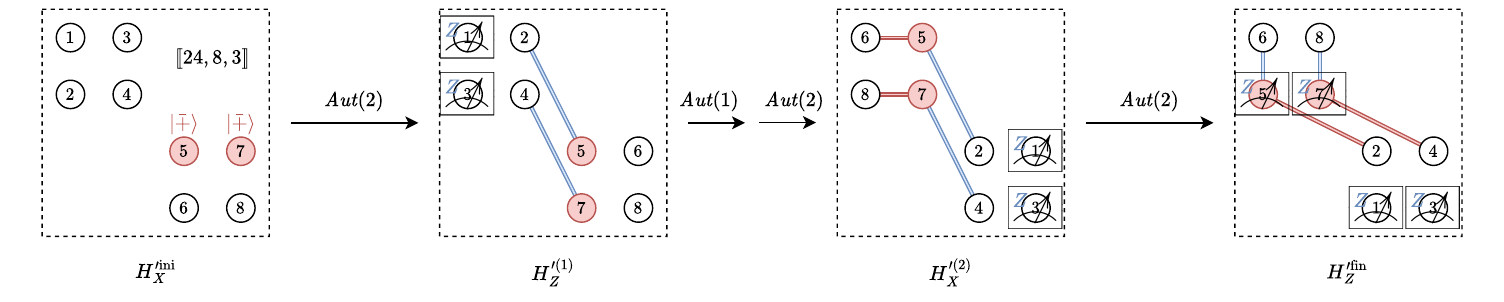}
    \caption{The procedure for performing logical $\cnot{2}{6} \text{ in parallel with } \cnot{4}{8}$.}
  \end{figure}
  For simplicity, we do not expand step by step as in Example~\ref{Example: logical CNOT 62_84 or 64_82}, while the relevant parity check matrices are
  \begin{equation}
    H_Z'^{(1)}=\left(
      \begin{array}{cccc|cccc}
        0 &  &  &  &1  &  &1  & \\
        & 0 &  &  &  &1  &  &1 \\
        &  & 0 &  &0  &  &0  & \\
        &  &  & 0 &  &0  &  & 0
    \end{array}\right)\tand H_X'^{(2)}=\left(
      \begin{array}{cccc|cccc}
        0 &  &1  &  &1  &0  &  & \\
        & 0 &  & 1 &0  &1  &  & \\
        &  & 0 &  &  &  &1  &0 \\
        &  &  & 0 &  &  &0  & 1
    \end{array}\right),
  \end{equation}
  and the $H_X'^{\mathrm{ini}}$ and $H_Z'^{\mathrm{fin}}$ are exactly the same as in Example~\ref{Example: logical CNOT 62_84 or 64_82}.

\end{example}

\begin{example}[Logical $\cnot{2}{8}//\cnot{4}{6}$]
  \label{Example: CNOT_28_46}
  Below we illustrate the procedure for performing logical
  \eqa{
    \cnot{2}{8} \text{ in parallel with } \cnot{4}{6},
  }in the $\code{24,8,3}$ CC code via parallel surgery. The relevant parity check matrices are
  \begin{equation}
    H_Z'^{(1)}=\left(
      \begin{array}{cccc|cccc}
        0 & 1 &  &  &0  &  &  & \\
        1 & 0 &  &  &  &0  &  & \\
        &  & 0 & 1 &1  &  &0  & \\
        &  & 1 & 0 &  &1  &  & 0
    \end{array}\right)\tand H_X'^{(2)}=\left(
      \begin{array}{cccc|cccc}
        1 &  &1  &  &0  &  &  & \\
        & 1 &  & 1 &  &0  &  & \\
        &  & 0 &  &  &  &0  & \\
        &  &  & 0 &  &  & & 0
    \end{array}\right),
  \end{equation}
  and the $H_X'^{\mathrm{ini}}$ and $H_Z'^{\mathrm{fin}}$
  are exactly the same as in Example~\ref{Example: logical CNOT 62_84 or 64_82}.
  \begin{figure}[H]
    \includegraphics[width=\linewidth]{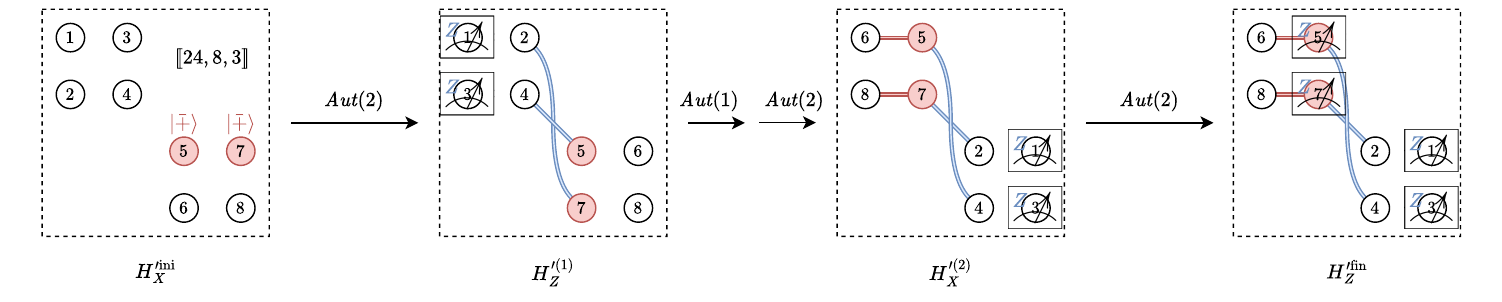}
    \caption{The procedure for performing logical $\cnot{2}{8} \text{ in parallel with } \cnot{4}{6}$.}
  \end{figure}
\end{example}

\begin{example}[Logical $\cnot{8}{6}//\cnot{2}{4}$ or logical $\cnot{6}{8}//\cnot{4}{2}$]
  In the $\code{24,8,3}$ CC code, the logical
  \eqa{
    \cnot{8}{6} &\text{ in parallel with } \cnot{2}{4} \tor\\
    \cnot{6}{8} &\text{ in parallel with } \cnot{4}{2},
  }
  can be performed via parallel surgery with the logical qubits $1,3,5,7$ treated as logical auxiliary qubits. We first apply $\mathit{Aut}(1)$, $\mathit{Aut}(2)$ and $\mathit{Aut}(3)$ sequentially to swap the logical qubits in the order shown below. One may be aware, after reordering these logical qubits, the logical $\mathit{CNOT}$ gates that we intend to perform can be performed with the exact same procedure for the logical $\mathit{CNOT}$ gates illustrated in Example~\ref{Example: logical CNOT 62_84 or 64_82}, Example~\ref{Example: CNOT_26_48} or Example~\ref{Example: CNOT_28_46}, with the $1$\textsuperscript{st} and $5$\textsuperscript{th} logical qubits initialised in $\ket{\overline{+}}$.
  \begin{figure}[H]
    \includegraphics[width=\linewidth]{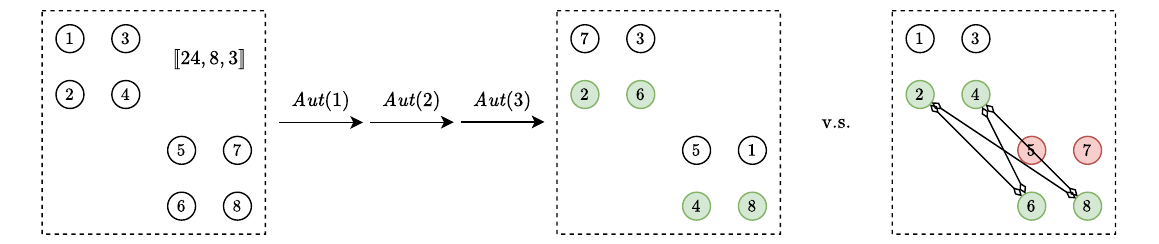}
    \caption{After reordering of logical qubits, we notice that the intended logical $\mathit{CNOT}$ gates can be performed with the exact same procedure as in Example~\ref{Example: logical CNOT 62_84 or 64_82}, Example~\ref{Example: CNOT_26_48} or Example~\ref{Example: CNOT_28_46}. Here circles coloured green are the logical qubits we intend to involve in some logical $\mathit{CNOT}$ gates and circles coloured in red are the auxiliary logical qubits initialised in $\ket{\overline{+}}$. The diamond arrows are the logical $\mathit{CNOT}$ gates we have generated in the previous examples, with diamond aiming at the target of the $\mathit{CNOT}$ gate.}
  \end{figure}
\end{example}

\begin{example}[Single logical $\mathit{CNOT}$ gates controlled by the $2$\textsuperscript{nd} logical qubit]
  \label{Example: CNOT_2x}
  Below we present the procedure for performing single logical $\mathit{CNOT}$ gates controlled by the $2$\textsuperscript{nd} logical qubit. All the initial stage product connection code are the same as in Example~\ref{Example: logical CNOT 62_84 or 64_82}.
  \begin{enumerate}
    \item $\cnot{2}{4}$: The relevant parity check matrices are
      \begin{equation}
        H_Z'^{(1)}=\left(
          \begin{array}{cccc|cccc}
            1 &  &  &  &0  &  &  & \\
            & 0 &  &  &  &0  &  & \\
            &  & 1 &  &1  &  &0  & \\
            &  &  & 0 &  &1  &  & 0
        \end{array}\right),\ H_X'^{(2)}=\left(
          \begin{array}{cccc|cccc}
            0 &  &1  &  &0  &1  &  & \\
            & 0 &  & 1 &0  &0  &  & \\
            &  & 0 &  &  &  &0  &1 \\
            &  &  & 0 &  &  &0  &0
        \end{array}\right)\tand H_Z'^{\mathrm{fin}}=\left(
          \begin{array}{cccc|cccc}
            0 & 1 &  &  &0  &  &  & \\
            0 & 0 &  &  &  &0  &  & \\
            &  & 0 & 1 &  &  &0  & \\
            &  & 0 & 0 &  &  &  & 0
        \end{array}\right).
      \end{equation}
      \begin{figure}[H]
        \centering
        \includegraphics[width=0.9\linewidth]{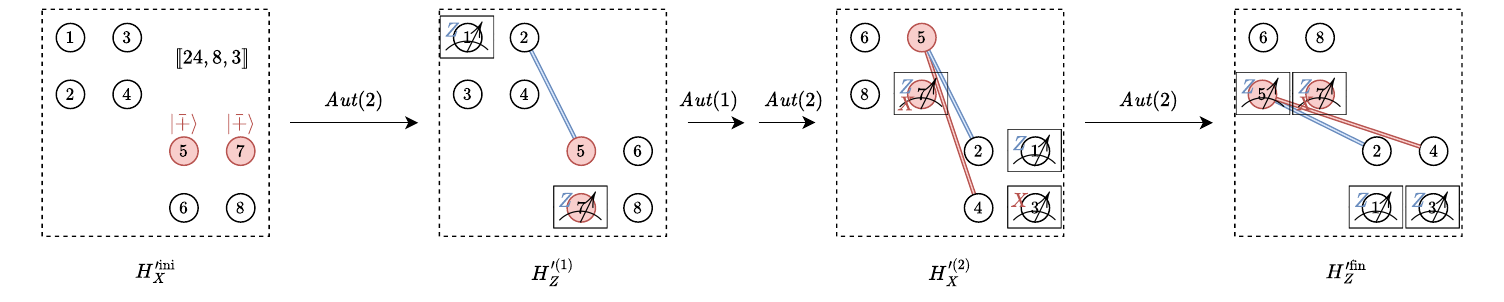}
      \end{figure}
    \item $\cnot{2}{6}$: The relevant parity check
      matrices are
      \begin{equation}
        H_Z'^{(1)}=\left(
          \begin{array}{cccc|cccc}
            1 &  &  &  &0  &  &  & \\
            & 0 &  &  &  &0  &  & \\
            &  & 1 &  &1  &  &0  & \\
            &  &  & 0 &  &1  &  & 0
        \end{array}\right),\ H_X'^{(2)}=\left(
          \begin{array}{cccc|cccc}
            1 &  &  &  &0  &0  &  & \\
            & 1 &  &  &0  &1  &  & \\
            &  & 0 &  &  &  &0  &0 \\
            &  &  & 0 &  &  &0  &1
        \end{array}\right)\tand H_Z'^{\mathrm{fin}}=\left(
          \begin{array}{cccc|cccc}
            1 & 0 &  &  &0  &  &  & \\
            0 & 0 &  &  &  &0  &  & \\
            &  & 1 & 0 &  &  &0  & \\
            &  & 0 & 0 &  &  &  & 0
        \end{array}\right).
      \end{equation}
      \begin{figure}[H]
        \centering
        \includegraphics[width=0.9\linewidth]{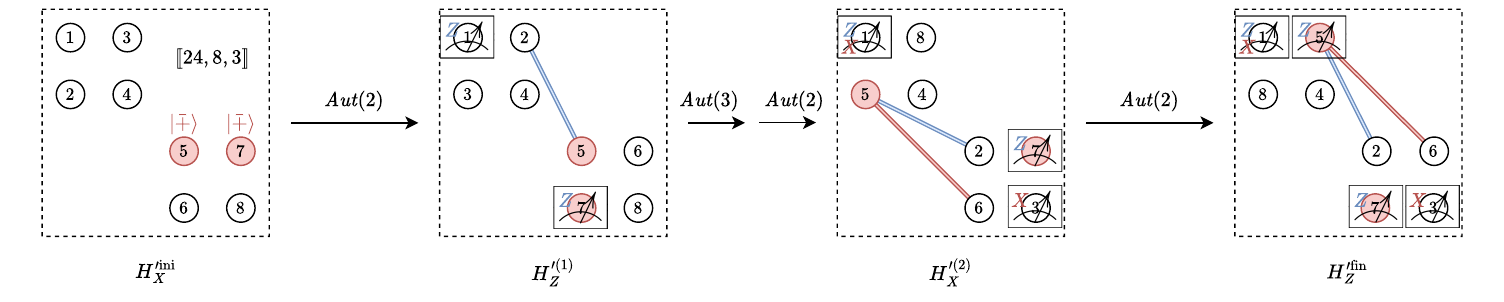}
      \end{figure}
    \item $\cnot{2}{8}$: The relevant parity check matrices are
      \begin{equation}
        H_Z'^{(1)}=\left(
          \begin{array}{cccc|cccc}
            0 & 0 &  &  &0  &  &  & \\
            1 & 0 &  &  &  &0  &  & \\
            &  & 0 & 0 &1  &  &1  & \\
            &  & 1 & 0 &  &1  &  & 1
        \end{array}\right),\ H_X'^{(2)}=\left(
          \begin{array}{cccc|cccc}
            1 &  &  &  &0  &0  &  & \\
            & 1 &  &  &0  &1  &  & \\
            &  & 0 &  &  &  &0  &0 \\
            &  &  & 0 &  &  &0  &1
        \end{array}\right)\tand H_Z'^{\mathrm{fin}}=\left(
          \begin{array}{cccc|cccc}
            0 &  &  &  &0  &  &  & \\
            & 0 &  &  &  &0  &  & \\
            &  & 0 &  &  &  &1  & \\
            &  &  & 0 &  &  &  & 1
        \end{array}\right).
      \end{equation}
      \begin{figure}[H]
        \centering
        \includegraphics[width=0.9\linewidth]{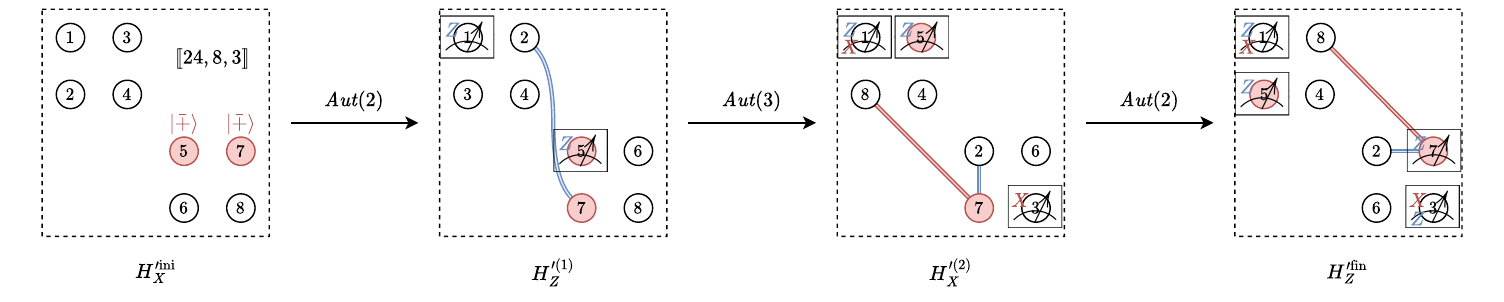}
      \end{figure}
  \end{enumerate}
\end{example}
\begin{example}[Single logical CNOTs controlled by the $8$\textsuperscript{th} logical qubit]
  We notice that all single logical $\mathit{CNOT}$ gates controlled by the $8$\textsuperscript{th} logical qubit can be performed exactly in a same way as the $2$\textsuperscript{nd} logical qubit controlled $\mathit{CNOT}$ gates in Example~\ref{Example: CNOT_2x} after applying the  logical SWAP gates introduced by $\mathit{Aut}(3)$:
  \begin{figure}[H]
    \includegraphics[width=\linewidth]{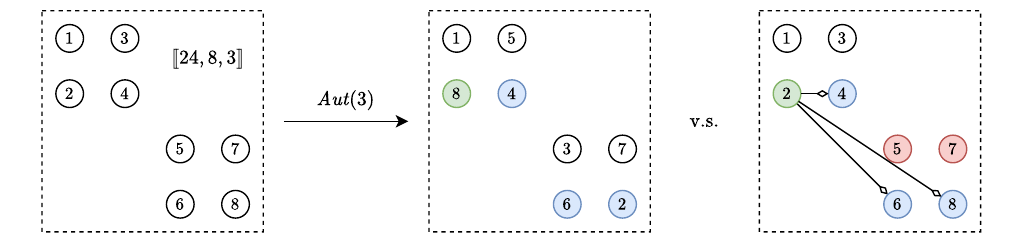}
    \caption{After $\mathit{Aut}(3)$, we notice that the intended logical $\mathit{CNOT}$ gates can be performed with the exact same procedure as in Example~\ref{Example: CNOT_2x}. Here circles coloured green are the logical controls, circles coloured in blue are intended logical targets and circles coloured in red are the auxiliary logical qubits initialised in $\ket{\overline{+}}$. The diamond arrows are the logical $\mathit{CNOT}$ gates we have generated in the previous example, with diamond aiming at the target of the $\mathit{CNOT}$ gate.}
  \end{figure}
\end{example}

\section{Full Clifford group over four logical qubits of the $\code{24,8,3}$ CC code}

In this appendix, we demonstrate the full Clifford group over four logical qubits of the $\code{24,8,3}$ CC code can be realised via parallel surgery, fold-transversal and automorphism gates. To be consistent with Appendix~\ref{Appendix: CNOTs_24_8_3}, we label the logical qubits $2,4,6,8$ as data logical qubits while treating the logical qubits $1,3,5,7$ as auxiliary logical qubits.

\subsection{Arbitrary $\bsgate_i\bsgate^\dagger_j$ gates}

We present here for the $\code{24,8,3}$ CC code, we are able to perform arbitrary $\bsgate_i\bsgate^\dagger_j$ gates where $i,j\in \LR{2,4,6,8}$. By initialising the auxiliary logical qubits $1,3,5,7$ in $\ket{\bar{0}}$ with parallel surgery, we are able to generate these fault-tolerant $\bsgate_i\bsgate^\dagger_j$ gates from the $\mathit{CZ\!-\!S}$ gate introduced in Appendix~\ref{Appendix: CZ-S}.

We begin with $\bsgate_4\bsgate^\dagger_8$ generated from the original $\bar{\sgate}_1\bar{\sgate}_4\bar{\sgate}^\dagger_5\bar{\sgate}^\dagger_8\bar{\cz}_{2,3}\bar{\cz}_{6,7}$. Together with automorphism logical $\mathit{SWAP}$ gates introduced in Appendix~\ref{Appendix: aut_SWAP}, we initialise the auxiliary logical qubits $1,3,5,7$ in $\ket{\bar{0}}$ via logical measurement on $Z$ basis and Pauli frame correction. We then perform the $\mathit{CZ\!-\!S}$ fold-transversal logical gate. However,
\begin{equation}
  \sgate \ket{0} = \sgate^\dagger\ket{0} = \ket{0},
\end{equation}
the $1$\textsuperscript{st} and $5$\textsuperscript{th} auxiliary logical qubits are unchanged. Further, as $3$\textsuperscript{rd} and $7$\textsuperscript{th} auxiliary logical qubits are in $\ket{\bar{0}}$ which leads a trivial action on the $2$\textsuperscript{nd} and $6$\textsuperscript{th} data logical qubits through $\bar{\cz}$, as
\begin{equation}
  \cz_{a,b}(\ket{\psi}_a\otimes \ket{0}_b) = \ket{\psi}_a\otimes \ket{0}_b.
\end{equation}
We are then left with logical action $\bsgate_4\bsgate^\dagger_8$ only.

\begin{figure}[H]
  \includegraphics[width=\linewidth]{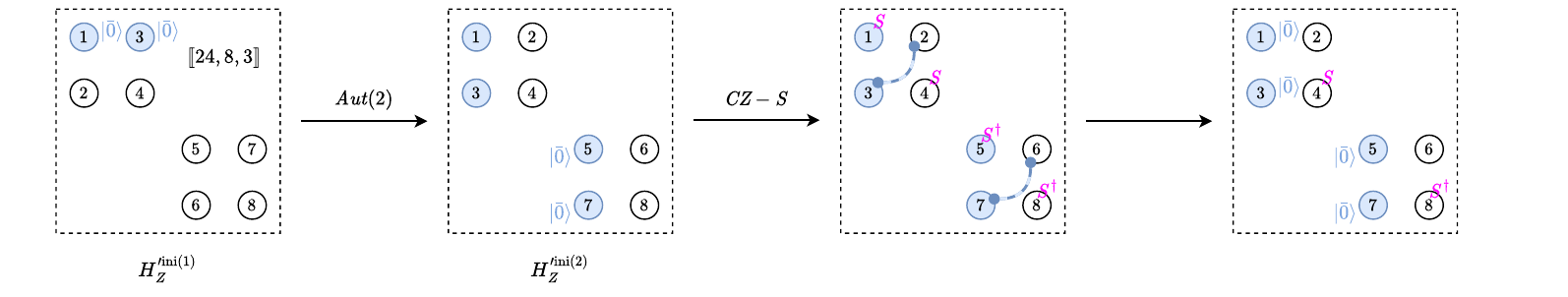}
  \caption{The procedure for performing logical $\bsgate_4\bsgate^\dagger_8$, while having the auxiliary logical qubits $1,3,5,7$ initialised in $\ket{\bar{0}}$.}
\end{figure}

By utilising logical $\mathit{SWAP}$ gates, we are able to reorder the logical qubits such that any pair $(i,j)$ for $i,j\in \LR{2,4,6,8}$ can be on the diagonal in the same configuration. Hence, we are able to generate arbitrary $\bsgate_i\bsgate^\dagger_j$ gates for this $\code{24,8,3}$ CC code. Importantly, as the $\cz$ gates are always acting on a pair of data and auxiliary logical qubits, any possible error propagation onto the physical supports of auxiliary logical qubits can be ignored. Hence, the arbitrary $\bsgate_i\bsgate^\dagger_j$ gates on four data logical qubits are fault-tolerant.

\subsection{Global $\bar{\hgate}^{\otimes 8}$ gate}

We are able to simplify the $\mathit{H\!-\!SWAP}$ fold-transversal logical gate introduced in Appendix~\ref{Appendix: h-swap} via utilising the logical $\mathit{SWAP}$ gates introduced in Appendix~\ref{Appendix: aut_SWAP}.
\begin{equation}
  \lr{\bswap{1}{6}
    \bswap{4}{7}
    \bswap{5}{8}
    \bswap{6}{7}
  \prod_{i=1}^8\bar{\hgate}_i}\aut(1)\aut(3)\aut(1)\aut(1) =  \prod_{i=1}^8\bar{\hgate}_i \equiv \bar{\hgate}^{\otimes 8}.
\end{equation}
The corresponding physical operation one the physical qubits is
\begin{equation}
  \bar{\hgate}^{\otimes 8}_{\mathrm{phys}}=\
  \swap{2}{3}
  \swap{4}{7}
  \swap{5}{9}
  \swap{6}{8}
  \swap{11}{12}
  \swap{13}{24}
  \swap{14}{23}
  \swap{15}{22}
  \swap{16}{18}
  \swap{19}{21}
  \prod_{i=1}^{24}\hgate_i\,.
\end{equation}

As we label the logical qubits $1,3,5,7$ as auxiliary logical qubits, this $\bar{\hgate}^{\otimes 8}$ automatically acts as $\bar{\hgate}^{\otimes 4}$ over the four data logical qubits.
Together with logical $\mathit{CNOT}$ gates between arbitrary pairs of data logical qubits as in Appendix~\ref{Appendix: CNOTs_24_8_3} and arbitrary $\sgate_i\sgate^\dagger_j$ gates, we are able to generate the full Clifford group over these four logical data qubits, following Theorem~\ref{Theorem: full clifford group}.

\end{document}